\spnewtheorem{definitionitalic}{Definition}{\bf}{\it}
\spnewtheorem*{nonumtheorem}{Theorem}{\bf}{\it}
\DeclareMathOperator{\Gr}{Gr}
\DeclareMathOperator{\tr}{tr}
\DeclareMathOperator{\rank}{rank}
\DeclareMathOperator{\Der}{Der}
\DeclareMathOperator{\Cl}{Cl}
\DeclareMathOperator{\DerEnd}{MDer}
\DeclareMathOperator{\CDerEnd}{CDer}
\DeclareMathOperator{\SPAN}{span}
\DeclareMathOperator{\RE}{Re}
\DeclareMathOperator{\Aut}{Aut}
\DeclareMathOperator{\Hom}{Hom}
\DeclareMathOperator{\obj}{obj}
\DeclareMathOperator{\End}{End}
\DeclareMathOperator{\curl}{curl}
\DeclareMathOperator{\Herm}{Herm}
\newcommand{\Hermitian}{\Herm}
\DeclareMathOperator{\so}{{\mathfrak so}}
\DeclareMathOperator{\su}{{\mathfrak su}}
\newcommand{\even}{\textnormal{even}}
\newcommand{\odd}{\textnormal{odd}}
\newcommand{\tsum}{\textstyle\sum}
\newcommand{\p}{\partial}
\renewcommand{\epsilon}{USE eps INSTEAD}
\newcommand{\C}{\mathbbm{C}}
\newcommand{\R}{\mathbbm{R}}
\newcommand{\Z}{\mathbbm{Z}}
\newcommand{\W}{\mathcal{Y}}
\renewcommand{\subset}{\subseteq}
\renewcommand{\supset}{\supseteq}
\newcommand{\secref}[1]{Section \ref{sec:#1}}
\newcommand{\lemmaref}[1]{Lemma \ref{lemma:#1}}
\newcommand{\theoremref}[1]{Theorem \ref{theorem:#1}}
\newcommand{\remarkref}[1]{Remark \ref{remark:#1}}
\newcommand{\defref}[1]{Definition \ref{def:#1}}
\newcommand{\httpref}[1]{\href{http://#1}{#1}}
\newcommand{\HQ}{\mathbbm{H}}
\newcommand{\catstyle}[1]{\textnormal{\textsf{{\fontseries{sbc}\selectfont #1}}}}
\newcommand{\functorstyle}[1]{\catstyle{#1}}
\newcommand{\gx}{\mathfrak{g}}
\newcommand{\mx}{\mathfrak{m}}
\newcommand{\nx}{\mathfrak{n}}
\newcommand{\spinor}{\textnormal{spinor}}
\newcommand{\I}{\mathcal{I}}
\newcommand{\LC}{\smash{\slashed{\mathcal{L}}}}
\newcommand{\U}{\mathcal{P}}
\newcommand{\UC}{\smash{\slashed{\mathcal{P}}}}
\newcommand{\IC}{\smash{\slashed{\mathcal{I}}}}
\newcommand{\EC}{\smash{\slashed{\mathcal{E}}}}
\newcommand{\K}{\mathcal{K}}
\renewcommand{\L}{\mathcal{L}}
\newcommand{\E}{\mathcal{E}}
\newcommand{\F}{\functorstyle{E}}
\newcommand{\RicciFlat}{\functorstyle{RicciFlat}}
\newcommand{\Forget}{\functorstyle{Forget}}
\newcommand{\Spinor}{\functorstyle{Spinor}}
\newcommand{\MC}{\catstyle{MC}}
\newcommand{\lsig}{{-}{+}{+}{+}}
\newcommand{\RR}{{C^\infty}}
\newcommand{\CR}{{C^{\infty}_{\C}}}
\newcommand{\ip}[2]{\langle #1,#2\rangle}
\newcommand{\La}{\catstyle{La}}
\newcommand{\gLa}{\catstyle{gLa}}
\newcommand{\dgLa}{\catstyle{dgLa}}
\newcommand{\gLaoid}{\catstyle{gLaoid}}
\newcommand{\ggr}{\catstyle{gaugeGrpd}}
\newcommand{\ggrs}{\ggr_\spinor}
\newcommand{\dgr}{\catstyle{diffGrpd}}
\newcommand{\SET}{\catstyle{Set}}
\newcommand{\pxg}{\mathfrak{p}}
\newcommand{\oV}{\overline{V}}
\newcommand{\ww}{{\wedge W}}
\newcommand{\wwv}{{\wedge W_V}}
\newcommand{\auxa}{\lambda}
\newcommand{\auxr}{\Delta}
\newcommand{\Mf}{\catstyle{F}}
\newcommand{\Mg}{\catstyle{G}}
\newcommand{\amcb}{\mathfrak{a}}
\newcommand{\yy}{\xi}
\newcommand{\gsg}{\pxg/s\pxg}
\newcommand{\inj}{\hookrightarrow}
\newcommand{\surj}{\twoheadrightarrow}
\newcommand{\citetwo}{\cite{rt2}}
\newcommand{\diffgr}{\catstyle{diffGrpd}}
\newcommand{\oldw}{K}
\begin{document}

\title{The graded Lie algebra of general relativity}


\author{Michael Reiterer \and
        Eugene Trubowitz
}


\institute{Michael Reiterer,
              ETH Zurich,
              \email{michael.reiterer@protonmail.com}             \\
            \rule{55pt}{0pt} \text{Present affiliation: The Hebrew University of Jerusalem}\\
           Eugene Trubowitz,
              ETH Zurich,
              \email{eugene.trubowitz@math.ethz.ch}
}

\date{}

\maketitle

\begin{abstract}

We construct a graded Lie algebra $\mathcal{E}$ in which the Maurer-Cartan equation is equivalent to the vacuum Einstein equations.
The gauge groupoid is the groupoid of rank 4 real vector bundles with a conformal inner product, over a 4-dimensional base manifold, and the graded Lie algebra construction is a functor out of this groupoid.
As usual, each Maurer-Cartan element in $\mathcal{E}^1$ yields a differential on $\mathcal{E}$. Its first homology is linearized gravity about that element. We introduce a gauge-fixing algorithm that generates, for each gauge object $G$, a contraction to a much smaller complex whose modules are the kernels of linear, symmetric hyperbolic partial differential operators.
This contraction opens the way to  
the application of homological algebra
to the analysis of the vacuum Einstein equations.
We view general relativity, at least
at the perturbative level, as an instance of `homological PDE' at the crossroads of algebra and analysis.

\end{abstract}

\newcommand{\seed}{x_0}
\newcommand{\unk}{x}

\vskip 5mm
{\footnotesize
{\bf Important note:} This paper considerably extends and simplifies paper \cite{glagr} of the same title.
Some readers may still want to consult \cite{glagr} since it
presents some things differently or in more detail.}
\vskip 3mm

\section{Introduction} \label{sec:intro}

The moduli space of solutions to the vacuum Einstein equations is naively the set of Ricci-flat metrics of signature ${-}{+}{+}{+}$ modulo
the action of the gauge groupoid $\diffgr$ of diffeomorphisms.
For simplicity, we restrict the discussion to manifolds diffeomorphic to $\R^4$. 
In this paper, we realize  this moduli space as the set of Maurer-Cartan (MC)
elements in a graded Lie algebra (gLa) $\E$ modulo
automorphisms induced by the gauge groupoid. Informally,
\begin{equation}\label{eq:bije}
  \frac{\textnormal{Ricci-flat metrics}}{\textnormal{diffeomorphisms}}
    \;=\;
    \frac{\textnormal{nondegenerate MC-elements in $\E^1$}}{\sim}
  \end{equation}
Approximately,
an element of $\E^1$ is a pair of
a conformal orthonormal frame and a connection-like object,
in the jargon of general relativity.

The bijection \eqref{eq:bije}
sets the moduli problem for the vacuum Einstein equations,
especially at the formal perturbative level,
into the homological algebraic framework
of differential graded Lie algebras (dgLa) and $L_\infty$-algebras.
To exploit this, we provide two tools.
An algebraic tool, namely a contraction of 
the complex associated to an MC-element to a much smaller complex.
And an analytical tool,
namely a way of formulating the MC-equation
as a symmetric hyperbolic PDE.
Both tools rely on a new gauge fixing algorithm,
formulated using Clifford modules.

The power of the present formalism is in its
homological nature combined with the gauge-fixing algorithm.
There are innumerable reformulations and reinterpretations
of the vacuum Einstein equations.
We distinguish ours with concrete applications in forthcoming papers
to, for instance, the so-called Belinskii-Khalatnikov-Lifshitz (BKL)
proposal for spatially inhomogeneous singular spacetimes \cite{rt2}.

This gLa is used in \cite{nr} to study tree scattering amplitudes
for general relativity about Minkowski spacetime,
using $L_\infty$ homotopy transfer. 
\vskip 2mm

Let $\ggr$ be the category in which an object is a real, rank $4$ vector bundle
with a conformal inner product of signature ${-}{+}{+}{+}$,
over a manifold $M$ diffeomorphic to $\R^4$.
Let $W$ be the free $\RR$-module of smooth sections.
Here, $\RR$ are the real smooth functions on $M$. A morphism in $\ggr$ is a vector bundle isomorphism that preserves the conformal
inner product\footnote{%
  The category can be refined by introducing
orientations and or time orientations.}.
 The gauge groupoid $\ggr$ is deliberately bigger than the groupoid $\diffgr$ that underlies the metric formalism.
The construction of $\E$ is a functor $\F$ into the category of real, graded Lie algebras,
\[
  \F\;:\; \ggr \to \gLa
\]
More precisely, $\E$ is also a graded Lie algebroid over $\ww$.

A module derivation of $W$ is a pair of maps $\RR\rightarrow \RR$ and $W\rightarrow W$ that satisfy the Leibniz rule for both of the multiplications
$\RR\times \RR\rightarrow \RR$ and $\RR\times W\rightarrow W$. 
The module derivations
$\DerEnd_\RR(W)$ constitute a Lie algebroid over $\RR$.
Let
$\CDerEnd(W)$ be the sub Lie algebroid of module derivations
that preserve the conformal inner product.
See Definition \ref{definition:AgLaiod} for Lie algebroids.
Set  
$\mathcal{L}
=\ww\otimes \CDerEnd(W)$. All tensor products are over $\RR$.
The tensor product $\mathcal{L}$ is naturally a graded
Lie algebroid over the graded commutative algebra $\ww$ with the bracket
\[
  [\omega \delta,\omega'\delta'] = \omega\omega'[\delta,\delta'] + (\omega\auxa(\delta)(\omega'))\delta' - (\auxa(\delta')(\omega)\omega')\delta
\]
where $\auxa : \CDerEnd(W) \to \Der^0(\wedge W)$
is the canonical $\RR$-Lie algebroid morphism.
The anchor map is of type $\mathcal{L} \to \Der(\ww)$.

Let $\MC : \gLa \to \SET$ be the Maurer-Cartan functor.
\begin{nonumtheorem}[Vacuum Einstein equations as Maurer-Carten equations -- informal]
  There is a $\ww$-graded Lie algebroid ideal
  $\I \subset \L$, supported in degrees $2$ and higher,
  such that the quotient $\E = \L/\I$
  realizes the bijection \eqref{eq:bije}.
  Concretely,
  \[
    \MC(\E) = \{\unk \in \E^1 \mid [\unk,\unk]=0\}
  \]
  and $\sim$ is equivalence under automorphisms of $\MC(\E)$ 
  that are in the image of $\MC \circ \F$.
  See \secref{ricci} for Ricci-flatness.
\end{nonumtheorem}

This theorem 
places us in a standard homological algebraic framework.

The nondegeneracy referred to in \eqref{eq:bije}
is detailed in Definition  \ref{def:angh}.
Degenerate elements
are an important feature.
Since degenerate MC-elements can be easier to analyze,
it is natural to attempt to perturb degenerate
elements into nondegenerate ones, a strategy that we pursue elsewhere {\citetwo}
to study the BKL proposal.

Our construction is based on $\ww$ and this is essential.
If one replaces this by the algebra
of differential forms, the algebraic structure falls apart.
Differential forms are used,
for instance, to formulate the Yang-Mills equations.

Associated to every $\seed \in \MC(\E)$ is the
moduli space of formal perturbations
\begin{equation}\label{eq:ms2}
    \frac{\{\seed + \unk \in \MC(\E[[s]]) 
    \mid \unk \in s \E^1[[s]]\}}{\exp(s\E^0[[s]])}
  \end{equation}
with $\E[[s]]$ the gLa of formal power series in the symbol $s$
with coefficients in $\E$,
and the denominator is a group 
by the Baker-Campbell-Hausdorff formula.
One can check that the formal moduli space
\eqref{eq:ms2}
is a formal version of \eqref{eq:bije},
see also \remarkref{liefunctor}.

As usual, if $\unk \in \MC(\E)$ then the differential $d = [\unk,-]$
turns $\E$ into a dgLa.
The first homology $H^1(d)$ is  interpreted as linearized
gravity about $\unk$.
The second homology $H^2(d)$ is the obstruction space
to deformations of $\unk$
within $\MC(\E)$.
If the obstruction space vanishes, then 
the formal moduli space \eqref{eq:ms2} admits a nonlinear
parametrization by $H^1(d)[[s]]$, as reviewed in \secref{mcpt}.

Gauges are an integral part of this formalism.
By a gauge we mean a comprehensive homological object.
Here we discuss properties that all gauges will have,
a construction is in \secref{gauges1}.
Technically it is a graded $\RR$-submodule  $\E_G \subset \E$
together with certain bilinear forms,
see Definition \ref{def:gauge} for details.
An element of $\E^1$ is gauged if and only if it belongs to $\E_G^1$. 
For every future timelike $w \in W$ there is a splitting
\[
  \E = \E_G \oplus w\E_G
\]
where multiplication by
$w$ is a map of degree one, injective as a map $\E_G \to \E$.
So, $\E_G$ is a half-ranked direct summand.
Our construction of gauges
uses $\Z_2$-graded filtered Clifford modules,
and a homological unitarity trick,
based on averaging over the finite Clifford group.
We also give an algorithm that generates
such a gauge $\E_G$
for every choice of a Hermitian inner product
on a rank $18$ complex vector bundle.
 
\newcommand{\errterm}{u}
We convey the analytical content of a homological gauge with informal
statements about three systems of PDE.
These are local statements. Fix a nondegenerate $\seed \in \E^1$,
so that $\seed + \unk$ is still nondegenerate for all small
$\unk \in \E^1$. Then,
\begin{itemize}
  \item[(0)]
    For each small $\unk \in \E^1$
    the equation
    $\F(\phi)(\seed + \unk) = \seed \bmod \E_G^1$ 
    is quasilinear symmetric hyperbolic
    for an unknown automorphism $\phi \approx \mathbbm{1}$ in $\ggr$.
  \item[(1)]
    The equation
     $[\seed+\unk,\seed+\unk]=0 \bmod \E_G^2$ 
    is a quadratically nonlinear, quasilinear symmetric hyperbolic system for an unknown small $\unk \in \E_G^1$.
  \item[(2)]
    For each solution $\unk$ to (1), the equation 
    $[\seed+\unk,\errterm]=0 \bmod \E_G^3$ 
    is a linear symmetric hyperbolic system for an unknown $\errterm \in \E_G^2$.
\end{itemize}

To illustrate the utility of these systems, imagine that one attempts to
locally construct $\seed + \unk \in \MC(\E)$ with small unknown $\unk \in \E^1$.
Then,
(0) justifies restricting to $\unk \in \E^1_G$;
(1) tell us that  at least the MC-equation  modulo $\E_G^2$
is hyperbolic and can be locally solved by standard methods; and
(2) is a tool to show that the
remainder $\errterm = [\seed+\unk,\seed+\unk] \in \E_G^2$,
which solves $[\seed+\unk,\errterm]=0$
by a Jacobi identity, vanishes\footnote{%
Use the fact that
`a solution to a homogeneous hyperbolic equation that is zero initially
 is identically zero'.
 This reduces showing that $\errterm$ vanishes 
 to showing that it vanishes initially, the `constraint equations'.}.
In \cite{focus} we followed this route,
unaware of the homological framework,
to demonstrate the dynamical formation of trapped spheres
in solutions to the vacuum Einstein equations,
simplifying earlier work by Christodoulou.

The systems in (0) and (1) are nonlinear.
The analogous linear statements are simpler, and can be made globally.
\begin{nonumtheorem}[Contraction and quasiisomorphism]
Suppose $\unk \in \MC(\E)$ is globally hyperbolic, see \defref{angh}.
Define $d = [\unk,-]$ and the composition
\[
  \oldw:\quad
   \E_G \inj \E
  \smash{\;\xrightarrow{d}\;}
    \E \surj \E/\E_G
  \]
    Then $\oldw$ is a linear symmetric hyperbolic operator,
    and there is a contraction,
    hence quasiisomorphism, of complexes from
$(\E,d)$ down to $(\ker \oldw,d|_{\ker \oldw})$.
\end{nonumtheorem}
Informally, this complex
is much smaller
because it lives in three dimensions,
$\ker \oldw$ being the space of homogeneous solutions
to a linear symmetric hyperbolic system.
The contraction is a tool for calculating the homology of $d$.
At the nonlinear level,
when also applied to the bracket,
the contraction yields an $L_\infty$ algebra,
that is, 
higher many-to-one brackets on $\ker \oldw$,
most concretely using the homological perturbation lemma
also known as `homotopy transfer' \cite{b,hs}.

We have not emphasized spinors in this paper,
to avoid an extra layer of notation,
but they can be extremely useful. \secref{spinors}
is included as  a succinct discussion of the spinor functor
$\Spinor : \ggrs \to \ggr$.

We have consciously kept this paper minimalist.
Some constructions and statements generalize to other dimensions; other signatures;
topologically nontrivial base manifolds;
more general commutative rings as base rings.

\section{Related work and acknowledgments}

In general relativity,
our work is related to the Newman-Penrose orthonormal frame formalism \cite{NP}
which puts the vacuum Einstein equations in quadratically nonlinear form,
though not in Maurer-Cartan form;
and examples of gauge-fixing for the vacuum Einstein equations to
symmetric hyperbolic systems by H.~Friedrich \cite{FrH}.
The concept of symmetric hyperbolicity is
due to K.O.~Friedrichs \cite{FrKO,taylor}.
The novelty of our work is in its homological
nature; functorial constructions; and a comprehensive
concept for gauges.
In algebra, our work is  
related to deformation and obstruction theory, see
Gerstenhaber \cite{NR,Gerstenhaber}.
Applications
are bound to yield $L_\infty$-algebras and
homotopy transfer \cite{b,hs,k}.
Our own precursors include
\cite{diamond,twelve}
and parts of \cite{focus}.

We thank J.~Stasheff for emphasizing the
role of obstructions in formal perturbation theory,
and for pointing out that our contraction
can be used to run $L_\infty$ homotopy transfer.
We thank T.~Willwacher for conceptual clarifications,
and for pointing us to Lie algebroids and base change techniques,
which we have now fully adopted.


\section{Preliminaries and conventions}

It is implicitly assumed, throughout this paper, that an object of $\ggr$ is given.
The notation $M$, $\RR$, $W$ refers to such an object.
So $M \simeq \R^4$ is a manifold,
$\RR$ is the algebra of real smooth functions on it,
and $W$ is a free $\RR$-module of rank 4 with
fiberwise a conformal inner product of signature ${-}{+}{+}{+}$.

Tensor products are over $\RR$ whenever this makes sense, 
\[
    \otimes = \otimes_\RR
\]
Otherwise, tensor products are over $\R$. So $\ww$ is constructed from the tensor algebra over $\RR$.
The tensor product of elements is often denoted by juxtaposition.


\subsection{Graded Lie algebroids and representations}

The following definitions use base field $\R$ for simplicity, similar for $\C$.
A grading is a $\Z$-grading; it induces a $\Z_2$-grading.
Ungraded means concentrated in degree zero.
If $x,y$ are homogeneous then in the notation $(-1)^{xy}$
the exponent is the product of the degrees.
By $\Hom^k$ we mean morphisms that raise the degree by $k \in \Z$.

\begin{definition}[dgLa and $\MC$-functor]
  A real differential graded Lie algebra (dgLa)
  is a real graded vector space $\gx$
  with a $d \in \End^1(\gx)$
  and a $[-,-] \in \Hom^0(\gx \otimes \gx,\gx)$
  that satisfy $[x,y] = -(-1)^{xy}[y,x]$ and $d^2 = 0$ and
   $d[x,y] = [dx,y] + (-1)^x[x,dy]$
   and
   \[
     [x,[y,z]] + (-1)^{x(y+z)} [y,[z,x]] + (-1)^{z(x+y)}[z,[x,y]] = 0
   \]
  for all homogeneous $x,y,z \in \gx$.
    A real graded Lie algebra (gLa) is a dgLa with $d=0$.
    A real Lie algebra (La) is an ungraded gLa.
    Let $\La \inj \gLa \inj \dgLa$ be the corresponding categories.
    The Maurer-Cartan functor $\MC : \dgLa \to \SET$ is, on objects,
    \[
      \MC(\gx) =
        \{
            x \in \gx^1 \mid
            dx + \tfrac{1}{2}[x,x] = 0
        \}
      \]
\end{definition}
For every graded real vector space $X$,
the graded vector space $\End(X)$ is a gLa using the graded commutator.
These are the prototypical examples,
so a representation of a gLa $\gx$ is by definition a gLa morphism
$\gx \to \End(X)$ for some $X$.


A graded Lie algebroid has more structure than a gLa.
To define algebroids, let
$A$ be a unital associative graded commutative $\R$-algebra.
Graded commutative
means $ab = (-1)^{ab} ba$ for all homogeneous $a,b \in A$.
A $\delta \in \End_\R(A)$ is called a derivation
if the Leibniz rule
$\delta(ab) = \delta(a)b + (-1)^{a\delta} a \delta(b)$
holds for homogeneous elements.
The derivations $\Der(A)$ are a graded $A$-module and gLa using the graded commutator.
For a graded $A$-module, scalar multiplication must respect the grading.
We will apply this with $A = \RR$ and more interestingly $A = \ww$.
\begin{definition}[$A$-gLaoid] \label{definition:AgLaiod}
  Suppose $A$ is a unital associative graded commutative $\R$-algebra.
An $A$-graded Lie algebroid ($A$-gLaoid) is a triple $(\gx,[-,-],\rho)$
with $\gx$ a graded $A$-module; $(\gx,[-,-])$ a gLa;
the `anchor' $\rho : \gx \to \Der(A)$ is $A$-linear
and a gLa morphism;
and $[x,ay] = \rho(x)(a)y + (-1)^{ax} a[x,y]$
for all homogeneous $x,y \in \gx$ and $a \in A$.
An $A$-gLaoid morphism must be a gLa morphism, an $A$-module morphism,
and intertwine anchors.
An $A$-gLaoid ideal must be a gLa ideal, an $A$-submodule, and be
contained in the kernel of the anchor.
There is a forgetful functor $\gLaoid_A \to \gLa$.
\end{definition}
The quotient by an ideal is an $A$-gLaoid.
Note that $\Der(A)$ is an $A$-gLaoid.
\begin{lemma}[$A$-module derivations]
  With $A$ as above and $X$ a graded $A$-module,
  denote by $\DerEnd_A(X) \subset \Der(A) \oplus \End_{\R}(X)$ the elements
  $\delta = \delta_A \oplus \delta_X$ for which
  $\delta_X(ax) = \delta_A(a)x + (-1)^{a\delta} a\delta_X(x)$ for 
  all homogeneous elements.
  It is canonically an $A$-gLaoid,
  with bracket the graded commutator
  and anchor $\delta \mapsto \delta_A$.
\end{lemma}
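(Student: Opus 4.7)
The strategy is to view $\DerEnd_A(X)$ as a subset of the ambient $\End_\R(A) \oplus \End_\R(X)$, on which the $A$-action and the graded commutator act componentwise, so that the gLa axioms (graded antisymmetry, graded Jacobi) and the $A$-module axioms are inherited automatically. What requires verification is closure of $\DerEnd_A(X)$ under the $A$-action, closure under the graded commutator, and the algebroid compatibility axiom $[\delta, a\delta'] = \rho(\delta)(a)\delta' + (-1)^{a\delta}a[\delta,\delta']$.

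First I would verify closure under the $A$-action. For $\delta = \delta_A \oplus \delta_X$ in $\DerEnd_A(X)$ and $a \in A$, set $a\delta = (a\delta_A) \oplus (a\delta_X)$. Unwrapping $(a\delta_X)(bx)$ using the Leibniz identity for $\delta$ yields $a\delta_A(b)x + (-1)^{b\delta}ab\delta_X(x)$; comparing with the Leibniz expression required for $a\delta$, whose degree is $a+\delta$, the two first terms agree, and the two second terms differ only by the sign needed to reorder $ab$ into $ba$, which is supplied by graded commutativity of $A$. The analogous check that $a\delta_A \in \Der(A)$ is similar.

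Second, closure under the graded commutator: for $\delta,\delta' \in \DerEnd_A(X)$, a standard two-line telescope using Leibniz for $\delta$ and then for $\delta'$, and subtracting the reverse, shows that $[\delta_X,\delta'_X]$ is a module derivation over $[\delta_A,\delta'_A]$, with all cross terms cancelling. This is the familiar fact that the graded commutator of two derivations is a derivation.

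Finally, the algebroid compatibility splits componentwise. The $A$-component reduces to the Leibniz identity for $\delta_A$ applied to $a\delta'_A$, i.e., to the statement that $\Der(A)$ is itself an $A$-gLaoid. The $X$-component is the analogous calculation for $\delta_X$ applied to $a\delta'_X$, again peeling off the $\rho(\delta)(a)\delta'$ term by Leibniz. The anchor $\rho(\delta) = \delta_A$ is the first projection, hence manifestly $A$-linear and a gLa morphism since both brackets are the graded commutator. The main obstacle throughout is sign-bookkeeping in the first and last steps; every sign ultimately descends from graded commutativity of $A$ and from the degree shift $\delta \mapsto a + \delta$ induced by $A$-multiplication, so I would fix explicit degree notation at the outset and track signs term by term.
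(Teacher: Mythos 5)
The paper omits the proof of this lemma, so there is no official argument to compare against; your proof is a complete and correct verification along the standard lines one would expect. The embedding into the direct sum $\End_\R(A)\oplus\End_\R(X)$ with componentwise $A$-action (by post-composition) and componentwise graded commutator is exactly the right ambient structure, your closure calculations and sign bookkeeping are sound (the key cancellations in the commutator computation and the reordering $ba = (-1)^{ab}ab$ in the $A$-action check both go through), and the algebroid compatibility and anchor checks are correctly reduced to the defining Leibniz identities.
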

\begin{proof}
  Omitted.
\qed\end{proof}
  When clear from context,
  we write $\delta$ for either $\delta_A$ or $\delta_X$.
  If $X$ is a faithful module, which it always is in our applications,
  then $\delta_X$ determines $\delta_A$.
\begin{definition}[$A$-gLaoid representation]
A representation of an $A$-gLaoid $\gx$ is an $A$-gLaoid morphism
$\gx \to \DerEnd_A(X)$ for some graded $A$-module $X$.
The trivial representation is the anchor map $\gx \to \Der(A)$ itself.
\end{definition}
The quotient by a $\gx$-invariant $A$-submodule
of $X$ is a new $A$-gLaoid representation.
  The adjoint representation
  $\gx \to \DerEnd_A(\gx)$ is not in general $A$-linear,
  so not an $A$-gLaoid representation.
We now define base change along a morphism $A \inj B$.
A simple example is $\R \inj \RR$,
but our main application is base change along $\RR \inj \ww$.
\begin{lemma}[Base change] \label{lemma:bcl}
  Suppose $A \inj B$ is an injective
  morphism of unital associative graded commutative $\R$-algebras,
  with $A$ ungraded.
  If $X$ is an $A$-module 
  then $X' = B \otimes_A X$ is a graded $B$-module.
  If $\gx$ is an $A$-Laoid
  and $\auxa: \gx \to \Der^0(B)$ is an $A$-Laoid morphism
  then $\gx' = B \otimes_A \gx$ is an $B$-gLaoid with bracket
  \begin{equation}\label{eq:brack}
    [by,b'y'] = bb'[y,y'] + (b\auxa(y)(b'))y' - (\auxa(y')(b)b')y
  \end{equation}
  and anchor $\gx' \to \Der(B)$, $by \mapsto (b' \mapsto b \auxa(y)(b'))$.
  In this situation, given an $A$-Laoid representation
  $\auxr : \gx \to \DerEnd_A(X)$,
  then a $B$-gLaoid representation is given by
  \[
    \gx' \to \DerEnd_B(X'),
    \quad
    by \mapsto (b'x \mapsto b\auxa(y)(b') y + bb' \auxr(y)(x))
  \]
\end{lemma}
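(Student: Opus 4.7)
The plan is to verify all assertions by direct computation on simple tensors and then extend by $\R$-linearity. Since $A$ sits in degree zero, both $X' = B \otimes_A X$ and $\gx' = B \otimes_A \gx$ inherit their grading from $B$ alone, and their left $B$-module structures are immediate. The actual content is therefore the well-definedness of the bracket \eqref{eq:brack}, of the anchor, and of the representation formula, together with the $B$-gLaoid axioms.

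I would first check that \eqref{eq:brack} descends from $(B \otimes_\R \gx) \otimes_\R (B \otimes_\R \gx)$ down to $\gx' \otimes_\R \gx'$ by verifying $A$-balancing in each slot. The ingredients are: $A$-linearity of $\auxa$; the fact that $\auxa(y)$ is an $\R$-derivation of $B$, so $\auxa(y)(ab') = a\auxa(y)(b') + \auxa(y)(a)b'$ for $a \in A$; compatibility of $\auxa$ with the original anchor $\rho_0 : \gx \to \Der(A)$, so $\auxa(y)(a) = \rho_0(y)(a)$ when $a \in A$; and the Leibniz rule $[y,ay'] = \rho_0(y)(a)y' + a[y,y']$ on $\gx$. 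The extra terms introduced when commuting $a$ past the tensor symbol cancel against the $\rho_0$-term coming from this last identity.

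Next, graded antisymmetry of \eqref{eq:brack} is visible by inspection, using graded commutativity in $B$, ungraded antisymmetry of the bracket on $\gx$, and the role-swap of the two $\auxa$-correction terms. The algebroid Leibniz rule and $B$-linearity of the proposed anchor $by \mapsto b\auxa(y)$ are read off directly from \eqref{eq:brack}. That this anchor is a graded gLa morphism reduces on simple tensors to $[\auxa(y),\auxa(y')] = \auxa([y,y'])$, which holds since $\auxa$ is a Lie morphism. The main technical step is the graded Jacobi identity: I would expand $[b_1y_1,[b_2y_2,b_3y_3]]$ and its cyclic permutations on simple tensors, then organize the output by the number of $\auxa$-factors present. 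The $\auxa$-free part is $b_1b_2b_3$ times the ungraded Jacobi identity on $\gx$; the single-$\auxa$ part collapses via the Leibniz rule for the bracket on $\gx$ together with $A$-linearity of $\rho_0$; and the double-$\auxa$ part cancels by the Lie-morphism property of $\auxa$ combined with graded commutativity in $B$ and the Koszul sign convention. This is where most of the writing goes, and it is the main obstacle in the sense of bookkeeping; no new structural input is needed beyond those just listed.

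Finally, for the representation I would verify, by exactly the same $A$-balancing argument as for the bracket, that the stated formula descends to a well-defined $\R$-linear map $\gx' \to \End_\R(X')$. That its image lands in $\DerEnd_B(X')$ with anchor $b\auxa(y)$ is immediate from the product-rule form of the formula, and $B$-linearity in $by$ is again built in. The morphism property, namely that the graded commutator of two such endomorphisms equals the image of \eqref{eq:brack}, decomposes by which of $\auxa$ or $\auxr$ acts on the outer factor: the $\auxr\auxr$-piece reproduces $\auxr([y,y'])$ by the hypothesis that $\auxr$ is an $A$-Laoid representation, the $\auxa\auxa$-piece reproduces the anchor contribution, and the two mixed pieces collect into the $\auxa$-corrections in \eqref{eq:brack}, in direct parallel with the Jacobi bookkeeping of the previous paragraph.
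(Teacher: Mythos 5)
The paper's own proof is omitted, with only a remark that one must check well-definedness over the tensor product and the intertwining of anchors; your plan carries out exactly those checks and correctly identifies the ingredients that make the $A$-balancing work, namely $A$-linearity of $\auxa$, the derivation property of $\auxa(y)$ on $B$, the Leibniz rule on $\gx$, and the anchor compatibility $\auxa(y)|_A = \rho_0(y)$ (which is what being an $A$-Laoid morphism into $\Der^0(B)$ supplies). The organization of the Jacobi verification by number of $\auxa$-factors and of the representation verification by which of $\auxa,\auxr$ acts is sound, so this is a correct and essentially canonical route to the result.
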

\begin{proof}
  Omitted. Among other things,
  one has to check that the various assignments
  are well-defined, so consistent with the tensor product over $A$;
  and that the representation intertwines the (omitted) anchors.
\qed\end{proof}
Beware that while $X'$ can be viewed as just a graded $A$-module,
in general one cannot view $\gx'$ as just an $A$-gLaoid
because there is no induced anchor $\gx' \to \Der(A)$.


\subsection{Conformal structure of $W$}

As always, $W$ is the module  coming with an object of $\ggr$.

\begin{definition}[Conformally orthonormal module derivations]
  A conformal inner product is an equivalence class of inner products
  $\ip{-}{-} \in \Hom_\RR( W \otimes W, \RR)$,
  two being equivalent iff multiples by a positive function.
  Let $\CDerEnd(W) \subset \DerEnd_\RR(W)$ be the sub $\RR$-Laoid of
  all $\delta$ such that for every representative
  $\ip{-}{-}$
   there is an $f \in \RR$ with
  $\delta(\ip{x}{y})
  = \ip{\delta(x)}{y} + \ip{x}{\delta(y)}
  + f \ip{x}{y}$ for all $x,y \in W$.
\end{definition}

By a basis for $W$ we always mean a conformally orthonormal basis,
as in the next definition.
The terms basis and frame are synonymous in this context.
\begin{definition}[Conformally orthonormal basis for $W$] \label{def:uzt}
These are elements $\theta_0,\ldots,\theta_3$
that give a direct sum decomposition
$W = \RR \theta_0 \oplus \RR \theta_1 \oplus \RR \theta_2 \oplus \RR \theta_3$
such that a representative $\ip{-}{-}$
of the conformal inner product is given with $i,j=1,2,3$ by
\[
  \ip{\theta_0}{\theta_0} = -1
  \qquad
  \ip{\theta_0}{\theta_i} = 0
  \qquad
  \ip{\theta_i}{\theta_j} = \delta_{ij}
\]
Such a basis determines:
\begin{itemize}
  \item
    Elements
$\sigma_0,\sigma_1,\sigma_2,\sigma_3,\sigma_{23},\sigma_{31},\sigma_{12}
\in \End_{\RR}(W) \cap \CDerEnd(W)$ by
    $\sigma_i(\theta_0) = \theta_i$
    and $\sigma_i(\theta_j) = \delta_{ij} \theta_0$
    and $\sigma_{ij}(\theta_0) = 0$
    and $\sigma_{ij}(\theta_k) = \delta_{jk} \theta_i - \delta_{ik} \theta_j$
    and $\sigma_0(\theta_0) = \theta_0$
    and $\sigma_0(\theta_i) = \theta_i$ for all $i,j,k = 1,2,3$.
    Note that $\sigma_{ij} = -\sigma_{ji}$.
  \item 
    A map
    $\Der(\RR) \inj \CDerEnd(W)$
    producing elements that annihilate $\theta_0, \theta_1, \theta_2, \theta_3$.
    This map is given by 
    $X \mapsto (f_0 \theta_0 + \ldots + f_3 \theta_3
                \mapsto X(f_0) \theta_0 + \ldots + X(f_3)\theta_3)$.
\end{itemize}
So as $\RR$-modules we obtain
$\CDerEnd(W)
      \simeq
      \Der(\RR) \oplus \RR \sigma_0 \oplus \ldots \oplus \RR \sigma_{12}$.
  \end{definition}
    In the vernacular of relativity,
    $\sigma_0$ generates dilations;
    $\sigma_1,\sigma_2,\sigma_3$ generate boosts;
    $\sigma_{23},\sigma_{31},\sigma_{12}$ generate rotations.

\subsection{Isotypic decomposition under $\so(W)$}
Define $\so(W) = \{ \delta \in
  \End_{\RR}(W) \cap \CDerEnd(W) \mid \tr \delta = 0\}$,
    a $\RR$-Lie algebra with each fiber
    non-canonically isomorphic to $\so(1,3)$,
    and a subalgebra and ideal of $\CDerEnd(W)$.
    It consists of only vertical elements,
    meaning elements that are $\RR$-linear.
    \begin{lemma} \label{lemma:sowi}
      Suppose $X$ is a finite free $\RR$-module.
      Then every $\RR$-Laoid
      representation $\CDerEnd(W) \to \DerEnd_\RR(X)$
      has a canonical $\so(W)$-isotypic decomposition.
      Each $\so(W)$-isotypic component is invariant under
      $\CDerEnd(W)$.
    \end{lemma}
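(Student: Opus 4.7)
The plan is to identify the $\so(W)$-isotypic components of $X$ with the joint eigenspaces of the Casimir operators of $\so(W)$ acting on $X$, and to exploit the ideal property $[\CDerEnd(W),\so(W)]\subset\so(W)$ to show these eigenspaces are preserved by $\rho(\CDerEnd(W))$.

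\emph{Setup.} Because $\so(W)\subset\End_\RR(W)\cap\CDerEnd(W)$ consists of vertical (zero-anchor) elements and $\rho$ intertwines anchors as a morphism of $\RR$-Laoids, the restriction $\rho|_{\so(W)}$ takes values in $\End_\RR(X)$ and is $\RR$-linear, hence extends to an $\RR$-algebra morphism $\widetilde\rho:U_\RR(\so(W))\to\End_\RR(X)$. In the basis $\sigma_i,\sigma_{ij}$ of \defref{uzt}, the structure constants of $\so(W)$ are real constants, so the Killing form and the generators $C_\alpha$ of $Z(U_\RR(\so(W)))$ have $\R$-constant coefficients in this basis. A finite set of such Casimirs distinguishes all finite-dimensional irreducible types of $\so(1,3)$ (via $\so(1,3)_\C\cong\mathfrak{sl}_2\oplus\mathfrak{sl}_2$).

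\emph{Canonical decomposition.} The commuting operators $\widetilde\rho(C_\alpha)\in\End_\RR(X)$ are fiberwise simultaneously diagonalizable by complete reducibility of finite-dimensional $\so(1,3)$-representations. Their joint eigenvalues are $\R$-constants, since the Casimir spectrum on each irreducible is an invariant of isomorphism type and the multiplicity of each type in the finite-rank $X$ is locally constant, hence constant on the connected $M\simeq\R^4$. The joint spectral projections are therefore $\R$-polynomials in $\widetilde\rho(C_\alpha)$, i.e.\ $\RR$-linear endomorphisms of $X$, which yields the canonical direct sum decomposition $X=\bigoplus_\lambda X_\lambda$ into $\RR$-submodule direct summands.

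\emph{Invariance (main step).} For $\delta\in\CDerEnd(W)$ we show $[\rho(\delta),\widetilde\rho(C_\alpha)]=0$. Since $C_\alpha$ has $\R$-constant coefficients in the $\sigma$-basis these pass through $\rho(\delta)$; combined with the gLa-morphism identity $[\rho(\delta),\rho(s)]=\rho([\delta,s])$ and Leibniz, one finds $[\rho(\delta),\widetilde\rho(C_\alpha)]=\widetilde\rho(\delta\cdot C_\alpha)$, where $\delta\cdot$ is the $\R$-linear derivation of $U_\RR(\so(W))$ induced by $[\delta,-]|_{\so(W)}$, which lands in $\so(W)$ because $\so(W)$ is an ideal. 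On a basis element $\sigma_a$ (whose components in the $\sigma$-basis are $\R$-constant, so $\delta_\RR$ contributes nothing), this reads $[\delta,\sigma_a]=\alpha_a^c(\delta)\sigma_c$ with $\alpha_a^c(\delta)\in\RR$. Fiberwise the matrix $\alpha(\delta)$ is a derivation of the semisimple $\so(1,3)$, hence inner: $\alpha_a^c=z^bc_{ba}^c$ for some smooth $z\in\so(W)$. Substituting yields $\delta\cdot C_\alpha=z^b[s_b,C_\alpha]=0$ because $C_\alpha$ is central in $U_\RR(\so(W))$. Hence $\rho(\delta)$ commutes with every $\widetilde\rho(C_\alpha)$ and with each spectral projection, so preserves every $X_\lambda$.

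The main subtlety is the clash between the non-$\RR$-linear $\rho(\delta)$ and the $\RR$-linear $\widetilde\rho(C_\alpha)$; this is handled by expressing Casimir with $\R$-constant coefficients in the $\sigma$-frame, which eliminates any anchor-on-coefficient terms, combined with the classical fact that every derivation of a semisimple Lie algebra is inner.
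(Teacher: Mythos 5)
Your argument is correct and rests on the same two ingredients the paper uses: the ideal property $[\CDerEnd(W),\so(W)]\subset\so(W)$ together with the classical fact that all derivations of the semisimple $\so(1,3)$ are inner. Where the paper constructs the isotypic projections directly — fiberwise from $\C\otimes\so(1,3)\simeq\su(2)\oplus\su(2)$, observing the projection matrices have entries in $\C\otimes\RR$, then pairing conjugate components to pass to real ones, then asserting commutation with $\CDerEnd(W)$ in one sentence — you instead manufacture the projections as polynomials in Casimir operators and prove the commutation identity $[\rho(\delta),\widetilde\rho(C_\alpha)]=\widetilde\rho(\delta\cdot C_\alpha)=0$ through the universal enveloping algebra. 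The Casimir packaging makes the invariance step, which the paper compresses considerably, more transparent and self-contained; the paper's version is shorter and makes the isotypic components more visibly canonical.

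One small inaccuracy to patch: not every central element of $U_\RR(\so(W))$ has $\R$-constant eigenvalues on $X$. In a $J,K$ basis the pseudoscalar Casimir $J\cdot K$ has eigenvalue proportional to $\pm i\bigl(p(p+1)-q(q+1)\bigr)$ on the two summands of the real irreducible $(p,q)\oplus(q,p)$, so it does not act as a real scalar there. The fix is harmless — either use only the real-spectrum invariants $J^2-K^2$ and $(J\cdot K)^2$, which already separate the real isotypic types, or replace joint eigenspace decomposition by the primary decomposition of the commuting family, whose projections are still $\R$-polynomials in the $\widetilde\rho(C_\alpha)$. With that adjustment your proof goes through.
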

\begin{proof}
  We use the fiberwise isomorphism with $\so(1,3)$;
  the choice of the isomorphism including orientation is irrelevant.
  Define the complex isotypic projections using
  $\C \otimes \so(1,3) \simeq \su(2) \oplus \su(2)$.
  Relative to a basis for $X$,
  these are projection matrices with entries in $\C \otimes \RR$.
  By pairing them up if necessary
    we get the real isotypic projections.
    The isotypic projections
    commute with $\CDerEnd(W)$
    because $\so(W)$ is an ideal
    and its infinitesimal automorphisms, that is derivations,
    are inner.
\qed\end{proof}
Label the characters of $\su(2)$
by half-integers $p \geq 0$ with dimension $2p+1$,
hence those of $\C \otimes \so(1,3)$ by pairs of half-integers $(p,q)$.
The $\so(1,3)$-isotypic components are labeled by
$(p,p)$ respectively $(p,q) \oplus (q,p)$ with $p \neq q$, $p + q \in\Z$.

\section{The graded Lie algebra $\E = \L/\I$} \label{sec:glae}

Canonically $\DerEnd_\RR(W) \simeq \Der^0(\ww)$ as $\RR$-Laoids;
this is actually an alternative definition of module derivations of $W$.
By restriction we get a $\RR$-Laoid morphism
$\auxa: \CDerEnd(W) \to \Der^0(\ww)$ used for the base change in the next lemma.

\begin{lemma}[The gLaoid $\L$]
  Consider the $\RR$-Laoid $\CDerEnd(W)$.
  A base change along $\RR \inj \ww$, using $\auxa$ given above,
  yields the $\ww$-gLaoid
  \[
      \L = (\ww) \otimes \CDerEnd(W)
  \]
  with bracket given by \eqref{eq:brack}. As a $\ww$-module it is finite free.
\end{lemma}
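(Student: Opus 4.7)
The plan is to reduce the statement to a direct application of Lemma \ref{lemma:bcl}, followed by a rank count. First I verify the hypotheses of Lemma \ref{lemma:bcl} in the case $A = \RR$, $B = \ww$, $\gx = \CDerEnd(W)$. The inclusion $\RR \inj \ww$ is the degree-zero inclusion into the exterior algebra, which is graded commutative (being generated in degree $1$), and $\RR$ is ungraded. The $\RR$-Laoid structure on $\CDerEnd(W)$ is inherited from $\DerEnd_\RR(W)$ as a sub-$\RR$-Laoid. The morphism $\auxa : \CDerEnd(W) \to \Der^0(\ww)$ is defined via the canonical identification $\DerEnd_\RR(W) \simeq \Der^0(\ww)$ that sends a module derivation of $W$ to its unique extension to a degree-zero derivation of the exterior algebra, and then restricts to $\CDerEnd(W)$. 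One checks that this is an $\RR$-Laoid morphism: it is $\RR$-linear, intertwines anchors (both anchor maps become the restriction $\delta \mapsto \delta|_\RR$), and respects brackets because the commutator of derivations of $\ww$ restricts to the commutator of module derivations of $W$ under the identification.

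Second, Lemma \ref{lemma:bcl} applied to these data supplies $\L = \ww \otimes_\RR \CDerEnd(W)$ as a $\ww$-gLaoid whose bracket is exactly \eqref{eq:brack} and whose anchor is $by \mapsto (b' \mapsto b\,\auxa(y)(b'))$ into $\Der(\ww)$. No further verification is needed at this step; the nontrivial algebraic identities (graded Jacobi, Leibniz over $\ww$, compatibility with the anchor) are already built into the conclusion of that lemma.

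Third, for the finite-freeness claim I invoke Definition \ref{def:uzt}, which after choosing a conformally orthonormal frame produces the $\RR$-module direct sum decomposition
\[
  \CDerEnd(W) \;\simeq\; \Der(\RR) \;\oplus\; \textstyle\bigoplus_{i=0}^{3} \RR\sigma_i \;\oplus\; \bigoplus_{i<j} \RR\sigma_{ij}.
\]
Since $M \simeq \R^4$ is parallelizable, $\Der(\RR)$ is a free $\RR$-module of rank $4$, so $\CDerEnd(W)$ is a free $\RR$-module of rank $11$. Because $\ww$ is free of total rank $16$ over $\RR$ (graded by exterior degree), the base change $\L = \ww \otimes_\RR \CDerEnd(W)$ is a finite free graded $\ww$-module of rank $11$.

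There is no real obstacle; the content of the lemma is bookkeeping. The only point requiring a moment's care is not to conflate the two algebroid structures in play: the ungraded $\RR$-Laoid structure on $\CDerEnd(W)$ has bracket the plain commutator of module derivations, whereas the $\ww$-gLaoid bracket on $\L$ mixes in the action of $\auxa$ on exterior-algebra coefficients as in \eqref{eq:brack}. Both structures are compatible via the canonical inclusion $\CDerEnd(W) \inj \L$, $\delta \mapsto 1 \otimes \delta$, which is the image of the identity under base change.
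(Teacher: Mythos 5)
Your proof is correct and takes essentially the same approach as the paper, whose proof is the single line ``Use Lemma \ref{lemma:bcl}''; you have spelled out the verification of that lemma's hypotheses and supplied the rank count for freeness. (A minor remark: the $\ww$-freeness of $\L$ follows from $\CDerEnd(W)$ being free of finite rank over $\RR$ alone, since base change preserves free modules; the fact that $\ww$ has $\RR$-rank $16$ is not actually needed for that step.)
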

\begin{proof}
  Use \lemmaref{bcl}.
\qed\end{proof}
Let $\mx$ be the kernel of the anchor map $\L \to \Der(\ww)$.
The adjoint representation $\L \to \DerEnd_{\ww}(\L)$ is not an algebroid representation,
but $\L \to \DerEnd_{\ww}(\mx)$ is.
It restricts to a $\RR$-Laoid representation
$\L^0 = \CDerEnd(W) \to \DerEnd_\RR(\mx)$.
\begin{lemma}[The ideal $\I$]
  Let $\I^2 \subset \mx^2$
be the $\so(W)$-isotypic component
\[
  (2,0) \oplus (0,2)
\]
Let $\I = (\ww)\I^2$.
Then $\I \subset \L$ is a $\ww$-gLaoid ideal.
\end{lemma}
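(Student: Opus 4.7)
The plan is to verify the three conditions from \defref{AgLaiod} for $\I$ to be a $\ww$-gLaoid ideal: $\I$ is a graded $\ww$-submodule of $\L$, $\I$ is contained in the kernel $\mx$ of the anchor, and $[\L,\I]\subset\I$. The first is immediate from the definition $\I=(\ww)\I^2$. For the second, the anchor $\L\to\Der(\ww)$ is $\ww$-linear by the base-change formula of \lemmaref{bcl}, so its kernel $\mx$ is itself a $\ww$-submodule; combined with $\I^2\subset\mx^2$ this gives $\I=(\ww)\I^2\subset\mx$.

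The substantive point is the ideal property. By $\R$-bilinearity it suffices to check $[b'\delta,\,by]\in\I$ for homogeneous $b,b'\in\ww$, $\delta\in\CDerEnd(W)$, and $y\in\I^2$. The $\ww$-graded Leibniz rule of the algebroid $\L$ gives
\begin{equation*}
  [b'\delta,\,by] \;=\; b'\auxa(\delta)(b)\cdot y \;+\; (-1)^{|b||b'|}\,b\cdot[b'\delta,\,y],
\end{equation*}
and the first summand is visibly in $(\ww)\I^2=\I$. Applying graded antisymmetry followed by the algebroid Leibniz rule in the opposite slot expresses $[b'\delta,y]$ as $-(-1)^{|b'||y|}\rho(y)(b')\,\delta + b'[\delta,y]$; the first of these vanishes because $y\in\mx^2$ has zero anchor. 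What remains is $[b'\delta,y]=b'[\delta,y]$, so both summands of the original bracket lie in $(\ww)\I^2=\I$ provided the single residual claim $[\delta,y]\in\I^2$ holds.

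That residual claim is exactly the $\CDerEnd(W)$-invariance of the $\so(W)$-isotypic component $\I^2\subset\mx^2$, which is immediate from \lemmaref{sowi} applied to the adjoint $\RR$-Laoid representation $\CDerEnd(W)\to\DerEnd_\RR(\mx^2)$ introduced just before the lemma. The one prerequisite for invoking \lemmaref{sowi} is that $\mx^2$ be a finite free $\RR$-module, and this is the only step I expect to require more than bookkeeping: I would handle it by using the explicit decomposition $\CDerEnd(W)\simeq\Der(\RR)\oplus\RR\sigma_0\oplus\cdots\oplus\RR\sigma_{12}$ of \defref{uzt} to write out the anchor in degree two with respect to a basis and identify its kernel directly. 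The sign bookkeeping in the two applications of the graded Leibniz rule is straightforward because the decisive step in each case is the vanishing of an anchor term, which is insensitive to signs.
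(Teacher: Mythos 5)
Your proof follows essentially the same route as the paper's: first the easy facts that $\I$ is a $\ww$-submodule lying inside $\mx$, then the substantive claim $[\L^0,\I^2]\subset\I^2$ from the $\so(W)$-isotypic invariance of \lemmaref{sowi}, then the extension to $[\L,\I]\subset\I$ via the gLaoid Leibniz rule and $\I\subset\mx$. You simply make explicit the sign bookkeeping and the two applications of the algebroid Leibniz rule, which the paper compresses into ``using the gLaoid-axioms and $\I\subset\mx$,'' so the content is identical. One small remark on the prerequisite you flag: rather than verifying that $\mx^2$ is finite free, it is slightly cleaner to invoke \lemmaref{sowi} for the manifestly finite free $\L^2=\wedge^2 W\otimes\CDerEnd(W)$, and then observe that $\mx^2$ is an $\so(W)$-invariant submodule, hence (by semisimplicity of the fibers $\so(1,3)$) a direct sum of isotypic summands of $\L^2$; this sidesteps freeness of $\mx^2$ entirely.
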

\begin{proof}
Clearly $\I$ is a $\ww$-submodule and $\I \subset \mx$.
We have $[\L^0,\I^2] \subset \I^2$ since $\I^2$ is isotypic,
see \lemmaref{sowi},
and now $[\L,\I] \subset \I$
using the gLaoid-axioms
and $\I \subset \mx$.
This proof would have gone through for any isotypic component of $\mx^2$.
\qed\end{proof}
\begin{lemma}[The gLaoid $\E$]
  The quotient $\E = \L/\I$ is a $\ww$-gLaoid.
\end{lemma}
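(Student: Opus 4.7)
The strategy is to push the $\ww$-gLaoid structure on $\L$ down along the quotient map $\L \surj \E$, using the three defining conditions of a $\ww$-gLaoid ideal in Definition \ref{definition:AgLaiod}: $\I$ is a gLa ideal, a graded $\ww$-submodule, and contained in the kernel of the anchor $\rho_\L : \L \to \Der(\ww)$. All three conditions were established in the preceding lemma (the kernel condition enters via $\I \subset \mx$). The plan is to use each one in turn to transfer a single piece of structure.

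Concretely, I would proceed as follows. First, because $\I$ is a graded $\ww$-submodule, $\E$ inherits the structure of a graded $\ww$-module. Second, because $\I$ is a gLa ideal, the graded bracket on $\L$ descends to a well-defined gLa bracket on $\E$. Third, because $\I \subset \ker \rho_\L$, the anchor $\rho_\L$ factors uniquely through the projection to a map $\rho_\E : \E \to \Der(\ww)$; this $\rho_\E$ inherits $\ww$-linearity and the gLa-morphism property directly from $\rho_\L$. Finally, the Leibniz-type compatibility $[x, ay] = \rho_\E(x)(a) y + (-1)^{ax} a [x,y]$ for homogeneous $x, y \in \E$ and $a \in \ww$ is checked by lifting $x, y$ to arbitrary representatives in $\L$, applying the identity there, and projecting back down; each term on the right is well-defined by the earlier steps, so the choice of representative is immaterial.

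The statement carries essentially no technical obstacle of its own: the real content sits in the preceding lemma, where the three axioms of a $\ww$-gLaoid ideal are verified for $\I$. What remains is the standard categorical observation, already flagged in the remark right after Definition \ref{definition:AgLaiod}, that a quotient of an $A$-gLaoid by an ideal is again an $A$-gLaoid. The closest thing to a subtlety is the anchor step, which is why the preceding lemma was organized so as to first record $\I \subset \mx$.
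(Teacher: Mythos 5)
Your proof is correct and takes the same route as the paper, which simply says ``Clear.'' and relies on the observation, recorded right after Definition~\ref{definition:AgLaiod}, that the quotient of an $A$-gLaoid by an $A$-gLaoid ideal is again an $A$-gLaoid. You have merely spelled out what that observation entails: the $\ww$-module, bracket, and anchor structures all descend, using respectively the submodule, gLa-ideal, and kernel-of-anchor clauses of the ideal definition, with the Leibniz compatibility then checked on representatives.
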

\begin{proof}
  Clear.
  \qed\end{proof}
\begin{remark} \label{remark:liefunctor}
If in a groupoid the automorphisms of every object
form a Lie group,
then one can associate to every object the Lie algebra of that Lie group.
The construction of $\L^0 = \E^0$ does morally just that
in the infinite-dimensional context of $\ggr$.
This 
relates the non-formal and the formal moduli spaces,
in \secref{intro}.
\end{remark}
The main goal of this section was the construction of $\E$.
In the remainder of this section we give more information about,
and alternative definitions of, the ideal $\I$.
For example, we have yet to establish that $\I \neq 0$.
\begin{lemma}[Isotypic components]
We have $\I=\I^2\oplus \I^3 \oplus \I^4$ where $\I^3 \subset \mx^3$
is the isotypic component $(\tfrac{3}{2},\tfrac{1}{2}) \oplus (\tfrac{1}{2},\tfrac{3}{2})$
and $\I^4 \subset \mx^4$ is the component $(1,0) \oplus (0,1)$.
The isotypic component
has multiplicity one for each of $\I^2$, $\I^3$, $\I^4$. So
\[
    \rank_\RR \I^2, \I^3, \I^4 = 10,16,6
\]
\end{lemma}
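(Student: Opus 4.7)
The plan is to argue entirely through the $\so(W)$-isotypic decomposition from \lemmaref{sowi}, using that $\so(W)$ is an ideal in $\CDerEnd(W)$, so every structure map in play — in particular the anchor and the wedge maps computing $\I^k = \wedge^{k-2} W \cdot \I^2$ — is $\so(W)$-equivariant and respects isotypic components.

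First I will record the $\so(W)$-types of the building blocks. From \defref{uzt}, $\CDerEnd(W) = \Der(\RR) \oplus \RR\sigma_0 \oplus \so(W)$, and a direct check $[\sigma_i, \tilde X] = [\sigma_{ij}, \tilde X] = 0$ for $\tilde X \in \Der(\RR)$ (together with the obvious $[\so(W), \RR\sigma_0] = 0$) shows that $\Der(\RR) \oplus \RR\sigma_0$ is $\so(W)$-trivial. Using $\so(W) \simeq \wedge^2 W$ as $\so(W)$-reps, $\CDerEnd(W)$ has isotypic content $5\cdot(0,0) \oplus ((1,0)\oplus(0,1))$; and $\wedge^k W$ has type $(0,0), (\tfrac12,\tfrac12), (1,0)\oplus(0,1), (\tfrac12,\tfrac12), (0,0)$ for $k=0,\ldots,4$. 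Tensoring yields the isotypic decomposition of each $\L^k$. The anchor $\rho$ is the canonical iso $\wedge^k W \otimes \Der(\RR) \simeq \Der(\RR,\wedge^k W)$ on that summand, and on $\wedge^k W \otimes (\RR\sigma_0 \oplus \so(W))$ it is the wedge map $\omega\otimes\delta \mapsto (w\mapsto \omega\wedge\delta(w))$ landing in $\wedge^{k+1} W \otimes W^*$. So $\mx^k \subset \wedge^k W \otimes (\RR\sigma_0\oplus\so(W))$, and any isotypic component of this that does not appear in $\wedge^{k+1} W \otimes W^*$ lies entirely in $\mx^k$.

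With this in hand I will identify $\I^2$ and then push it up by wedging. The $(2,0)\oplus(0,2)$ isotypic component of $\L^2$ can only come from $\wedge^2 W \otimes \so(W) = ((1,0)\oplus(0,1))^{\otimes 2}$, where it has multiplicity $1$ (the ``Weyl'' piece, rank $10$). Since $\wedge^3 W \otimes W^* = (\tfrac12,\tfrac12) \otimes (\tfrac12,\tfrac12) = (1,1)\oplus(1,0)\oplus(0,1)\oplus(0,0)$ contains no $(2,0)\oplus(0,2)$, this component lies in $\mx^2$ and equals $\I^2$. For $\I^3 = W \cdot \I^2$ and $\I^4 = \wedge^2 W \cdot \I^2$, Clebsch-Gordan gives
\[
 (\tfrac12,\tfrac12) \otimes ((2,0)\oplus(0,2)) = (\tfrac52,\tfrac12)\oplus(\tfrac12,\tfrac52) \,\oplus\, (\tfrac32,\tfrac12)\oplus(\tfrac12,\tfrac32),
\]
\[
 ((1,0)\oplus(0,1)) \otimes ((2,0)\oplus(0,2)) = (3,0)\oplus(0,3) \,\oplus\, (2,0)\oplus(0,2) \,\oplus\, (1,0)\oplus(0,1) \,\oplus\, (1,2)\oplus(2,1).
\]
Also $\L^3$ has content $7\cdot(\tfrac12,\tfrac12) \oplus ((\tfrac32,\tfrac12)\oplus(\tfrac12,\tfrac32))$ and $\L^4 \simeq \CDerEnd(W)$ has content $5\cdot(0,0) \oplus ((1,0)\oplus(0,1))$. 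Equivariance forces every source component that does not appear in the target to die, leaving only $(\tfrac32,\tfrac12)\oplus(\tfrac12,\tfrac32)$ (multiplicity $1$, rank $16$) as a candidate for $\I^3$ and $(1,0)\oplus(0,1)$ (multiplicity $1$, rank $6$) for $\I^4$. Since $\wedge^k W = 0$ for $k \geq 5$ we get $\I^k = 0$ there, hence $\I = \I^2 \oplus \I^3 \oplus \I^4$.

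The main obstacle is to rule out that these wedge maps are identically zero on their respective surviving components, which would reduce the multiplicities to $0$. Since source and target are irreducible of multiplicity $1$, the map is either an iso or zero, so it suffices to exhibit a single nonzero image. Under $\wedge^2 W \otimes \C \simeq \Lambda^+ \oplus \Lambda^-$ I will pick a nonzero self-dual $\omega \in \Lambda^+$; then the trace-free symmetric part of $\omega \otimes \omega$ is a nonzero element of the $(2,0)$ component of $(\wedge^2 W \otimes \so(W)) \otimes \C$, representing a nonzero element of $\I^2 \otimes \C$. Wedging its $\wedge^2 W$-slot with a generic $v \in W$ (respectively generic $\omega' \in \wedge^2 W$) yields $(v \wedge \omega) \otimes \omega$ (respectively $(\omega' \wedge \omega) \otimes \omega$), nonzero by a direct frame computation since $\wedge^3 W$ and $\wedge^4 W$ are nontrivial. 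This secures the multiplicities, hence the ranks $10, 16, 6$.
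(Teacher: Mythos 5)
Your proof follows exactly the paper's route: reduce to $\mx \subset (\ww)\otimes\nx$ (you phrase this as $\mx^k \subset \wedge^k W \otimes (\RR\sigma_0\oplus\so(W))$ via the anchor, which is the same thing since $\nx=\RR\sigma_0\oplus\so(W)$), record the $\so(W)$-types of $\wedge^k W$ and $\nx$, and run the $\so(1,3)$ Clebsch--Gordan table. The paper compresses all of this into one line (``the multiplication table implies the claim''), so your expansion is a welcome filling-in, and the observation that the $(2,0)\oplus(0,2)$ type of $\L^2$ already lies in $\mx^2$ because it cannot appear in $\wedge^3 W \otimes W^*$ is a clean way to bypass the explicit basis \eqref{eq:mi2}.

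One small flaw in your nonvanishing check: you assert that wedging the trace-free symmetric part of $\omega\otimes\omega$ with $v$ gives $(v\wedge\omega)\otimes\omega$, but that identity holds only if the trace term is absent. As written it is wrong for a generic $\omega\in\Lambda^+$; you should either take $\omega$ a \emph{null} (equivalently decomposable) element of $\Lambda^+$, so that $\omega\otimes\omega$ is already trace-free, or keep the trace term and argue nonvanishing differently. A more robust version of the same idea: write any nonzero $x\in\I^2$ as $\sum_a \omega_a\otimes\delta_a$ with the $\delta_a\in\so(W)$ $\RR$-linearly independent and some $\omega_a\neq 0$; then $vx=\sum_a (v\wedge\omega_a)\otimes\delta_a$ and $\omega' x = \sum_a (\omega'\wedge\omega_a)\otimes\delta_a$ are nonzero for generic $v\in W$, $\omega'\in\wedge^2 W$ by nondegeneracy of the wedge pairings, without ever picking a preferred $\omega$. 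Either fix makes the argument airtight; the rest is sound.
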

\begin{proof}
Use $\mx \subset (\ww) \otimes \nx$, with $\nx \subset \L^0$ the
kernel of the anchor map $\L^0 \to \Der(\RR)$.
  The unique isotypic component of $\wedge^k W$ is $(1,0) \oplus (0,1)$
  if $k=2$; $(\tfrac{1}{2},\tfrac{1}{2})$ if $k=1,3$; and
  $(0,0)$ if $k=0,4$.
  The components of $\nx$ are $(0,0)$ and $(1,0) \oplus (0,1)$.
  All have multiplicity one. The multiplication table for $\so(1,3)$ implies
  the claim.
\qed\end{proof}
\begin{lemma}[A basis for $\I^2$]
  For every choice of a conformally orthonormal basis for $W$,
  the module $\I^2 \subset \L^2$ is generated over $\RR$ by:
  \begin{equation}\label{eq:mi2}
  \RE
  \left[\begin{pmatrix}
    \theta_0\theta_1 + i \theta_2\theta_3\\
    \theta_0\theta_2 + i \theta_3\theta_1\\
    \theta_0\theta_3 + i \theta_1\theta_2
  \end{pmatrix}^T
  S
  \begin{pmatrix}
    \sigma_1 + i\sigma_{23}\\
    \sigma_2 + i\sigma_{31}\\
    \sigma_3 + i\sigma_{12}
  \end{pmatrix}
\right]
  \end{equation}
  where $S \in \C^{3 \times 3}$ runs over all symmetric traceless matrices.
  Here $\theta_0\theta_1 = \theta_0 \wedge \theta_1$.
\end{lemma}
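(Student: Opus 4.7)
The plan is to identify $\I^2 \subset \mx^2$, defined as the real $(2,0)\oplus(0,2)$ isotypic component under $\so(W)$, by complexifying and using the chiral splitting $\C\otimes \so(W) \simeq \su(2)\oplus \su(2)$ from the proof of \lemmaref{sowi}. The key observation is that both the self-dual 2-forms and the combinations $\sigma_i+i\sigma_{jk}$ of boosts and rotations each span a single $(1,0)$ factor after complexifying, and that contracting with a symmetric traceless bilinear pairing projects their tensor product onto the $(2,0)$ summand.

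First I would verify two $(1,0)$-identifications in $\C\otimes \mx^2$: (i) the three elements $\theta_0\theta_1+i\theta_2\theta_3$, $\theta_0\theta_2+i\theta_3\theta_1$, $\theta_0\theta_3+i\theta_1\theta_2$ form a basis of the self-dual part of $\C\otimes (\wedge^2 W)$, hence span a copy of the $(1,0)$-module; and (ii) the three endomorphisms $\sigma_1+i\sigma_{23}$, $\sigma_2+i\sigma_{31}$, $\sigma_3+i\sigma_{12}$ span the corresponding $(1,0)$ factor inside $\C\otimes\so(W) \subset \C\otimes \nx$. Item (i) is a direct computation using the $\so(W)$-action on $\theta_0,\ldots,\theta_3$; item (ii) uses the commutation relations between the $\sigma$'s that follow from \defref{uzt}. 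A convention check is needed to ensure that both triples lie in the \emph{same} chirality rather than opposite chiralities (the latter would land in $(1,1)$ rather than $(2,0)$); the sign pattern in \eqref{eq:mi2} is chosen precisely for this match.

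With the two $(1,0)$ triples $v_i$ and $M_j$ so identified, their outer product lies in $(1,0)\otimes(1,0) = (2,0)\oplus(1,0)\oplus(0,0)$, where the symmetric traceless part of $v_i M_j$ projects onto the $(2,0)$ summand, the trace onto $(0,0)$, and the antisymmetric part onto $(1,0)$. Contracting with a symmetric traceless $S\in\C^{3\times 3}$ therefore selects exactly the $(2,0)$ summand, a complex subspace of $\C\otimes \mx^2$ of complex dimension $5$. Its complex conjugate (obtained by $i\mapsto -i$) lies in the $(0,2)$ summand, so $\RE$ of the bracketed expression in \eqref{eq:mi2} lands in the real form $(2,0)\oplus(0,2)$, which is $\I^2$ by definition.

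A dimension count closes the argument. Symmetric traceless complex $3\times 3$ matrices form a real vector space of dimension $10$; the preceding lemma gives $\rank_\RR\I^2 = 10$. The parametrization $S\mapsto \RE[\cdots]$ is $\RR$-linear, and injective: if $\RE[\cdots]=0$ then, since the $(2,0)$ and $(0,2)$ summands intersect trivially, the summand $\sum_{ij}S_{ij}v_iM_j$ must vanish by itself in $\C\otimes \mx^2$, which by $\C$-linear independence of the $v_iM_j$ forces $S=0$. Since both real ranks equal $10$, the map surjects onto $\I^2$. The main obstacle is the chirality bookkeeping in steps (i)--(ii); once that is fixed, the rest is standard $\so(1,3)$-representation theory applied via \lemmaref{sowi}.
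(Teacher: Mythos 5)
Your proof takes essentially the same approach as the paper's, which is also a representation-theoretic "direct computation" noting that each of the two triples in \eqref{eq:mi2} spans a $(1,0)$-module and that the symmetric traceless contraction projects onto $(2,0)\oplus(0,2)$. The one thing the paper's proof states explicitly that you elide is the verification that the elements actually lie in $\mx^2$ (the paper: "these elements annihilate $\RR$, and by the properties of $S$ they annihilate $W$, hence they are in $\mx^2$"); you instead assert membership in $\C\otimes\mx^2$ up front and let your final rank count do the work, which is logically fine but has the steps in a slightly awkward order — the inclusion in $\mx^2$ should be derived from $\rank\I^2=10$ together with multiplicity one of $(2,0)\oplus(0,2)$ in the ambient $\wedge^2 W\otimes\nx$, rather than assumed.
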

\begin{proof}
  Direct computation.
  These elements annihilate $\RR$,
  and by the properties of $S$ they annihilate $W$,
  hence they are in $\mx^2$.
  Roughly,
  the two given vectors are separately $(1,0)$-representations,
  and the construction picks out the desired isotypic component
  for $\I^2$.
  This is related to what is called, in the vernacular of relativity,
  the electromagnetic decomposition of the Weyl tensor.
\qed\end{proof}
We give further independent constructions of $\I$:
\begin{itemize}
  \item Using a deformation argument, in \secref{cliffmod}.
  \item Using spinors, in \secref{spinors}.
\end{itemize}

\section{Theorems relating to Ricci-flatness} \label{sec:ricci}

An object of $\ggr$ is fixed,
in particular  we have a manifold $M \simeq \R^4$
and we have constructed a $\ww$-gLaoid $\E$.
Let $\Omega = \wedge \Omega^1$ be the
graded commutative $\RR$-algebra of differential
forms on $M$. Base change yields an $\Omega$-gLaoid
\[
    \W = \Omega \otimes \DerEnd_\RR(\Omega^1)
\]
\begin{definition}[Affine, nondegenerate, globally hyperbolic] \label{def:angh}
  We use the fact that
  $\Omega^1$ and $\Der(\RR)$
  are canonically dual  as $\RR$-modules.
  We say:
  \begin{itemize}
\item $\nabla \in \W^1$ is
   affine iff $\nabla|_\RR \in \Omega^1 \otimes \Der(\RR)$
   is the identity $\Omega^1 \to \Omega^1$.\\
 Affine $\nabla$ are in canonical one-to-one correspondence with affine connections.
    \item  $\unk \in \E^1$ is nondegenerate
iff $\unk|_\RR \in W \otimes \Der(\RR)$ is an
isomorphism $\Omega^1 \to W$.\\
Such an isomorphism can be interpreted as a frame for the tangent bundle. 
    \item $\unk \in \E^1$
      is globally hyperbolic iff there exists a diffeomorphism
      $M \to \R^4$ such that,
      with $t,\xi_1,\xi_2,\xi_3 \in \RR$ the four coordinate functions,
      we have $\unk(t + \sum_i n^i \xi_i) \in W_+$
      for all constants $(n^1,n^2,n^3) \in \R^3$ with
    $(n^1)^2 + (n^2)^2 + (n^3)^2 \leq 1$.
  \end{itemize}
\end{definition}

We do not need globally hyperbolic in this section.
Our definition of global hyperbolicity is
stronger than necessary, for the sake of simplicity.
The way we have defined it,
globally hyperbolic does not imply nondegenerate.

\begin{theorem}[From an MC-element to a Ricci-flat metric] \label{theorem:mctoricci}
  For a nondegenerate $\unk \in \E^1$
  denote by $i: W \to \Omega^1$
  the inverse of $\unk|_\RR$.
  It induces a gLa map $\L \to \W$.
  Define $\nabla$ to be the image of $x$ under
  \[
    \E^1 = \L^1 \to \W^1
  \]
  Then $\nabla$ is affine,
  and compatible with the $i$-induced conformal metric on $M$,
  so every representative metric $g \in S^2\Omega^1$ satisfies $\nabla g \in \Omega^1 g$.
  If in addition $\unk \in \MC(\E)$, then $\nabla$ is torsion-free;
  there is a unique representative $g$,
  up to a positive multiplicative constant,
  that satisfies $\nabla g = 0$;
  and this metric $g$ of signature $\lsig$ is Ricci-flat.
\end{theorem}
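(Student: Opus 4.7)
The plan is to argue in three stages: (i) affineness and conformal compatibility from the structural properties of $\L$ alone; (ii) the MC-equation modulo $\I$ controls torsion and Ricci via an $\so(W)$-isotypic decomposition of $[\unk,\unk]$; and (iii) existence and uniqueness of a $\nabla$-parallel metric using simple-connectedness of $M$.

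For stage (i): because $\I$ is supported in degrees $\geq 2$, $\E^1=\L^1$, so $\unk\in W\otimes\CDerEnd(W)$ makes sense. The restriction $\unk|_\RR$ is the $\Omega^1\to W$ isomorphism whose inverse is $i$ by definition; transporting the $W$-slot by $i$ turns $\unk|_\RR$ into $\operatorname{id}_{\Omega^1}$, giving affineness of $\nabla$. Because $\unk$ takes values in $\CDerEnd(W)$, and $i$ induces a gLa morphism $\L\to\W$ that intertwines the respective conformal subalgebroids, $\nabla$ preserves the $i$-induced conformal class: for every representative $g\in S^2\Omega^1$ there exists $\omega\in\Omega^1$ with $\nabla g=\omega\,g$.

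For stage (ii): suppose $\unk\in\MC(\E)$, so that $[\unk,\unk]\in\I^2\subset\L^2$. Using the splitting $\CDerEnd(W)=\Der(\RR)\oplus\RR\sigma_0\oplus\so(W)$ of \defref{uzt} and the base-change bracket formula \eqref{eq:brack}, I would decompose $[\unk,\unk]$ into $\so(W)$-isotypic pieces and match each to a classical invariant of $\nabla$ via $i$: the $\Der(\RR)$-valued part of $[\unk,\unk]$ (which is absent from the anchor-kernel and hence from $\I^2$) corresponds to the torsion of $\nabla$; the $\RR\sigma_0$-valued part equals $d\omega$; and the $\so(W)$-valued part assembles into the curvature of $\nabla$, which further decomposes as Weyl $\oplus$ trace-free Ricci $\oplus$ scalar. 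Since $\I^2$ is precisely the Weyl-like $(2,0)\oplus(0,2)$ isotypic component, the relation $[\unk,\unk]\in\I^2$ kills every non-Weyl summand, yielding simultaneously $T(\nabla)=0$, $d\omega=0$, and $\operatorname{Ric}(\nabla)=0$ while leaving the Weyl part unconstrained.

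For stage (iii): since $M\simeq\R^4$ is simply connected and $\omega$ is closed by stage (ii), $\omega=-2\,dh$ for an $h\in\RR$ unique up to an additive constant. Then $g'=e^{2h}g$ satisfies $\nabla g'=(\omega+2\,dh)\,g'=0$, and the residual ambiguity $h\mapsto h+c$ rescales $g'$ by a positive multiplicative constant, giving uniqueness. A torsion-free connection preserving $g'$ is the Levi-Civita connection, so the curvature of $\nabla$ is the Riemann tensor of $g'$, and the Ricci-vanishing from stage (ii) becomes $\operatorname{Ric}(g')=0$. The main obstacle is stage (ii): explicitly matching every non-Weyl $\so(W)$-isotypic summand of $[\unk,\unk]$ with the correct classical geometric object (torsion, $d\omega$, trace-free and scalar Ricci). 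This is a concrete but nontrivial representation-theoretic computation, eased by the explicit generators of $\I^2$ in \eqref{eq:mi2} and further clarified by the spinor reformulation of \secref{spinors}.
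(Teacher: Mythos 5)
Your proof is correct and takes essentially the same route as the paper's: affineness and conformal compatibility follow from the structure of $\L^1$, torsion-freeness from $\I^2 \subset \mx^2$ lying in the anchor kernel, and the parallel representative together with Ricci-flatness from $\I^2$ being exactly the Weyl-type $(2,0)\oplus(0,2)$ isotypic component. The only presentational difference is that the paper extracts the curvature constraints by letting $[\nabla,\nabla]$ act on the modules $E = \RR g$ and $E = \Omega^1$, whereas you decompose $[\unk,\unk]$ directly into $\so(W)$-isotypes and then transport via $i$; this is the same computation read off in two equivalent ways, and your step (iii) via $H^1(\R^4)=0$ is exactly what the paper's ``vanishing curvature on the rank-one module'' amounts to.
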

\begin{proof}
  Let $h \in S^2W$ be the inverse of a representative
  of the conformal inner product on $W$.
  Then $xh \in Wh$ in $W \otimes S^2W$,
  by the definition of $\CDerEnd(W)$.
  The $i$-induced conformal metric $g$ satisfies $\nabla g \in \Omega^1 g$.
  The choice of $i$ makes $\nabla$ affine,
  so it corresponds to an affine connection
  and one can speak about torsion.
  Then $\nabla$ is torsion-free iff $[\nabla,\nabla]|_\RR = 0$.
  And if $\nabla$ is torsion-free then
  $[\nabla,\nabla]: E \to \Omega^2\otimes E$ is $\RR$-linear and it is
  the curvature of the connection $\nabla: E \to \Omega^1 \otimes E$
  induced on (the only two cases we need)
  $E = \RR g$
  and $E = \Omega^1$.
  If $\unk \in \MC(\E)$ then $[\nabla,\nabla] \in \K$ with
   $\K$ the image of $\I^2$ under $\L^2 \to \W^2$.
  The definition of $\I^2$ and hence $\K$ implies that
  $\nabla$ is torsion-free;
  that $\nabla$ has vanishing curvature as a connection on
  the rank one module $E = C^\infty g$
  which yields existence and uniqueness of a
  new representative $g$ as stated
  with $\nabla$ its Levi-Civita connection;
  that this new $g$ is Ricci-flat using $E = \Omega^1$.
\qed\end{proof}

Let $\dgr$ be the category of manifolds diffeomorphic to $\R^4$
and diffeomorphisms.
There is a forgetful functor
$\Forget: \ggr \to \dgr$.
By assigning to every manifold
the set of Ricci-flat metrics of signature $\lsig$
over it, we get a functor
$\RicciFlat : \dgr \to \SET$.
Let
$\F: \ggr \to \gLa$ be the construction of $\E$.

\begin{theorem}[Equivalent moduli spaces]
  For all $X \in \obj(\ggr)$:
  \[
      \frac{\RicciFlat(\Forget(X))}{\sim} \;\simeq\;
      \frac{
        \textnormal{nondegenerate elements in } \MC(\F(X))}{\sim}
  \]
  where quotient by $\sim$ means modulo automorphisms
  in the image of $\RicciFlat$ respectively in the image of $\MC \circ \F$.
  The bijection $\simeq$ is induced by \theoremref{mctoricci}.
\end{theorem}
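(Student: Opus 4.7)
The plan is to exhibit the forward map induced by \theoremref{mctoricci}, verify its well-definedness on equivalence classes, and then establish surjectivity and injectivity separately. Given a nondegenerate $\unk \in \MC(\F(X))$, \theoremref{mctoricci} already produces a Ricci-flat metric $g$ of signature $\lsig$ on $M$, unique up to a positive multiplicative constant. That constant ambiguity is absorbed by the equivalence $\sim$ on $\MC$-elements: for each $\lambda>0$, the map $\lambda\cdot\mathbbm{1}_W$ is a $\ggr$-automorphism of $X$ covering $\mathrm{id}_M$, whose induced action on $\MC(\F(X))$ rescales $\unk|_\RR$ and hence the associated metric by a suitable power of $\lambda$. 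More generally, any $\ggr$-automorphism $\phi$ of $X$ covers a diffeomorphism $\phi_M$ of $M$ and sends $\unk$ to an MC-element whose associated metric, by functoriality of the construction in \theoremref{mctoricci}, is $\phi_M^* g$ up to a positive constant. Hence the forward map descends to equivalence classes.

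For surjectivity, fix a Ricci-flat $g \in \RicciFlat(\Forget(X))$. Choose a conformally orthonormal basis $\theta_0,\ldots,\theta_3$ of $W$ (available because $M \simeq \R^4$ is contractible) and a $g$-orthonormal frame $e_0,\ldots,e_3$ of $TM$. Define the isomorphism $i^{-1}:\Omega^1\to W$ sending the $g$-dual coframe to $\theta_0,\ldots,\theta_3$; it identifies the conformal class of $g$ with the conformal inner product on $W$ for the obvious representatives. Transport the Levi-Civita connection $\nabla^g$ through $i^{-1}$ to supply the connection-like part of a candidate $\unk \in \E^1$, living in $W\otimes\nx$. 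Nondegeneracy is automatic. The Maurer-Cartan equation is then checked in two pieces: torsion-freeness of $\nabla^g$ gives $[\unk,\unk]|_\RR = 0$, while Ricci-flatness means the Riemann tensor of $\nabla^g$ is pure Weyl, hence lies precisely in the $(2,0)\oplus(0,2)$ isotypic component characterising $\I^2$. Thus $[\unk,\unk]\in\I$, so $[\unk,\unk]=0$ in $\E$, and applying \theoremref{mctoricci} to this $\unk$ recovers $g$ up to a positive constant.

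For injectivity, suppose $\unk_1,\unk_2$ are nondegenerate MC-elements whose metrics $g_1,g_2$ satisfy $g_1=\phi_M^* g_2$ for a diffeomorphism $\phi_M$ of $M$. Let $i_a:W\to\Omega^1$ denote the inverses of $\unk_a|_\RR$ and set $\tilde\phi := i_2^{-1}\circ\phi_M^*\circ i_1: W\to W$. Since each $i_a$ identifies the conformal inner product on $W$ with the conformal class of $g_a$ and $\phi_M^*$ intertwines those conformal classes, $\tilde\phi$ preserves the conformal inner product on $W$, and so is a $\ggr$-automorphism of $X$ covering $\phi_M$. By construction $\F(\tilde\phi)$ matches the frame parts of $\unk_1$ and $\unk_2$. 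Since by \theoremref{mctoricci} the connection-like part of a nondegenerate MC-element is the $i$-transport of the Levi-Civita connection of its associated metric, uniqueness of the Levi-Civita connection of $g_1$ forces the connection-like parts to match as well. A residual positive constant is absorbed by a central rescaling as in the well-definedness step.

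I expect the main obstacle to be the careful bookkeeping of the positive-constant ambiguity in $g$ across all three steps, and in particular the injectivity assertion that the connection-like part of a nondegenerate MC-element is fully determined by its frame part and the MC condition — this amounts to a functorial form of the fundamental theorem of Riemannian geometry inside $\E$. The remaining verifications reduce to isotypic-decomposition arguments in $\so(W)$ and to the functoriality of $\F$.
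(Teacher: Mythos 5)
The overall skeleton — forward map, well-definedness, surjectivity via transporting the Levi-Civita connection, injectivity via lifting an isometry to a $\ggr$-automorphism — is the right shape of argument for this theorem, and the surjectivity step (torsion-freeness and $\nabla g = 0$ kill the anchor- and $\sigma_0$-components of $[\unk,\unk]$, Ricci-flatness kills the $(1,1)$ and $(0,0)$ pieces, leaving Weyl in $(2,0)\oplus(0,2) = \I^2$) is sound, if abbreviated.

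There is, however, a concrete error in the treatment of the positive-constant ambiguity, and it undermines the well-definedness step. You claim that $\lambda\cdot\mathbbm{1}_W$ rescales the associated metric $g$. Compute its effect: $\F(\lambda\mathbbm{1}_W)$ scales $\L^1 = W\otimes\CDerEnd(W)$ by $\lambda$ (conjugation of $\CDerEnd(W)$ by a constant scalar is trivial), so $\unk\mapsto\lambda\unk$ and $i = (\unk|_\RR)^{-1}\mapsto\lambda^{-1}i$. The $i$-induced map $\L^1\to\W^1$ is $w\otimes\delta\mapsto i(w)\otimes i\delta i^{-1}$, which under $i\mapsto\lambda^{-1}i$ picks up a factor $\lambda^{-1}$; applied to $\lambda\unk$ this gives back the same $\nabla$. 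The conformal class on $\Omega^1$ is also manifestly scale-invariant. So $\lambda\mathbbm{1}_W$ leaves both $\nabla$ and the $\R_+$-family $\{cg : c>0\}$ completely fixed — it does not trace out that family, and so does not absorb the ambiguity. The ambiguity is therefore genuine: a single $\unk\in\MC(\E)$ produces a full $\R_+$-family of metrics $\{cg\}$, and for the forward map to land in $\RicciFlat/\sim$ with $\sim$ generated only by diffeomorphisms one would need $g$ and $cg$ to be isometric for all $c>0$. For a non-flat Ricci-flat metric this fails (the Kretschmann scalar rescales by $c^{-2}$ and is a diffeomorphism invariant). Your write-up needs either to find the correct mechanism by which the stated equivalence on $\MC(\F(X))$ collapses this $\R_+$-family on the metric side, or to flag that the quotient on the $\RicciFlat$ side must be enlarged to include constant rescalings — and in either case the argument you gave via $\lambda\mathbbm{1}_W$ does not deliver it.
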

\begin{proof}
  Omitted.
\qed\end{proof}


\section{Clifford modules as deformations}\label{sec:cliffmod}

Some constructions in this paper,
perhaps more than we are aware,
are naturally stated using Clifford algebras and modules \cite{abs}.
Clifford algebras are $\Z$-filtered and $\Z_2$-graded.
Accordingly, Clifford modules can be filtered or $\Z_2$-graded.
In \secref{cmod} we review the highly constrained structure
of such Clifford modules.
Beware that by a graded or filtered module
we mean one where module multiplication respects the
grading or filtration respectively.

This section has two goals.
One is to show that $\E$ is free as an unfiltered Clifford module,
which we need for gauges in \secref{gauges1}.
The other is a new and clean definition of the ideal $\I \subset \L$
using a deformation argument.

Suppose $R$ is a filtered ring and $\Gr R$ is its associated graded ring.
Then $\Gr$ is also a functor from filtered $R$-modules to graded $\Gr R$-modules.
We will use this intuitive fact:
If $f: A \to B$ is a morphism of filtered $R$-modules,
then by `semicontinuity' the kernel of $\Gr f : \Gr A \to \Gr B$
should not be smaller than the kernel of $f$.
A rigorous version is that if $k: K \to A$ satisfies $f \circ k = 0$
and if $k$ is a split monomorphism,
then $\Gr f \circ \Gr k = 0$
and $\Gr k: \Gr K \to \Gr A$ is a split monomorphism.
Here $k$ being a split monomorphism means 
equivalently that
it has a left-inverse $A \to K$ that is also a map of filtered modules,
equivalently that
$K$ is (via $k$) a direct summand of $A$ as a filtered module.
So being a split monomorphism is stronger than being injective.

To define the Clifford algebra $\Cl(W)$ we
need a representative $\ip{-}{-}$
of the conformal inner product on $W$,
though the dependence on the representative is minor.
A basis $\theta_0,\ldots,\theta_3 \in W$ is always understood to be orthonormal
for this representative.

\begin{definition}[Clifford algebra]
  For a chosen representative $\ip{-}{-}$,
  let $\Cl(W)$ be the free associative $\RR$-algebra generated by $W$
  modulo the two-sided ideal generated by
    $wv+vw + 2\ip{v}{w}$
    for all $v,w \in W$.
    It has a canonical filtration $\Cl(W)^{\leq k}$ and
  compatible $\Z_2$-grading. We abbreviate $\Cl = \Cl(W)$.
\end{definition}
The map $W \inj \Cl$ is injective,
and $\Gr\,\Cl = \ww$ as $\RR$-algebras.
Hence $\Gr: \Mf \to \Mg$
with $\Mf$ the category of filtered
and compatibly $\Z_2$-graded $\Cl$-modules;
$\Mg$ the category of graded $\ww$-modules.
If $X \in \obj(\Mf)$ then we
set $X^{\lhd k} = X^{\leq k} \cap X^\odd$ for $k$ odd,
    $X^{\lhd k} = X^{\leq k} \cap X^\even$ for $k$ even.
  Filtration and $\Z_2$-grading being compatible means
  $X^{\leq k} = X^{\lhd k} \oplus X^{\lhd k-1}$,
  in particular $\Gr X = \oplus_k X^{\leq k}/X^{\leq k-1}
  \simeq \oplus_k X^{\lhd k}/X^{\lhd k-2}$.
  \begin{lemma}[The Clifford module $\UC$] \label{lemma:ucl}
  Abbreviate $\Omega = \wedge^4 W$. Define
  \[
    \UC^{\lhd k} \,\subset\, \Hom_\R(\RR,\Cl^{\lhd k})
    \oplus \Hom_\R(W,\Cl^{\lhd k+1}) \oplus \Hom_\R(\Omega,
    \Cl^{\lhd k} \otimes \Omega)
\] 
to be the elements $\delta_\RR \oplus \delta_W \oplus \delta_\Omega$ for which\footnote{%
Juxtaposition is multiplication in $\RR$ or $\Cl$,
or scalar multiplication for a $\RR$-module,
and the injection $W \inj \Cl$ is implicit.}
\begin{align*}
  \delta_\RR(ff')
& = f'\delta_\RR(f) + f \delta_\RR(f')\\
\delta_W(fw) & = \delta_\RR(f)w + f\delta_W(w)\\
\delta_\Omega(f\eta) & = \delta_\RR(f) \otimes \eta + f \delta_\Omega(\eta)
\end{align*}
Then $\UC$ is a $\Cl$-module, and an object in $\Mf$.
It is free of rank 9 if the filtration is ignored,
$\UC \simeq \Cl^9$ as unfiltered $\Z_2$-graded $\Cl$-modules.
The $\ww$-module $\U = \Gr \UC$
is given in the same way, by
syntactically replacing $\Cl$ by $\ww$, and obvious grading.
We have
$\rank \UC^{\lhd 0,1,2,3,4} = 21, 48, 67, 72, 72$.
\end{lemma}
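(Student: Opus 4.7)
The plan is to fix a conformally orthonormal basis $\theta_0,\ldots,\theta_3$ for $W$ together with commuting derivations $\partial_0,\ldots,\partial_3 \in \Der(\RR)$ coming from a coordinatization $M \simeq \R^4$, and then to parametrize each $\UC^{\lhd k}$ explicitly by evaluation: $\delta_\RR$ at the $\partial_\mu$, $\delta_W$ at the $\theta_i$, and $\delta_\Omega$ at the volume element $\theta_0\theta_1\theta_2\theta_3$. All four assertions in the lemma then reduce to bookkeeping on top of this parametrization.

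First, I would check the filtered $\Z_2$-graded $\Cl$-module structure by componentwise left multiplication $c \cdot (\delta_\RR,\delta_W,\delta_\Omega) = (c\delta_\RR,c\delta_W,c\delta_\Omega)$. The three Leibniz identities are preserved because $\RR$ is central in $\Cl$, and the filtration compatibility $\Cl^{\lhd l} \cdot \UC^{\lhd k} \subset \UC^{\lhd k+l}$ follows from $\Cl^{\lhd l} \cdot \Cl^{\lhd k} \subset \Cl^{\lhd k+l}$. The deliberate $+1$ shift in the codomain of $\delta_W$ relative to that of $\delta_\RR$ is precisely what makes parities of filtration degrees line up across the identity $\delta_W(fw) = \delta_\RR(f)w + f\delta_W(w)$.

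Second, for the free-module structure I would invoke the standard fact that for a smooth manifold $\Der_\R(\RR,N) \simeq \Der(\RR) \otimes_\RR N$ for any $\RR$-module $N$, since $\Omega^1(M)$ is finitely generated projective. Applied to $N = \Cl^{\lhd k}$ this identifies $\delta_\RR$ with a $4$-tuple $(\delta_\RR(\partial_\mu))_\mu \in (\Cl^{\lhd k})^4$. Given any such $\delta_\RR$, the Leibniz identity $\delta_W(g^i\theta_i) = \delta_\RR(g^i)\theta_i + g^i\delta_W(\theta_i)$ shows that $\delta_W$ is freely specified by $(\delta_W(\theta_i))_i \in (\Cl^{\lhd k+1})^4$, and similarly $\delta_\Omega$ by a single element of $\Cl^{\lhd k} \otimes \Omega \simeq \Cl^{\lhd k}$. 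The resulting evaluation map $\UC \to \Cl^4 \oplus \Cl^4 \oplus \Cl \simeq \Cl^9$ is an $\R$-linear bijection; it is $\Cl$-linear because left multiplication commutes with evaluation; and it respects the $\Z_2$-grading with $5$ even generators (from $\delta_\RR$ and $\delta_\Omega$) and $4$ odd generators (from $\delta_W$, due to the $+1$ shift).

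Third, the identification $\U = \Gr \UC$ is immediate: the three Leibniz identities are homogeneous in the filtration and pass unchanged to $\Gr\,\Cl = \ww$ upon replacing each $\Cl^{\lhd k}$ by $\wedge^k W$. Finally, the rank count follows directly from $\rank_\RR \Cl^{\lhd k} = 1,4,7,8,8$ for $k = 0,1,2,3,4$ together with $\Cl^{\lhd 5} = \Cl^{\lhd 3}$, giving $\rank \UC^{\lhd k} = 4\rank \Cl^{\lhd k} + 4\rank \Cl^{\lhd k+1} + \rank \Cl^{\lhd k} = 21,48,67,72,72$. The main care-point is the parity/filtration bookkeeping behind the free-generator identification: once the $+1$ shift for $\delta_W$ is tracked consistently across evaluation and the centrality of $\RR$ is used to move scalars past Clifford elements, everything else is routine.
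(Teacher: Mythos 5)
Your proposal is correct and takes essentially the same route as the paper: both arguments use the Leibniz rules to identify $\UC$ with the direct sum
\[
(\Cl \otimes \Der(\RR)) \oplus \Hom_\RR(W,\Cl') \oplus \Hom_\RR(\Omega,\Cl\otimes\Omega),
\]
which you express in coordinates by evaluating at $\partial_\mu$, $\theta_i$, and the volume element. The rank bookkeeping $\rank_\RR\Cl^{\lhd 0,\ldots,4}=1,4,7,8,8$ and the resulting $\rank\UC^{\lhd k}$ match the paper exactly. One point worth making explicit where you gloss: your free module has $5$ even-placed and $4$ odd-placed generators, and to conclude $\UC\simeq\Cl^9$ with the standard $\Z_2$-grading you still need the isomorphism $\Cl'\simeq\Cl$ of unfiltered $\Z_2$-graded $\Cl$-modules (for instance via right-multiplication by the odd invertible element $\theta_0$). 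The paper singles this out precisely because the analogous statement for $\ww$ is false, which is why $\U=\Gr\UC$ is not free over $\ww$; flagging this would tighten your last step.
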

\begin{proof}
    Let $\Cl'$ be the space $\Cl$ with opposite $\Z_2$-grading,
    without filtration,
  then we have $\Cl' \simeq \Cl$ as unfiltered $\Z_2$-graded $\Cl$-modules;
  this statement fails for $\ww$.
  One shows that
  $\UC \simeq (\Cl \otimes \Der(\RR)) \oplus
  \Hom_\RR(W,\Cl')
  \oplus
  \Hom_\RR(\Omega,\Cl \otimes \Omega)$,
using a basis for $W$ and the Leibniz rules defining $\UC$.
\qed\end{proof}
\begin{lemma}[The morphism $f$]
 Set $\LC = \Cl \otimes \CDerEnd(W) \in \obj(\Mf)$.
 In $\Mg$ we have a canonical $\Gr \LC \simeq \L = \ww \otimes \CDerEnd(W)$.
 There is a morphism in $\Mf$ given by
 \[
        f : \LC \to \UC
 \qquad
 \omega \delta \mapsto \delta_\RR \oplus \delta_W \oplus \delta_\Omega
 \]
 where $\delta_\RR(f) = \omega \delta(f)$
 and $\delta_W(w) = \omega \delta(w)$
 and $\delta_\Omega(\eta) = \omega \otimes \delta(\eta)$.
 Then:
 \begin{itemize}
   \item The morphism $f$ is surjective,
     and so is $\LC^{\lhd k} \to \UC^{\lhd k}$ for $k=2,3,4$.
   \item Set $\IC = \ker f$ and $\EC = \LC/\IC$.
     Then $\EC \simeq \UC$ as unfiltered $\Z_2$-graded $\Cl$-modules. 
   \item $\Gr f : \L \to \U$
     is given by syntactically replacing $\Cl$ by $\ww$.
     It is not surjective.
  \end{itemize}
\end{lemma}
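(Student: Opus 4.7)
The plan is to handle the three items in order, with the surjectivity of $f$ being the substantive step and the others following cleanly. First, I would verify that $f$ is well-defined as a morphism in $\Mf$: the triple $(\delta_\RR, \delta_W, \delta_\Omega)$ assigned to $\omega\delta$ satisfies the three Leibniz rules of $\UC$ directly from the module-derivation property of $\delta \in \CDerEnd(W)$, with the action on $\wedge^4 W$ the canonical extension along $W \subset \ww$. Filtration and $\Z_2$-grading are respected because $\sigma \in \End_\RR(W) \cap \CDerEnd(W)$ shifts Clifford degree by exactly $1$, while $\Der(\RR) \subset \CDerEnd(W)$ shifts by $0$.

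For the surjectivity of $f$, I would fix a conformally orthonormal basis $\theta_0,\ldots,\theta_3$ and use the decomposition $\CDerEnd(W) \simeq \Der(\RR) \oplus \RR\sigma_0 \oplus \RR\sigma_1 \oplus \RR\sigma_2 \oplus \RR\sigma_3 \oplus \RR\sigma_{12} \oplus \RR\sigma_{23} \oplus \RR\sigma_{31}$ from \defref{uzt}, giving $11$ generators of $\LC$ as a $\Cl$-module. Against the isomorphism $\UC \simeq (\Cl \otimes \Der(\RR)) \oplus \Hom_\RR(W,\Cl') \oplus \Hom_\RR(\wedge^4 W,\Cl \otimes \wedge^4 W)$ from the proof of \lemmaref{ucl}, I would verify surjectivity onto each of the three summands of ranks $4,4,1$: the $1\otimes\delta_j$ surject onto the first, since $\delta_j \in \Der(\RR)$ annihilates the basis $\theta_k$; the element $\tfrac{1}{4}\sigma_0$, for which $\sigma_0$ acts as multiplication by $4$ on $\wedge^4 W$, surjects with $\Cl$ coefficients onto the third; and for the middle summand, the six remaining generators $\sigma_1,\sigma_2,\sigma_3,\sigma_{12},\sigma_{23},\sigma_{31}$, together with further $\Cl$-multiples of $\sigma_0$ that correct for the fourth-exterior-power contribution, generate $\Hom_\RR(W,\Cl')$, a claim reducible to a linear-algebraic exercise in $\Cl^4$ settled by the invertibility of each $\theta_j$ in $\Cl$ (via $\theta_j^2 = \pm 1$). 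For the filtered statement at $k = 2, 3, 4$, the same preimages work since $\sigma(\theta_j) \in W$ has Clifford degree $1$, so coefficients in $\Cl^{\lhd k}$ produce values in $\Cl^{\lhd k+1}$ as required by $\UC^{\lhd k}$; the local $\Cl$-rank counts ($11, 44, 77, 88, 88$ for $\LC^{\lhd k}$ versus $21, 48, 67, 72, 72$ for $\UC^{\lhd k}$) show no dimensional obstruction at $k \geq 2$ and explain the failure at $k = 0, 1$.

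With $f$ thus shown surjective in $\Mf$, the second bullet follows from the first isomorphism theorem: $\EC = \LC/\ker f \xrightarrow{\sim} \UC$ as unfiltered $\Z_2$-graded $\Cl$-modules. For the third bullet, I would apply $\Gr$: since $\Gr\Cl = \ww$ as $\RR$-algebras, the formulas defining $f$ (polynomial in the Clifford coefficient $\omega$, linear in the action of $\delta$) pass cleanly through the associated-graded functor to produce $\Gr f: \L \to \U$ exactly as the syntactic replacement of $\Cl$ by $\ww$. Non-surjectivity is then visible already in graded degree $0$, where $(\Gr f)^0$ is literally the restriction $f: \LC^{\lhd 0} \to \UC^{\lhd 0}$: the source $\L^0 = \CDerEnd(W)$ has $\RR$-rank $11$, whereas $\U^0$, which parametrizes triples $(\delta_\RR, \delta_W, \delta_\Omega) \in \Der(\RR) \oplus \Hom_\R(W,W) \oplus \Hom_\R(\wedge^4 W,\wedge^4 W)$ satisfying the three Leibniz rules, has $\RR$-rank $4 + 16 + 1 = 21 > 11$. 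The step I expect to be the main obstacle is the surjectivity onto $\Hom_\RR(W,\Cl')$ in bullet 1: while the rank counts are consistent, carrying out the explicit $\Cl^4$ inversion cleanly may require passing to the complexified Clifford algebra $\Cl \otimes_\R \C \simeq M_4(\C)$ and reducing to a matrix statement, or exploiting the $\so(W)$-isotypic structure to match types on both sides.
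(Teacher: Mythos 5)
Your proposal takes essentially the same route as the paper: both reduce the surjectivity of $f$ to the claim that $\Cl^{\lhd k} \otimes \so(W) \to \Hom_\RR(W,\Cl^{\lhd k+1})$ is surjective for $k=2,3,4$ (your ``middle summand'' is exactly the $\so(W)$ piece), and both defer that to a direct calculation supported by rank counts which are equivalent to the paper's after subtracting the shared $\Der(\RR)$ and $\sigma_0$ contributions. The remaining bullets — the first isomorphism theorem for $\EC \simeq \UC$ and the degree-zero rank mismatch $11 < 21$ showing $\Gr f$ is not surjective — match the paper's own observations.
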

\begin{proof}
  The surjectivity claim reduces to checking surjectivity for
  $k=2,3,4$ for the $\RR$-linear map
  $\Cl^{\lhd k} \otimes \so(W) \to \Hom_{\RR}(W,\Cl^{\lhd k+1})$,
  $\omega \delta \mapsto (w \mapsto \omega \delta(w))$
  which is by direct calculation;
  the ranks
  are $6,24,42,48,48$ on the left and $16,28,32,32,32$
  on the right for respectively $k=0,1,2,3,4$.
  Beware that $\EC \simeq \UC$ is not in $\Mf$,
    in fact $\UC \to \EC$ is not filtered
    since say $f: \LC^{\lhd 0} \to \UC^{\lhd 0}$ is not surjective.
  Note that $\Gr f$ is as claimed only because the filtration
  of $\UC$ is set up correctly.
\qed\end{proof}
\begin{lemma}[Properties of $\IC$] We have:
  \begin{itemize}
\item $\IC$ is a direct summand of $\LC$
  in $\Mf$, 
  free unfiltered $\Cl$-module
  of $\RR$-rank 32.
    \item $\IC^{\leq 1} = 0$
  and the elements \eqref{eq:mi2},
  now interpreted in $\LC^{\lhd 2}$, are a $\RR$-basis of $\IC^{\lhd 2}$.
\item $\IC = \Cl \IC^{\lhd 2}$,
  and $\rank \IC^{\lhd 0,1,2,3,4} = 0,0,10,16,16$.
\item 
  We have a split
  short exact sequence
  $0 \to \IC \to \LC \to \EC \to 0$ 
  in $\Mf$.
  \end{itemize}
\end{lemma}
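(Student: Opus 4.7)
The plan is to handle the four bullets in tandem, driving everything from (a) the rank--nullity relation for $f: \LC^{\lhd k} \to \UC^{\lhd k}$ together with the surjectivity proved in the previous lemma, and (b) semisimplicity of the real Clifford algebra. Note that $\IC = \ker f$ is automatically a filtered $\Cl$-submodule of $\LC$. To prove $\IC^{\leq 1} = 0$, I would check injectivity of $f$ in low filtration directly: on $\LC^{\lhd 0} = \CDerEnd(W)$ this reduces to the fact that a module derivation of the faithful $\RR$-module $W$ is determined by its action on $\RR$ and $W$; on $\LC^{\lhd 1}$, writing an element as $\sum_i \theta_i \otimes \delta_i$ in a conformally orthonormal basis $\theta_0,\ldots,\theta_3$ and successively applying $f$ to $\RR$, then to $\wedge^4 W$ (where only the dilation $\sigma_0$ acts nontrivially), then to $W$ (splitting the image in $\Cl^{\lhd 2}$ into its degree-$0$ and degree-$2$ parts via the Clifford relations) forces $\delta_i = 0$.

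For the ranks, surjectivity of $f$ at $k \geq 2$ gives $\rank_\RR \IC^{\lhd k} = \rank_\RR \LC^{\lhd k} - \rank_\RR \UC^{\lhd k}$, and since $\CDerEnd(W)$ is $\RR$-free of rank $4 + 7 = 11$ this evaluates to $(10, 16, 16)$. The elements in \eqref{eq:mi2} visibly lie in $\IC^{\lhd 2}$: they have no $\Der(\RR)$-part; they annihilate $W$ by the tracelessness and symmetry of $S$; and their action on $\wedge^4 W$ is trivial since they contain only boosts and rotations. They span a $10$-real-dimensional family (the $\R$-dimension of traceless symmetric $3\times 3$ complex matrices), matching the rank, so they form a basis. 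For $\IC = \Cl\,\IC^{\lhd 2}$ the inclusion $\Cl\,\IC^{\lhd 2} \subset \IC$ is immediate; equality is obtained by matching the filtered ranks $10, 16, 16$ via an explicit fiberwise Clifford computation at filtrations $3$ and $4$, or equivalently by passing to the associated graded and recognizing the $\so(W)$-isotypic components identified in Section~\ref{sec:glae}.

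For the unfiltered freeness, $\LC \simeq \Cl^{11}$ and $\EC \simeq \UC \simeq \Cl^9$ as unfiltered $\Z_2$-graded $\Cl$-modules (the latter by the previous lemma). Semisimplicity of the nondegenerate real Clifford algebra as a $\Z_2$-graded $\R$-algebra makes $\EC$ projective and the sequence splits unfilteredly, forcing $\IC \simeq \Cl^2$ of $\RR$-rank $32$. The hard part is to promote this to a splitting in $\Mf$. For that I would construct a filtered complement by choosing an $\RR$-linear complement of $\IC^{\lhd 2}$ inside $\LC^{\lhd 2}$, extending by $\Cl$-closure, and using $\IC^{\leq 1} = 0$ together with the structure theory of filtered $\Z_2$-graded Clifford modules alluded to at the start of this section to verify the correct filtered ranks and triviality of the intersection with $\IC$; this reduces the filtered splitting to a finite-dimensional fiberwise question and is the main obstacle.
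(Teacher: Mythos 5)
Your proposal is essentially correct but takes a genuinely different route on the two nontrivial points, so a comparison is in order.

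For the unfiltered freeness of $\IC$, the paper reaches into the later section and invokes \theoremref{pf} (the $T$-operator criterion for free $\Z_2$-graded $\Cl_{13}$-modules). You instead argue via semisimplicity of the $\Z_2$-graded Clifford category: since $\LC \simeq \Cl^{11}$ and $\EC \simeq \UC \simeq \Cl^9$ as unfiltered $\Z_2$-graded $\Cl$-modules, an (unfiltered) splitting of $0\to\IC\to\LC\to\EC\to 0$ plus Krull--Schmidt forces $\IC \simeq \Cl^2$. That is a valid alternative and has the virtue of avoiding the forward reference; you should make the Krull--Schmidt step explicit, since over the $\Z_2$-graded $\Cl_{13}$ category there are two inequivalent graded simples and rank $32$ alone would not force freeness --- it is the freeness of $\LC$ and $\EC$ together with uniqueness of decomposition that does. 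For $\IC = \Cl\IC^{\lhd 2}$, the paper has a slicker argument than yours: it notes that both $\IC$ and $\Cl\IC^{\lhd 2}$ are free $\Cl$-modules (again by \theoremref{pf}), hence have rank a multiple of $16$ split evenly $8{+}8$ across parities; but the even-concentrated $\IC^{\lhd 2}$ already has rank $10 > 8$, which forces $\Cl\IC^{\lhd 2}$ to have full rank $32$ and therefore equal $\IC$. Your proposed route (matching filtered ranks at $k=3,4$ by fiberwise computation, or comparing associated gradeds) would work but costs a further computation that the $10>8$ trick makes unnecessary. Finally, you lay out plausible strategies for the two verifications the paper omits (the direct injectivity check giving $\IC^{\leq 1}=0$, and the split in $\Mf$); the first outline looks sound, and for the second you rightly flag it as the main remaining obstacle --- the paper's own proof simply does not address that bullet.
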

\begin{proof}
  Rank $32$ since $f$ is surjective.
  By $\rank \LC^{\lhd 0,1,2,3,4}
  = 11,44,77,88,88$
  we get
  $\rank \IC^{\lhd 2,3,4}
  = 10,16,16$.
  It suffices to check $\rank \IC^{\lhd 0,1} = 0,0$
  which we omit.
  In the third claim,
  both sides are free $\Cl$-modules by
  \theoremref{pf}, so their ranks are multiples of 16,
  evenly distributed on even and odd parts.
  Inclusion $\supset$ is clear,
  and $\subset$ follows from $\rank \IC^{\lhd 2} = 10 > 8$.
\qed\end{proof} 
\begin{theorem}[Associated gradeds and new definition of $\I$] \label{theorem:nd}
  In $\Mg$
  the associated graded $\Gr \IC$ is a direct summand of $\Gr \LC$.
  In $\Mg$ we have a canonical isomorphism
  $i : \Gr \LC \to \L = (\ww) \otimes \CDerEnd(W)$.
  Define afresh, $\I = i(\Gr \IC)$. Then
  \begin{itemize}
    \item $\I^0 = \I^1 = 0$
  and $\I^2$ has $\RR$-basis \eqref{eq:mi2},
  and $\I = (\ww) \I^2$.
\item $\I$ is contained in the kernel of
  the anchor map $\L \to \Der(\ww)$ of $\L$.
\item $[\L,\I] \subset \I$.
  \end{itemize}
  Define afresh, $\E = \L/\I \simeq \Gr \LC/\Gr \IC \simeq \Gr \EC$.
  It is a $\ww$-gLaoid.
\end{theorem}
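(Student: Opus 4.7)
The plan is to deduce each clause of the theorem from the filtered Clifford-module results of the preceding lemmas, using functoriality of $\Gr: \Mf \to \Mg$ together with direct inspection of the explicit generators \eqref{eq:mi2}. First I would set up the associated graded and the canonical isomorphism $i$. Since $\IC \to \LC$ is a split monomorphism in $\Mf$ and $\Gr$ carries filtered left-inverses to graded left-inverses, $\Gr \IC$ is a direct summand of $\Gr \LC$ in $\Mg$. Because $\LC = \Cl \otimes \CDerEnd(W)$ is filtered through the $\Cl$-factor with $\CDerEnd(W)$ placed in degree zero, one has $\Gr \LC = (\Gr \Cl) \otimes \CDerEnd(W) = (\ww) \otimes \CDerEnd(W) = \L$, and this equality is the canonical $i$.

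Next, every item in the list transfers along $i$ from a previously established statement about $\IC$. The vanishing $\I^0 = \I^1 = 0$ is immediate from $\IC^{\leq 1} = 0$. For the basis of $\I^2$, note that $\IC^{\lhd 0} = 0$ forces $\IC^{\lhd 2} \to (\Gr \IC)^2$ to be an isomorphism; the generators \eqref{eq:mi2} interpreted in $\LC^{\lhd 2}$ have associated graded equal to the same formulas with Clifford products replaced by wedge products, which is precisely how \eqref{eq:mi2} was originally written in $\L^2$. The generation statement $\I = (\ww)\I^2$ is the image under $\Gr$ of the relation $\IC = \Cl \cdot \IC^{\lhd 2}$. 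The anchor condition follows by direct inspection, which was already used in proving the earlier basis lemma: each generator in \eqref{eq:mi2} annihilates both $\RR$ and $W$, hence lies in $\mx = \ker(\L \to \Der(\ww))$; since the anchor is $\ww$-linear, $\mx$ is a $\ww$-submodule, and $\I = (\ww)\I^2 \subset \mx$ follows.

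The one clause for which the pure deformation perspective does not by itself suffice is $[\L, \I] \subset \I$, because $\LC$ has been equipped only with a $\Cl$-module structure and no bracket in which $\IC$ is literally an ideal. For this step I would re-run the isotypic argument from the earlier lemma that first introduced $\I$: having recovered the explicit basis, one recognizes $\I^2$ as the full $\so(W)$-isotypic component $(2,0)\oplus(0,2)$ of $\mx^2$, so $[\L^0, \I^2] \subset \I^2$ by \lemmaref{sowi}, and then $[\L, \I] \subset \I$ follows from the $\ww$-gLaoid axioms combined with $\I = (\ww)\I^2 \subset \mx$. The quotient $\E = \L/\I$ is then a $\ww$-gLaoid by the general formalism, and the identifications $\E \simeq \Gr \LC / \Gr \IC \simeq \Gr \EC$ come from applying $\Gr$ to the split short exact sequence $0 \to \IC \to \LC \to \EC \to 0$ of the previous lemma. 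The main point that requires care is the bookkeeping identification of the associated graded of the chosen Clifford lift of \eqref{eq:mi2} with its wedge-product version, since this is what glues the new Clifford-based definition of $\I$ to the one originally given in \secref{glae}.
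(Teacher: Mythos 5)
Your proof is correct and follows essentially the same route as the paper's (which is quite terse): pass through $\Gr$ of the split exact sequence, read off the basis of $\I^2$ from $\IC^{\lhd 2}$, and verify the ideal property via the isotypic component. The one small difference is the anchor condition: the paper deduces $\I \subset \mx$ from semicontinuity applied to $\ker(\Gr f)$, whereas you check directly that the generators \eqref{eq:mi2} annihilate $\RR$ and $W$ and then extend using $\ww$-linearity of the anchor and $\I = (\ww)\I^2$; both arguments are valid.
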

\begin{proof}
  The isomorphism $i$ is induced from
  the isomorphism $\Gr \Cl \to \ww$.
  By construction and semicontinuity,
  the newly defined $\I$ is contained in the kernel of
  $\Gr f$ hence in the kernel of the 
    anchor map $\L \to \Der(\ww)$.
    It is not difficult to see that
    $[\L^0,\I^2] \subset \I^2$.
    Together it follows that $[\L,\I] \subset \I$.
    The rest is omitted.
\qed\end{proof}
The definition of $\I$ in \theoremref{nd} matches
the old one, in \secref{glae}. The point of the new definition
is that one can prove all the main properties independently.

\begin{theorem}[Freeness as unfiltered Clifford modules] \label{theorem:fr}
There are $\RR$-submodules $A \subset \LC^{\lhd 0}$ and
$B \subset \IC^{\lhd 2} \subset \LC^{\lhd 2}$,
free
of ranks $9$ and $2$ respectively, 
such that 
\[
\LC \simeq \Cl \otimes (A \oplus B)
\qquad
\IC \simeq \Cl \otimes B
\qquad
\EC \simeq \Cl \otimes A
\]
as $\Z_2$-graded $\Cl$-modules (not necessarily in $\Mf$)
with isomorphism
$\omega x \mapsfrom \omega \otimes x$.
\end{theorem}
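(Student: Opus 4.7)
The plan is to first deduce freeness of $\IC$ as a $\Z_2$-graded $\Cl$-module, then exhibit $A$ and $B$ at the prescribed filtration levels, then assemble. The preceding lemma provides an $\Mf$-splitting of $0 \to \IC \to \LC \to \EC \to 0$, hence $\LC \simeq \IC \oplus \EC$ as $\Z_2$-graded $\Cl$-modules. Since $\LC = \Cl \otimes \CDerEnd(W)$ is manifestly free of $\Cl$-rank $11$, and $\EC \simeq \UC$ is free of rank $9$ by \lemmaref{ucl}, cancellation yields $\IC$ free of $\Cl$-rank $2$. Here I use that fiberwise $\Cl(1,3) \simeq M_2(\HQ)$ is $\Z_2$-graded central simple with unique irreducible graded module $\Cl$ itself, so every finitely generated $\Z_2$-graded $\Cl$-module is free with rank determined by $\RR$-rank; the global picture follows by contractibility of $M \simeq \R^4$, which trivializes all bundles in sight relative to a global conformally orthonormal basis.

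For $A$, I would take
\[
  A \;=\; \RR\langle \partial_0,\partial_1,\partial_2,\partial_3,\sigma_0,\sigma_1,\sigma_2,\sigma_3,\sigma_{23}\rangle \;\subset\; \LC^{\lhd 0}
\]
of $\RR$-rank $9$. The two elements of \eqref{eq:mi2} with $S = \diag(1,-1,0)$ and $S = \diag(0,1,-1)$ read $\theta_0\theta_1\sigma_1 - \theta_2\theta_3\sigma_{23} - \theta_0\theta_2\sigma_2 + \theta_3\theta_1\sigma_{31}$ and $\theta_0\theta_2\sigma_2 - \theta_3\theta_1\sigma_{31} - \theta_0\theta_3\sigma_3 + \theta_1\theta_2\sigma_{12}$, and both lie in $\IC$. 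Since $\theta_3\theta_1$ and $\theta_1\theta_2$ each square to $-1$, hence are Clifford-invertible, these relations let me solve first for $\sigma_{31}$ modulo $\IC$ in terms of $\sigma_1,\sigma_2,\sigma_{23} \in A$, and then for $\sigma_{12}$ using the just-obtained expression for $\sigma_{31}$, placing both in $\Cl \cdot A + \IC$. As $\CDerEnd(W)$ $\Cl$-generates $\LC$, this gives $\Cl\cdot A + \IC = \LC$; by rank count $\Cl\cdot A \cap \IC = 0$, so $\Cl \otimes A \to \EC$ is an isomorphism.

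For $B$, the preceding lemma gives $\IC = \Cl \cdot \IC^{\lhd 2}$, so $\IC^{\lhd 2}$ (of $\RR$-rank $10$) $\Cl$-generates the free rank-$2$ module $\IC$. By the standard structure theory for modules over the semisimple ring $\Cl$, any generating set of a free rank-$n$ module contains, after $\R$-linear combinations, an $n$-element basis. Thus there is an $\RR$-submodule $B \subset \IC^{\lhd 2}$ of rank $2$ making $\Cl \otimes B \to \IC$ an isomorphism (the condition is open in the Grassmannian of $2$-planes in $\IC^{\lhd 2}$, non-empty by the generation property, and a choice at one point extends globally by triviality over contractible $M$). Finally, the map $\Cl \otimes (A \oplus B) \to \LC$, $\omega \otimes x \mapsto \omega x$, is $\RR$-linear between free $\RR$-modules of equal rank $176$; its injectivity follows by projecting any kernel element to $\EC$ (forcing the $A$-part to vanish via $\Cl \otimes A \simeq \EC$) and then reading the residual $B$-part as a kernel of $\Cl \otimes B \to \IC$. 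A rank count promotes injectivity to bijectivity. I expect the main obstacle to be the Nakayama-style step for $B$, where graded $\Cl$-module structure theory is needed to extract a $2$-element basis from the $10$-element generating set sitting inside $\IC^{\lhd 2}$.
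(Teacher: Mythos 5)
Your strategy---use the $\Mf$-split of $0 \to \IC \to \LC \to \EC \to 0$, do rank bookkeeping, take the explicit $A$, and search a Grassmannian for $B$---is in the spirit of the paper's first suggested route (via $f$ and Lemma~\ref{lemma:ucl}), and your $A$ is exactly the paper's. But the first step contains a genuine error. You assert that every finitely generated $\Z_2$-graded $\Cl$-module is free because ``$\Cl$ is the unique irreducible graded module.'' This is false and directly contradicts the paper's own discussion around Theorem~\ref{theorem:pf}, which states that not all $\Z_2$-graded $\Cl_{13}$-modules are free and that $\Cl_{13}$ decomposes over itself into two proper $\Z_2$-graded submodules; concretely there is a unique graded simple $S$ of real dimension $8$ (via the equivalence with $\Cl_{13}^{\even}\simeq M_2(\C)$-modules through the odd unit $e_0$), and $\Cl_{13}\simeq 2S$, so a graded module $nS$ is free only for $n$ even. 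Your conclusion that $\IC$ is free of $\Cl$-rank $2$ survives, because the graded module category is semisimple and Krull--Schmidt gives cancellation: $\LC\simeq 22S$, $\EC\simeq 18S$, hence $\IC\simeq 4S\simeq \Cl^2$. But the justification you wrote is not one you can cite.

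The same unfounded premise also infects the $B$-step, which you rightly flag as the hard part. The statement ``any generating set of a free rank-$n$ module over a semisimple ring contains, after $\R$-linear combinations, an $n$-element basis'' is not a standard fact and is left unproved; the Grassmannian of $2$-planes in $\IC^{\lhd 2}$ meeting the basis condition is indeed open, but non-emptiness is the whole content and does not follow just from ``generation'' plus $10>2$. The paper's second route via Theorem~\ref{theorem:pf} is the cleaner tool: the even involution $T$ intertwining $P$ is present on $\LC$, $\IC$, $\EC$ and is respected by $f$, so freeness of each (and hence the existence of a basis of $\IC$ inside $\IC^{\lhd 2}$, using $\IC=\Cl\,\IC^{\lhd 2}$ and the split in $\Mf$) follows directly. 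Your middle portion---checking that the $S=\diag(1,-1,0)$ and $S=\diag(0,1,-1)$ relations from \eqref{eq:mi2}, with the invertibility of $\theta_3\theta_1$ and $\theta_1\theta_2$, eliminate $\sigma_{31}$ and $\sigma_{12}$ modulo $\IC$ and give $\Cl A+\IC=\LC$---is a correct and useful concrete verification that the paper leaves implicit.
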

\begin{proof}
  Use the morphism $f$ and \lemmaref{ucl}, or use 
  \theoremref{pf},
  Explicitly, one can take
  $A = \Der(\RR) \oplus
  \SPAN_{\RR} \{\sigma_0,\sigma_1,\sigma_2,\sigma_3,\sigma_{23}\}$.
\qed\end{proof}
The freeness of $\EC$ is exploited in \secref{gauges1}.
Beware that $\E$ is not free over $\ww$, indeed a free $X$
must necessarily satisfy $(\wedge^4 W)X \simeq X/WX$,
whereas $\E^4 \not\simeq \E^0$.

As algebras $\Gr \Cl = \ww$,
but we have so far consciously suppressed the well-known fact that as $\RR$-modules
there is even a canonical $\Cl \simeq \ww$,
\lemmaref{cis}.
Hence $\Cl$ acquires a module $\Z$-grading,
and if $\Cl_1$ and $\Cl_2$ are defined
using two representatives of the conformal inner product,
then there is still a canonical $\RR$-module isomorphism
$\Cl_1 \simeq \ww \simeq \Cl_2$.
\begin{lemma} \label{lemma:cis}
A $\Cl$-module structure on $\ww$
is induced by $W \to \End(\ww)$, $w \mapsto e_w + i_w = c_w$ where 
$e_w \in \End^1(\ww)$ is multiplication by $w \in W$ and
$i_w \in \End^{-1}(\wedge W)$ is defined by
$i_we_v + e_vi_w = -\ip{v}{w}$ for all $v \in W$.
As $\Cl$-modules, $\Cl \simeq \ww$.
\end{lemma}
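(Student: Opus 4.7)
The plan is threefold: first construct $i_w$ explicitly as an odd derivation of $\ww$ so that the defining anticommutator relation is verified; second check that $c_w$ satisfies the Clifford relation and invoke the universal property of $\Cl$; third deduce $\Cl \simeq \ww$ via the cyclic vector $1 \in \RR$ and a filtration comparison.

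First, I would define $i_w \in \End^{-1}(\ww)$ as the unique odd derivation of the graded $\RR$-algebra $\ww$ whose restriction to $\RR$ is zero and whose restriction to $W$ is the $\RR$-linear functional $v \mapsto -\ip{v}{w}$; existence and uniqueness follow from the standard presentation of $\ww$ by generators $\RR \oplus W$ subject to the exterior algebra relations. To verify $i_we_v + e_vi_w = -\ip{v}{w}\,\mathbbm{1}$ as endomorphisms of $\ww$, I observe that both $e_v$ and $i_w$ are odd derivations, so their graded anticommutator is an ordinary derivation of degree $0$; any such derivation is determined by its values on the generators $\RR$ and $W$, and on those two subspaces one checks by hand that it agrees with scalar multiplication by $-\ip{v}{w}$.

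Next, I compute
\[
  c_wc_v + c_vc_w = (e_we_v + e_ve_w) + (i_wi_v + i_vi_w) + (e_wi_v + i_ve_w) + (i_we_v + e_vi_w).
\]
The first bracket vanishes by antisymmetry of the wedge. The second is an ordinary degree-$(-2)$ derivation; it annihilates $\RR$ and $W$ (on $v' \in W$ the element $i_w v' \in \RR$ is killed by $i_v$, and symmetrically), so it vanishes on all of $\ww$. The last two brackets each equal $-\ip{v}{w}\,\mathbbm{1}$ by the preceding step together with symmetry of $\ip{-}{-}$, summing to $-2\ip{v}{w}\,\mathbbm{1}$. This matches the defining relation $vw + wv = -2\ip{v}{w}$ of $\Cl$, so by its universal property $w \mapsto c_w$ extends uniquely to an $\RR$-algebra map $\Cl \to \End(\ww)$, making $\ww$ into a $\Cl$-module.

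Finally, consider the $\Cl$-linear map $\phi : \Cl \to \ww$, $a \mapsto a \cdot 1$. Since $i_w(1)=0$, for $w_1,\ldots,w_k \in W$ one has $\phi(w_1 \cdots w_k) = w_1 \wedge \cdots \wedge w_k$ plus terms in $\bigoplus_{j<k} \wedge^j W$ obtained by commuting $i$'s past $e$'s. Hence $\phi$ respects the Clifford filtration on the source and the degree filtration on the target, and $\Gr \phi$ is the canonical isomorphism $\Gr \Cl \simeq \ww$ recorded earlier in the paper. Since both modules are free of rank $16$ over $\RR$ with bounded filtration, $\phi$ is an isomorphism of $\RR$-modules, hence of $\Cl$-modules. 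The step that most requires care is the graded-derivation bookkeeping in the first two paragraphs, which is what allows reducing identities on $\ww$ to checks on the generating subspaces $\RR$ and $W$.
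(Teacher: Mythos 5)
Your proof is more detailed than the paper's one‑line argument (which simply asserts $e_we_v+e_ve_w=i_wi_v+i_vi_w=0$ and concludes, then gives the basis identification as a remark), and the overall strategy --- verify the Clifford relation term by term, invoke the universal property, then compare filtrations to get $\Cl\simeq\ww$ --- is sound and essentially the paper's.

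However, your first paragraph contains a genuine local error: $e_v$ is \emph{not} an odd derivation of $\ww$. For homogeneous $a,b$ one has $e_v(ab)=vab$, whereas the Leibniz rule would give $e_v(a)b+(-1)^{|a|}ae_v(b)=vab+(-1)^{|a|}avb=2vab$, since $av=(-1)^{|a|}va$. Consequently the anticommutator $D=i_we_v+e_vi_w$ is not \emph{a priori} a derivation, and neither is the target operator $-\ip{v}{w}\mathbbm{1}$, so "check on the generating subspaces and extend" is not justified as stated. The fix is immediate and uses only that $i_w$ is an odd derivation: for any $a\in\ww$,
\[
(i_we_v+e_vi_w)(a)=i_w(va)+v\,i_w(a)=\big(i_w(v)a-v\,i_w(a)\big)+v\,i_w(a)=-\ip{v}{w}\,a,
\]
which proves the relation directly with no reduction to generators. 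Your treatment of $i_wi_v+i_vi_w$ (genuinely an anticommutator of two odd derivations, hence a degree $-2$ derivation vanishing on generators) and of the cyclic‑vector/filtration argument for $\Cl\simeq\ww$ are both correct.
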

\begin{proof}
We have $e_we_v + e_ve_w = i_wi_v + i_vi_w = 0$,
hence $c_wc_v + c_vc_w + 2\ip{v}{w} = 0$.
\qed\end{proof}
In an orthonormal basis, the identification is
$\theta_{i_1} \cdots \theta_{i_k} 
\mapsto \theta_{i_1} \wedge \cdots \wedge \theta_{i_k}$
for $i_1 < \ldots < i_k$.


\section{Gauges, definition and construction} \label{sec:gauges1}

We start with a purely algebraic definition of a gauge, \defref{gauge}.
These are comprehensive gauges in all degrees of $\E$,
suitable for homology, and 
designed for compatibility with the PDE concept of symmetric hyperbolicity,
see \secref{shs}.

To show that gauges as in \defref{gauge} actually exist,
we use the Clifford module $\EC$ from \secref{cliffmod}.
This entire section only depends on the fact
that $\EC$ is a filtered $\Z_2$-graded $\Cl$-module,
meaning $\EC$ is in $\Mf$,
free as an unfiltered $\Z_2$-graded $\Cl$-module,
and $\E = \Gr \EC$ as graded $\ww$-modules, so in $\Mg$.
As before, $\Cl = \Cl(W)$.
When using Clifford modules we implicitly use
a representative $\ip{-}{-}$ of the conformal inner product,
but the dependence on it is completely minor;
we will not dwell on this.
The account given here is a consolidated one,
based on \cite{twelve,glagr}.

Let $W_+ \subset W$ be the nonempty set of all
elements that are everywhere future timelike,
this requires the choice of a time direction.
For example, using a conformally orthonormal basis,
$W_+ = \{ \sum_i w_i\theta_i \mid w_0 > (w_1^2 + w_2^2 + w_3^2)^{1/2}
          \; \text{in $\RR$} \}$.
Set $\Hom = \Hom_\RR$; continue to set $\otimes = \otimes_\RR$;
and let $S^2$ be the symmetric tensor product over $\RR$.
\begin{definition}[Gauge] \label{def:gauge}
  A gauge is a pair $(\E_G,B)$.
  A graded finite free $\RR$-submodule $\E_G \subset \E$
  such that for every $w \in W_+$, left-multiplication
  $w: \E_G \to \E$ is injective and
  \begin{equation}\label{eq:zzu1}
    \E = \E_G \oplus w\E_G
  \end{equation}
  so necessarily $\E_G$ must have half the rank of $\E$,
  and $\E_G^0 = \E^0$ and $\E_G^4 = 0$.
  And for every $k$ an element $B^k \in \Hom(\E_G^k \otimes \E^{k+1}, \RR)$
  with:
  \begin{itemize}
    \item[(a)] $B^k(-,w-)|_{\E_G^k \times \E_G^k} \in \Hom(S^2 \E_G^k,\RR)$ for all $w \in W$, a symmetry requirement.
    \item[(b)] This is positive definite whenever $w \in W_+$.
    \item[(c)] $\E_G^{k+1} = \{ x \in \E^{k+1} \mid B^k(\E_G^k,x) = 0\}$.
  \end{itemize}
\end{definition}

We take $b \in \Hom^\odd(S^2 \EC,\RR)$ to mean
$b(\EC^\odd,\EC^\odd)=b(\EC^\even,\EC^\even)=0$.
Multiplication by $w$ 
uses the $\ww$-module structure in (a), the $\Cl$-module structure in (i).
\begin{theorem}[Sufficient linear problem] \label{theorem:sp}
  Suppose $b \in \Hom^\odd(S^2\EC,\RR)$ satisfies:
  \begin{itemize}
    \item[(i)] $b(-,w-) \in \Hom^{\even}(S^2\EC,\RR)$
          for all $w \in W$, a symmetry requirement.
      \item[(ii)] This is positive definite whenever $w \in W_+$.
  \end{itemize}
  Define
  \[
    \EC_G^k \;=\; \{ x \in \EC^{\lhd k} \mid b(x,\EC^{\lhd k-1})=0 \}
  \]
  Then for every
  $w \in W_+$,
  Clifford left-multiplication
  $w: \EC^{\lhd k-1} \to \EC^{\lhd k}$ is injective and
  \begin{equation}\label{eq:zzu}
          \EC^{\lhd k} = \EC_G^k \oplus w \EC^{\lhd k-1}
        \end{equation}
  The map $b$ induces a map
  \[
    b^k \in \Hom(\EC_G^k \otimes (\EC^{\lhd k+1} / \EC^{\lhd k-1}),\RR)
  \]
  We have
  $\EC_G^k \cap \EC^{\lhd k-2} = 0$.
  Let $\E_G^k$ be the isomorphic image of $\EC_G^k$
  under the canonical
  surjection $p^k: \EC^{\lhd k} \to \EC^{\lhd k}/\EC^{\lhd k-2} \simeq \E^k$.
  Let $B^k \in \Hom(\E_G^k \otimes \E^{k+1},\RR)$
      be the map corresponding to $b^k$.
      Then this defines a gauge as in \defref{gauge}.
\end{theorem}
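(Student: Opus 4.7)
The plan is to transfer a positive-definite orthogonality argument at the level of the $\Z_2$-graded Clifford module $\EC$ down to the associated graded $\ww$-module $\E$, and verify the gauge axioms by direct translation. Fix $w\in W_+$. Since $\ip{w}{w}$ is a nowhere-vanishing negative function in $\RR$, the element $w^2=-\ip{w}{w}$ is invertible in $\RR$, so $w$ is invertible in $\Cl$ with $w^{-1}=-w/\ip{w}{w}\in W\subset \Cl^{\leq 1}$; in particular left-multiplication $w:\EC^{\lhd k-1}\to \EC^{\lhd k}$ is injective. By (i) and (ii), the assignment $\langle u,v\rangle_w:=b(u,wv)$ is a symmetric, positive-definite $\RR$-bilinear form on each $\EC^{\lhd k}$. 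Since $\EC^{\lhd k}$ is finite free over $\RR$ and the Gram matrix of any basis is fiberwise positive-definite, hence has determinant invertible in $\RR$, this form induces an $\RR$-module isomorphism from $\EC^{\lhd k}$ to its $\RR$-dual.

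To prove $\EC^{\lhd k}=\EC_G^k\oplus w\EC^{\lhd k-1}$, I consider the map $\psi:\EC^{\lhd k}\to (\EC^{\lhd k-1})^*$, $x\mapsto b(x,-)|_{\EC^{\lhd k-1}}$, whose kernel is $\EC_G^k$ by definition. Using (i), the composition $y\mapsto b(wy,-)=\langle y,-\rangle_w$ identifies $\psi\circ w$ with the duality isomorphism of the previous paragraph, so $\psi$ is surjective and its restriction to $w\EC^{\lhd k-1}$ is an isomorphism onto $(\EC^{\lhd k-1})^*$, giving the claimed splitting. Next I would establish $\EC^{\lhd k-2}\subset w\EC^{\lhd k-1}$: any $u\in \EC^{\lhd k-2}$ can be written $u=w(w^{-1}u)$ with $w^{-1}u\in \EC^{\lhd k-1}$ by the filtration estimate $\Cl^{\leq 1}\cdot\EC^{\leq k-2}\subset \EC^{\leq k-1}$ and parity bookkeeping. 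Consequently any element of $\EC_G^k\cap \EC^{\lhd k-2}$ has the form $wy$ with $y\in \EC^{\lhd k-1}$, and $wy\in \EC_G^k$ forces $0=b(wy,y)=\langle y,y\rangle_w$, hence $y=0$ by positivity.

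Having these Clifford-level statements, the descent to $\E$ is nearly automatic. The injectivity of $p^k|_{\EC_G^k}$ shows $\E_G^k$ is finite free and canonically isomorphic to $\EC_G^k$; and $b$ restricted to $\EC_G^k\otimes \EC^{\lhd k+1}$ kills $\EC^{\lhd k-1}$ by the defining orthogonality, so descends to $b^k$ and then to $B^k$. Telescoping the Clifford splitting and using $w^2\in \RR^\times$ to identify $w^2\EC^{\lhd k-2}$ with $\EC^{\lhd k-2}$ gives
\begin{equation*}
\EC^{\lhd k}=\EC_G^k\oplus w\EC_G^{k-1}\oplus \EC^{\lhd k-2},
\end{equation*}
whose image under $p^k$ is exactly $\E_G^k\oplus w\E_G^{k-1}$, because the third summand is killed and because the $\ww$-action on $\E$ is induced from Clifford multiplication via $\Gr$. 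This yields $\E=\E_G\oplus w\E_G$ and the injectivity of $w:\E_G^k\to \E^{k+1}$. The boundary cases follow at once: $\EC^{\lhd -1}=0$ gives $\EC_G^0=\EC^{\lhd 0}$ and hence $\E_G^0=\E^0$; and any $x\in \EC^{\lhd 4}=\EC^{\even}$ with $b(x,\EC^{\odd})=0$ satisfies in particular $\langle x,x\rangle_w=b(x,wx)=0$, forcing $x=0$, so $\EC_G^4=0$ and $\E_G^4=0$.

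Finally I would verify the gauge axioms. Axioms (a) and (b) are direct translations of (i) and (ii) via the isomorphism $\E_G^k\simeq \EC_G^k$. The substantive axiom is (c): given $x\in \E^{k+1}$ with $B^k(\E_G^k,x)=0$, choose any lift $\tilde x\in \EC^{\lhd k+1}$ and decompose $\tilde x=a+w\beta+c$ with $a\in \EC_G^{k+1}$, $\beta\in \EC_G^k$, $c\in \EC^{\lhd k-1}$, using the telescoped splitting at level $k+1$. For any $\tilde y\in \EC_G^k$, the $a$-contribution $b(\tilde y,a)$ vanishes because $a\in \EC_G^{k+1}$ is $b$-orthogonal to $\EC^{\lhd k}\ni \tilde y$, and the $c$-contribution $b(\tilde y,c)$ vanishes because $\tilde y\in \EC_G^k$ is $b$-orthogonal to $\EC^{\lhd k-1}\ni c$; so the hypothesis reduces to $\langle \beta,\tilde y\rangle_w=b(\tilde y,w\beta)=0$ for all $\tilde y\in \EC_G^k$. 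Setting $\tilde y=\beta$ and invoking positivity gives $\beta=0$, so $\tilde x-a\in \EC^{\lhd k-1}$, whence $x=p^{k+1}(a)\in \E_G^{k+1}$. The most delicate step throughout is maintaining alignment between the Clifford filtration and the $\Z_2$-grading when descending to the associated graded; the essential use of $w^2\in \RR^\times$ in the telescoping step is what makes the Clifford-level decomposition pass cleanly to $\E$, and I expect this bookkeeping to be the only real obstacle.
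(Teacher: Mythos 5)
Your proof is correct and follows essentially the same strategy as the paper's: injectivity of left Clifford multiplication by $w\in W_+$ via $w^2\in\RR^\times$, the direct-sum decomposition $\EC^{\lhd k}=\EC_G^k\oplus w\EC^{\lhd k-1}$ obtained from the positive-definite pairing $b(-,w-)$, telescoping twice to get $\EC^{\lhd k}=\EC_G^k\oplus w\EC_G^{k-1}\oplus\EC^{\lhd k-2}$, and descending through $p^k$ using $p^{k+1}w=wp^k$. The only cosmetic difference is that the paper closes axiom (c) with a rank-counting argument, while you close it by explicitly decomposing a lift $\tilde x=a+w\beta+c$ and forcing $\beta=0$; both are equally valid instances of the same underlying positivity argument.
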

\begin{proof} 
  Clifford left-multiplication by $w \in W_+$ is injective
  since $w^2$ is a nonzero multiple of the identity.
  To prove \eqref{eq:zzu}
  show that the intersection of the summands vanishes using (ii)
  and make a rank argument again using (ii).
    Fix a $w \in W_+$.
  If $x \in \EC_G^k \cap \EC^{\lhd k-2}$
  then $b(x,w x) = 0$,
  so by (ii) we get $x = 0$.
  Applying \eqref{eq:zzu} twice gives \eqref{eq:zzu1}, because
  \[
    \EC^{\lhd k}
    = \EC^k_G \oplus w \EC^{k-1}_G \oplus w^2 \EC^{\lhd k-2}
  \]
 and $w^2$ is a nonzero multiple of the identity.
  Note that $p^{k+1} w = w p^k$
  as maps $\EC^{\lhd k} \to \E^{k+1}$.
  For every $x \in \EC^k$ let
  $x' = p^k x$,
  so $x \mapsto x'$ is bijective as a map
  $\EC_G^k \to \E_G^k$.
  For $x \in \EC^k_G$ and $y \in \EC^k$ we have
  $B^k(x',wy') = b^k(x,wy')
      = b^k(x,(wy)') = b(x,wy)$,
      then restrict to $y \in \EC^k_G$,
      to get (a) and (b), and (c) by a rank argument.
\qed\end{proof}
\begin{remark}\label{remark:init}
  Condition (i) would be easy to satisfy if it was only required
  for a single $w$, say for $w = \theta_0$.
  In fact, there is a bijection between:
  \begin{itemize}
    \item The set of $b' \in \Hom^\odd(S^2\EC,\RR)$
  for which $b'(-,\theta_0-) \in \Hom^\even(S^2\EC,\RR)$.
\item The set of $b'' \in \Hom(S^2\EC^\even,\RR)$.
  \end{itemize}
  Furthermore $b'(-,\theta_0-)$ is positive definite
  if and only if $b''$ is positive definite.
  The map $b' \mapsto b''$ is given by $b''(x,y) = b'(x,\theta_0y)$ for even $x,y$.
  The inverse $b'' \mapsto b'$ is given by
  $b'(x,y) = b'(y,x) = b''(x,\theta_0y)$ for even $x$, odd $y$.
  Use $(\theta_0)^2 = 1$ in $\Cl$.
\end{remark}
The following theorem can be used to construct $b$ that satisfy
(i) and that partially satisfy (ii), in a way that is still useful.
\begin{theorem}[%
  Invariant Clifford average, Clifford unitarity trick]
  The invariant Clifford averaging element
  $\pi \in S^2 \Cl$ in \theoremref{ica}
  defines a $\Pi \in \End^\even(S^2 \EC)$.
  Suppose $b'$
  is as in \remarkref{init} with $b'(-,\theta_0-)$ positive definite.
  Then
  \[
      b = b' \circ \Pi
  \]
  satisfies (i) and $b(-,\theta_0-)$ is positive definite.
  And this is a projection, in the sense
   that $b=b'$ if and only if $b'$ already satisfied (i).
\end{theorem}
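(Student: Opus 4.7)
The plan is to reduce the theorem to three structural properties of the operator $\Pi \in \End^\even(S^2 \EC)$ induced by $\pi$: idempotence $\Pi^2 = \Pi$; that the image of post-composition $b' \mapsto b' \circ \Pi$ consists exactly of forms in $\Hom^\odd(S^2 \EC,\RR)$ satisfying (i); and that this post-composition preserves positive definiteness of the $(-,\theta_0-)$-pairing. Each of these should follow from the averaging construction of $\pi$ in \theoremref{ica}, where $\pi$ arises as an average over a finite subgroup of Clifford units; the present proof would apply these properties rather than reprove them.

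Granting the projection-onto-invariants description, the first claim is immediate: since the image of $b' \mapsto b' \circ \Pi$ lies in the subspace of $w$-symmetric forms for every $w \in W$, the form $b = b' \circ \Pi$ automatically satisfies (i). The final sentence of the theorem follows at once from idempotence: if $b'$ already satisfies (i), it lies in the image of $b \mapsto b \circ \Pi$, so $b' \circ \Pi = b'$; conversely $b = b'$ forces $b'$ to lie in that image, hence to satisfy (i).

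For positive definiteness of $b(-,\theta_0-)$, I would invoke the unitarity trick. Writing $\pi = \sum_j c_j \otimes c_j'$ symmetrized, the quadratic form $x \mapsto b(x,\theta_0 x)$ becomes $\sum_j b'(c_j x, c_j' \theta_0 x)$; the invariance property of $\pi$, together with $\theta_0^2 = 1$ in $\Cl$, should allow each term to be rewritten up to a positive coefficient as a pullback of $b'(-,\theta_0-)$ by some invertible element of the Clifford group acting on $\EC$. Each such pullback is positive definite, so their sum is too. The main obstacle is the sign-bookkeeping in the $\Z_2$-graded filtered Clifford module setting: assembling a $\pi$ that simultaneously makes $\Pi$ land in the correct invariant subspace \emph{and} preserves positivity is the real content of \theoremref{ica}, and once that is in hand the arguments above render the present proof essentially formal.
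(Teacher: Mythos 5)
Your proposal is correct and matches the paper's proof: both reduce the theorem to the properties of $\pi$ established in \theoremref{ica} (idempotence for the projection claim, $\pi(x\otimes 1)=\pi(1\otimes x)$ for condition (i), and the third bullet for positivity). The one place to sharpen your write-up is the hedge ``up to a positive coefficient'': the $\chi_0(f)$-signs in $\pi$ cancel \emph{exactly} against those arising from $f\theta_0 = \chi_0(f)\,\theta_0 f$, which gives the clean, sign-free identity $b(-,\theta_0\,-) = \tfrac{1}{|F|}\sum_{f\in F} b'(f\,\cdot\,,\theta_0 f\,\cdot\,)$, an average of pullbacks of $b'(-,\theta_0-)$ by invertible elements $f\in F$, hence positive definite.
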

\begin{proof}
  We have $b \in \Hom^\odd(S^2\EC,\RR)$ since
  $\pi$ is even. Use \theoremref{ica}.
    Positivity since
    $b(-,\theta_0-) = \frac{1}{|F|} \sum_{f \in F} b'(f-,\theta_0f-)$
  is an average without signs.
\qed\end{proof}
We now
parametrize more explicitly
the space of $b$ that satisfy the assumptions of \theoremref{sp}.
These assumptions are oblivious to the filtration of $\EC$,
only its structure as a
$\Z_2$-graded $\Cl$-module counts,
so we can use the isomorphism in \theoremref{fr}.
The rank of $A$ plays a minor role in the following.

The `transpose' $x \mapsto x^T$ is the unique anti-automorphism of $\Cl$
that acts as the identity on the image of $W \inj \Cl$.
As a $\RR$-module,
$\Cl$ has a canonical $\Z$-grading by \lemmaref{cis},
the degree $k$ subspace
having basis $\{ \theta_{i_1}\cdots \theta_{i_k} \mid i_1 < \ldots < i_k\}$.
Let
\[
  \langle-\rangle_\# : \Cl \to (\C^2 \otimes_\C \C^2) \otimes \RR
\]
be the unique $\RR$-linear map that annihilates elements of even degree,
and $\langle \theta_i \rangle_\# = \sigma_i$
for $i=0\ldots 3$
and $\langle \theta_1\theta_2\theta_3 \rangle_\# = -i\sigma_0$,
$\langle \theta_0\theta_2\theta_3 \rangle_\# = -i\sigma_1$,
$\langle \theta_0\theta_3\theta_1 \rangle_\# = -i\sigma_2$ and
$\langle \theta_0\theta_1\theta_2 \rangle_\# = -i\sigma_3$
where $\sigma_i \in \Herm(\C^2) \subset \C^2 \otimes_\C \C^2$ are the Pauli matrices.
Below, $\Herm(\C^2 \otimes A)$ are the $\RR$-bilinear
Hermitian forms, antilinear in the first argument.

\begin{theorem}[Explicit construction of gauges] \label{theorem:m2h}
     We use $\EC \simeq \Cl \otimes A$ from \theoremref{fr}.
     An isomorphism of $\RR$-modules
  \[
      \Herm(\C^2 \otimes A) \to
      \left\{
        \begin{array}{l}
          b \in \Hom^\odd(S^2 \EC,\RR)\;\text{with}\\
          b(-,w-) \in \Hom^\even(S^2 \EC ,\RR)
\;\text{for all $w \in W$}
\end{array}\right\}
\]
is given by
$h \mapsto b_h$ where for all $x,x' \in \Cl$ and $a,a' \in A$:
\[
  b_h(xa,x'a') = \RE\big(h(-\otimes a,-\otimes a') (\langle x^T x' \rangle_\#)\big)
\]
If $h$ is positive definite,
then $b_h(-,w-)$ is positive definite for all $w \in W_+$.
\end{theorem}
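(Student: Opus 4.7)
The plan is to verify that $h \mapsto b_h$ lies in the stated target, that it is an $\RR$-linear bijection, and that it respects positivity. The central algebraic tool is the identity $\langle y^T\rangle_\# = \langle y\rangle_\#^\dagger$ (Hermitian conjugate in $\End_\C(\C^2)$) for all odd $y \in \Cl$, which I verify on the eight listed basis vectors by tracking signs in the Clifford transpose: $\theta_i^T = \theta_i$ matches $\sigma_i = \sigma_i^\dagger$, while $(\theta_1\theta_2\theta_3)^T = -\theta_1\theta_2\theta_3$ matches $(-i\sigma_0)^\dagger = i\sigma_0$, and similarly for the remaining degree-$3$ monomials.

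With this identity the target-space checks are routine. The formula is $\RR$-bilinear in $(x,a)$ and $(x',a')$, so $b_h$ descends to a bilinear form on $\EC = \Cl \otimes A$. Because $\langle -\rangle_\#$ annihilates $\Cl^\even$, we have $b_h(xa, x'a') = 0$ when $x, x'$ share a parity, giving $b_h \in \Hom^\odd(S^2\EC, \RR)$. The $S^2$-symmetry follows from $(x^Tx')^T = (x')^Tx$, the adjoint identity above, and Hermiticity $h(u,u') = \overline{h(u',u)}$, which together identify $b_h(x'a', xa)$ with the real part of the complex conjugate of $b_h(xa, x'a')$. The condition $b_h(-, w-) \in \Hom^\even$ for each $w \in W$ follows by the identical argument applied to $(x^Twx')^T = (x')^Twx$, using $w^T = w$.

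For bijectivity I combine injectivity with a rank count. Injectivity: the map $\langle - \rangle_\#: \Cl^\odd \to \End_\C(\C^2)\otimes \RR$ is an $\RR$-linear isomorphism, since the eight images $\sigma_\mu$ and $-i\sigma_\mu$ form an $\RR$-basis of $\End_\C(\C^2)$; thus $b_h = 0$ forces the sesquilinear form $h(-\otimes a, -\otimes a')$ on $\C^2$ to vanish for every $a, a' \in A$, hence $h = 0$. For dimensions, $\Herm(\C^2 \otimes A)$ has $\RR$-rank $(2\rank A)^2 = 4(\rank A)^2$; on the other side, using $\EC = \Cl \otimes A$, a form $b$ satisfying (i) is equivalent, via Clifford self-adjointness of multiplication by elements of $W$, to an $\RR$-linear $\phi: A \otimes \Cl^\odd \otimes A \to \RR$ with $\phi(a, y, a') = \phi(a', y^T, a)$; counting invariants of this involution on the basis of $\Cl^\odd$, using $\theta_i^T = \theta_i$ for the four degree-$1$ monomials and $y^T = -y$ for the four degree-$3$ ones, also gives rank $4(\rank A)^2$.

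For positivity, the claim reduces to showing that for $w \in W_+$ and nonzero $xa \in \EC$, the number $\RE(h(-\otimes a, -\otimes a)(\langle x^T w x\rangle_\#))$ is positive. The key fact is that $\langle x^T w x\rangle_\#$ is a positive-semidefinite Hermitian matrix in $\End_\C(\C^2)$: for $w = \theta_0$ a direct basis computation (e.g.\ $\langle \theta_i^T \theta_0 \theta_i\rangle_\# = \sigma_0$) confirms this, and for general $w = \sum w^\mu \theta_\mu \in W_+$ it reduces to the standard Pauli-matrix characterization that $\sum w^\mu \sigma_\mu$ is a positive Hermitian matrix precisely when $w$ is future timelike. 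Pairing a positive-semidefinite Hermitian matrix with the positive-definite $h(-\otimes a, -\otimes a)$ yields a nonnegative number, positive when both are nonzero, and extending to general $\eta = \sum x_i a_i \in \EC$ uses the positive-definiteness of $h$ as a Hermitian form on $\C^2 \otimes A$. The main obstacle I anticipate is precisely this positivity argument: propagating strict positive-definiteness from pure tensors to general $\eta$, uniformly over all $w \in W_+$, while keeping track of the interplay between the Pauli-matrix identities, the signature of the quadratic form on $W$, and the Clifford transpose.
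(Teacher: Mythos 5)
Your algebraic setup is correct and matches the paper: the key identity $\langle y^T\rangle_\# = \langle y\rangle_\#^\dagger$, the deduction of symmetry from $(x^Tx')^T=(x')^Tx$ and $(x^Twx')^T=(x')^Twx$, and the target-space checks are all the paper's own steps. Your bijectivity argument is a genuine improvement over the paper's: where the paper simply asserts ``by a linear algebra computer calculation the map is an isomorphism'' with both sides of rank $324$, you give a conceptual rank count. The observation that self-adjointness of Clifford multiplication forces $b(xa,x'a')=b(a,x^Tx'a')$, reducing $b$ to a map $\phi:A\otimes\Cl^\odd\otimes A\to\RR$ with $\phi(a,y,a')=\phi(a',y^T,a)$, and the trace computation showing the fixed-point space of the involution has dimension $\tfrac{1}{2}(648+0)=324$, is a nice hands-on alternative.

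However, the positivity argument has a genuine gap, and it is precisely the one you flag. Knowing that $\langle x^T w x\rangle_\#$ is positive semidefinite in $\End_\C(\C^2)$ for each fixed $x$ controls only the diagonal terms $b_h(x_i a_i, wx_i a_i)$ in the expansion of $b_h(\eta,w\eta)$ for $\eta=\sum_i x_i a_i$; it says nothing about the cross terms with $i\neq j$, and ``positive definiteness of $h$ on $\C^2\otimes A$'' alone does not rescue this. What is actually needed is a factorization of the two-variable form: the paper shows that $f(x,x')=\langle x^T\theta_0 x'\rangle_\#$ equals $\sum_B \overline{Bx}\otimes_\C Bx'$ for a finite family $B\in\Hom_\R(\Cl_{13},\C^2)$ with trivial common kernel. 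This turns $b_h(\eta,w\eta)$ into $\sum_B h(\zeta_B,\zeta_B)$ with $\zeta_B=\sum_i Bx_i\otimes a_i$, which is manifestly $\geq 0$ and vanishes only when $\eta=0$. Diagonal positive semidefiniteness of $f(x,x)$ does not by itself imply such a factorization, so your argument does not close. Also, the reduction to $w=\theta_0$ is done in the paper by an $\mathrm{SL}(2,\C)$-equivariance argument, not by a pointwise ``Pauli characterization'' of $W_+$; your phrasing there conflates positivity of $\langle w\rangle_\#$ (true for $w\in W_+$) with positivity of $\langle x^Twx\rangle_\#$ for general $x$, which requires a separate verification that you only spot-check on basis vectors.
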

\begin{proof}
  Note that $\langle x^T \rangle_\#$ is the conjugate transpose
  of $\langle x \rangle_\#$ for all $x \in \Cl$.
  Therefore $(x^Tx')^T = (x')^Tx$ and $(x^T w x')^T = (x')^T w^T x = (x')^T w x$
  imply the symmetry of $b_h$ and $b_h(-,w-)$ respectively.
  By a linear algebra computer calculation the map is an isomorphism,
  in particular the space of $b$
  and the space $\Herm(\C^2 \otimes A)$ have equal rank $324 = 18^2$.
  We sketch how positivity is proved.
  By $\textnormal{SL}(\C^2)$-symmetry, it suffices to check positivity for $w = \theta_0$.
  It suffices to check that 
  $f: \Cl_{13} \times \Cl_{13} \to \C^2 \otimes_\C \C^2$,
  $(x,x') \mapsto \langle x^T \theta_0 x' \rangle_\#$
  is of the form
  $(x,x') \mapsto \sum_B \overline{B x} \otimes_\C B x'$
  for a finite set of
  $B \in \Hom_\R(\Cl_{13},\C^2)$ whose
  common kernel vanishes.
  Since $f$ annihilates (odd,even),
  consider (even,even) only, (odd,odd) is similar.
  Parametrize $u: \R^4 \oplus \R^4 \to \Cl_{13}^\even$,
  $v\oplus w \mapsto
  w_0 + v_1 \theta_0\theta_1
  + v_2 \theta_0 \theta_2 + v_3 \theta_0\theta_3
  + w_1 \theta_2\theta_3 + w_2\theta_3\theta_1 + w_3\theta_1\theta_2
  - v_0 \theta_0\theta_1\theta_2\theta_3$.
  A calculation shows that $f(u(v \oplus w),u(v' \oplus w'))$
  equals $\sum_e\; \overline{B_e(v-iw)} \otimes_\C B_e(v'-iw')$
  times a positive constant,
  with summation over the 16 elements
  $e= (\pm 1 \pm i,\pm 1\pm i) \in \C^2$,
  and where $B_e \in \Hom_\R(\R^4,\C^2)$, $v \mapsto (iv_0\sigma_0
  + v_1\sigma_1 + v_2\sigma_2 + v_3\sigma_3)^Te$.
\qed\end{proof}

\section{Gauges, usage}

The concept of a gauge in \defref{gauge}
can be applied at both
(i) the linear and formal perturbative nonlinear level
and (ii) the nonlinear level.
At the level (i)
we get a contraction
for a dgLa that, via the machinery of $L_\infty$-homotopy transfer,
is directly applicable at the formal perturbative nonlinear level.
At the level (ii)
we get local-in-time existence and uniqueness for the Einstein 
equations, a standalone alternative
to the traditional approach using the harmonic gauge
of Einstein and, rigorously, Y.~Choquet-Bruhat.
Here we limit ourselves to (i) because it
relates to the homological framework,
and because the same manipulations also yield (ii).

\defref{gauge} is purely algebraic,
whereas symmetric hyperbolicity
is usually defined using explicit
matrix notation as in \secref{shs}.
The following theorem and proof show how they are brought together,
via the anchor map.
Recall $\Hom = \Hom_\RR$.
\begin{theorem}[Linear symmetric hyperbolic system] \label{theorem:lshs}
  Suppose a gauge $(\E_G,B)$ is fixed, see \defref{gauge}.
  Suppose an element $\unk \in \E^1$ is fixed,
  and suppose it is globally hyperbolic in the sense of \defref{angh}.
  For every $k$ define
  \begin{align*}
      L^k: \E^k_G & \to \Hom(\E^k_G,\RR)\\
      \qquad u & \mapsto B^k(-,[\unk,u])
  \end{align*}
Then for every fixed $R \in \Hom(\E^k_G,\RR)$,
the equation $L^k(u)=R$
is a linear symmetric hyperbolic 
PDE for the unknown
$u \in \E^k_G$, when written out in a suitable coordinate system
$M \simeq \R^4$, and relative to a $\RR$-basis for $\E_G^k$.
The map $L^k$ is surjective,
and the kernel of $L^k$ is isomorphic to
restrictions of elements of $\E^k_G$ to $t=0$.
\end{theorem}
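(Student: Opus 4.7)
The plan is to unpack $L^k$ explicitly in a coordinate chart and in a free basis for $\E_G^k$, read off its principal symbol from the anchor $\rho$, and match the result against the gauge axioms (a) and (b) of \defref{gauge} in order to verify symmetric hyperbolicity. Once the operator is identified with a standard symmetric hyperbolic system, surjectivity and the description of $\ker L^k$ follow from classical Cauchy theory.

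Concretely, I would fix an $\RR$-basis $\{e_i\}$ for the finite free module $\E_G^k$ and write $u = \sum_i u_i e_i$ with $u_i \in \RR$. The algebroid Leibniz axiom, together with $\deg \unk = 1$ and $\deg u_i = 0$, gives
\[
  [\unk, u] \;=\; \sum_i \rho(\unk)(u_i)\, e_i \;+\; \sum_i u_i\, [\unk, e_i].
\]
Choosing the coordinates $(t,\xi_1,\xi_2,\xi_3)$ promised by the global hyperbolicity of $\unk$ and setting $w_0 = \rho(\unk)(t)$, $w_i = \rho(\unk)(\xi_i)$ in $W$, the restriction of $\rho(\unk)$ to $\RR$ is the $W$-valued derivation $f \mapsto \sum_\mu w_\mu \partial_\mu f$. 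Pairing against $e_j$ via $B^k$ then yields the matrix form
\[
  L^k(u)(e_j) \;=\; \sum_{\mu,i} M^\mu_{ji}\, \partial_\mu u_i \;+\; \sum_i N_{ji}\, u_i,
\]
with $M^\mu_{ji} = B^k(e_j, w_\mu e_i)$ and $N_{ji} = B^k(e_j,[\unk,e_i])$ in $\RR$, so $L^k$ is a first-order linear differential operator with smooth coefficients.

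Symmetric hyperbolicity is now immediate from the gauge axioms. Axiom (a) says $B^k(-,w-)|_{\E_G^k\times \E_G^k}$ is symmetric for every $w\in W$, so every matrix $M^\mu$ is symmetric. Axiom (b) says the same pairing is positive definite whenever $w\in W_+$, and the globally hyperbolic condition on $\unk$ gives exactly $w_0 + \sum_i n^i w_i \in W_+$ for all $(n^1,n^2,n^3)$ with $(n^1)^2+(n^2)^2+(n^3)^2 \leq 1$; hence $M^0 + \sum_i n^i M^i$ is positive definite on that closed unit ball. This is precisely the matrix-level definition of a linear symmetric hyperbolic system with $t$ as time variable in the sense of \secref{shs}. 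Surjectivity of $L^k$ and the identification of $\ker L^k$ with restrictions of $\E_G^k$ to $\{t=0\}$ then follow from classical existence and uniqueness for the Cauchy problem for symmetric hyperbolic systems on $\R\times \R^3$ (\cite{FrKO,taylor}): for any source $R$ one solves with zero initial data to obtain a preimage, and a homogeneous solution is uniquely determined by, and exists for, any smooth initial datum at $t=0$.

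The only genuinely substantive step is the algebraic matching between the $w$-slot of $B^k$ and the anchor-direction slot of the differential operator: the gauge axioms are engineered so that symmetry holds \emph{for every} $w\in W$ and positivity holds \emph{for every} $w\in W_+$, not merely for a single distinguished direction, which is exactly what is needed so that the four coefficient matrices $M^0,M^1,M^2,M^3$ simultaneously land in the symmetric-hyperbolic regime once global hyperbolicity is assumed. No new analytic input is needed beyond the classical Cauchy theory cited.
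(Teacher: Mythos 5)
Your proof is correct and follows essentially the same route as the paper's: identify the principal symbol of $L^k$ by tracking how the anchor/Leibniz term produces a $W$-valued coefficient of the first-order part, then read off symmetry from gauge axiom (a) and positivity in the $t$-direction from axiom (b) combined with global hyperbolicity, and finally invoke the classical Cauchy theory for linear symmetric hyperbolic systems for surjectivity and the description of the kernel. The only cosmetic difference is that the paper extracts the principal symbol intrinsically via the commutator $J_f(u)=L(fu)-fL(u)$ before picking coordinates, whereas you go directly to a basis and matrix form, but the underlying computation is identical.
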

\begin{proof}
  We suppress the index $k$, and we note
  that the right hand side $R$ is irrelevant for symmetric
  hyperbolicity.
The map $L$ is a first order differential operator,
in the sense that for every $f \in \RR$
the map $J_f(u) = L(fu) - fL(u)$ is $\RR$-linear,
\[
J_f \in \Hom(\E_G,\Hom(\E_G,\RR))
\simeq \Hom(\E_G \otimes \E_G,\RR)
\]
In fact there is
an $a \in W \otimes \Der(\RR)$
with
$[\unk,fu] = a(f) u + f [\unk,u]$
for all $f \in \RR$ and $u \in \E$,
a piece of the anchor map,
so $J_f = B(-,a(f)-)$
with $a(f) \in W$.
\defref{gauge} implies $J_f \in \Hom(S^2 \E_G,\RR)$,
the symmetry condition for a symmetric hyperbolic
equation. For the positivity condition,
use the coordinate system $M \simeq \R^4$
that yields global hyperbolicity in \defref{angh},
with $t \in \RR$ the first coordinate.
Then $a(t) \in W_+$,
and therefore $J_t$ is positive definite by \defref{gauge}.
The surjectivity and kernel
follow from global solvability of linear symmetric hyperbolic equations.
\qed\end{proof}
\begin{theorem}[Contraction]
  With the assumptions of \theoremref{lshs},
  in particular global hyperbolicity,
  the following composition is surjective for every $k$:
  \[
      \oldw:\;\; \E_G^k \inj
      \E^k \xrightarrow{\;\;[\unk,-]\;\;}
      \E^{k+1}
      \surj
      \E^{k+1}/\E_G^{k+1}
    \]
    If in addition $\unk \in \MC(\E)$
    and $d = [\unk,-]$ the associated differential, then
    there is a contraction
    from $(\E,d)$ down to the subcomplex $(\ker \oldw,d|_{\ker \oldw})$.
    A homotopy giving the contraction is given by the composition
  \[
     \E^{k+1} \surj \E^{k+1}/\E_G^{k+1}
     \to \E_G^k \inj \E^k
   \]
   where the middle arrow is any $\R$-linear
   (not $\RR$-linear) right-inverse of $\oldw$.
\end{theorem}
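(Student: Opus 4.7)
The plan is to dispose of surjectivity via the previous theorem, then produce the contraction by decomposing $d$ according to the gauge splitting $\E = \E_G \oplus w\E_G$. For surjectivity, gauge property (c) in \defref{gauge} ensures that $B^k$ vanishes on $\E_G^k \otimes \E_G^{k+1}$ and descends to an $\RR$-linear map $\beta^k: \E^{k+1}/\E_G^{k+1} \to \Hom_\RR(\E_G^k,\RR)$, which is injective by the same property. The splitting \eqref{eq:zzu1} together with the injectivity of left-multiplication by $w$ yields $\E^{k+1}/\E_G^{k+1}\simeq w\E_G^k \simeq \E_G^k$ as finite free $\RR$-modules, so $\beta^k$ is an isomorphism for rank reasons. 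Since the linear symmetric hyperbolic operator of \theoremref{lshs} factors as $L^k = \beta^k\circ \oldw^k$ and $L^k$ is surjective there, so is $\oldw^k$.

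For the contraction, write $\E^k = \E_G^k \oplus w\E_G^{k-1}$ and decompose $d = [\unk,-]$ into $\R$-linear blocks $d_{GG}^k, d_{Gw}^k, d_{wG}^k, d_{ww}^k$ sorted by source and target summand; by construction $\oldw^k = d_{Gw}^k$, now viewed with codomain $w\E_G^k$. Choose any $\R$-linear section $s^k: w\E_G^k \to \E_G^k$ of $\oldw^k$, available by the surjectivity just proved. In these terms, the homotopy $h^{k+1}: \E^{k+1} \to \E^k$ of the theorem is zero on $\E_G^{k+1}$ and equal to $s^k$ on $w\E_G^k$. The side conditions $h^2 = 0$ and $hi = 0$ are then immediate, since the image of $h$ lies in $\E_G$ where $h$ vanishes again, and $\ker\oldw \subset \E_G$.

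Set $P = 1 - dh - hd$. A block computation gives, for $v \in \E_G^k$ and $v' \in w\E_G^{k-1}$,
\[
P(v) = v - s^k\oldw^k(v),\qquad
P(v') = -d_{GG}^{k-1}s^{k-1}(v') - s^k d_{ww}^k(v'),
\]
both lying in $\E_G^k$. Projecting $d^2 = 0$ applied to $\E_G^k$ onto the $w\E_G^{k+1}$ summand of $\E^{k+2}$ yields the key identity
\[
\oldw^{k+1}\circ d_{GG}^k + d_{ww}^{k+1}\circ \oldw^k = 0\qquad\text{on }\E_G^k,
\]
which is the one and only place where $\unk \in \MC(\E)$ enters. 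Combined with $\oldw^k s^k = \mathbbm{1}$, this implies $\oldw^k P = 0$ on both blocks, so $P$ maps into $\ker\oldw$; and $P$ is the identity on $\ker\oldw^k$ because $s^k\oldw^k$ vanishes there. Finally $Pd = dP$ is a formal consequence of $d^2 = 0$, and the remaining side condition $Ph = 0$ reduces to $hdh = h$, verified on the $\E_G^{k+1}$ and $w\E_G^k$ blocks using $\oldw s = \mathbbm{1}$ once more. The main obstacle is the four-block bookkeeping; once the single instance of $d^2 = 0$ that carries the key identity is isolated, everything else is formal linear algebra.
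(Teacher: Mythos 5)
Your proof is correct, and the surjectivity argument is essentially the paper's: the paper notes that every $r$ in the quotient yields a well-defined $R=B^k(-,r)$ and invokes \theoremref{lshs}, which is your factorization $L^k=\beta^k\circ\oldw$ with $\beta^k$ an isomorphism by rank and by property (c). Where you genuinely add value is in the contraction itself. The paper's proof stops after surjectivity and treats the remainder as clear, but there is real content there: one must check that $\ker\oldw$ is a subcomplex and that $P=\mathbbm{1}-dh-hd$ lands in $\ker\oldw$, both of which hinge on the MC hypothesis. Your block decomposition of $d$ with respect to $\E=\E_G\oplus w\E_G$ isolates exactly the right piece of $d^2=0$, namely $\oldw^{k+1}\circ d_{GG}^k+d_{ww}^{k+1}\circ\oldw^k=0$, which handles both points at once; together with $\oldw s=\mathbbm{1}$ this gives $pi=\mathbbm{1}$, $\mathbbm{1}-ip=dh+hd$, and the side conditions $hi=0$, $h^2=0$, and $ph=0$ (the last via $hdh=h$). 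This is a complete and correct verification of the contraction that the paper omits, and I found no gaps. One optional polish: idempotence $P^2=P$, which you did not state, follows formally from $pi=\mathbbm{1}$ once $P$ is shown to land in $\ker\oldw$, so nothing is missing.
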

\begin{proof}
  Every $r \in \E^{k+1}/\E_G^{k+1}$ 
  yields a well-defined $R = B^k(-,r) \in \Hom(\E_G^k,\RR)$,
  so surjectivity follows from \theoremref{lshs}
  and \defref{gauge}.
\qed\end{proof}

\section{The constrained structure of $\Z_2$-graded Clifford modules}\label{sec:cmod}
The Clifford algebra construction \cite{abs,sping} is a functor
from finite-dimensional real inner product spaces
to finite-dimensional unital
associative real algebras with a distinguished subspace.
Let $\Cl_{pq}$ be the real Clifford algebra
with $p$ respectively $q$ generators squaring to $+1$ respectively $-1$.
The generators $e_i$ are understood to satisfy $(e_i)^2 = \pm 1$
and $e_ie_j + e_je_i = 0$ if $i \neq j$.
The distinguished subspace is the span of the $p+q$ generators.
There is a canonical $\Z_2$-grading by declaring that the distinguished
subspace be odd. The Clifford algebra
has a canonical non-decreasing filtration.

In general $\Cl_{pq}$ is not isomorphic to $\Cl_{qp}$ as a real algebra,
but this is inconsequential if one studies $\Z_2$-graded modules;
all Clifford modules in this paper are.
All modules are understood to be finitely generated, unital left modules.
\begin{lemma}[Category of $\Z_2$-graded modules]
  The $\Z_2$-graded algebras $\Cl_{pq}$ and $\Cl_{qp}$ have the same categories of
  $\Z_2$-graded modules.
\end{lemma}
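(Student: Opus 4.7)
The plan is to exhibit a strictly inverse pair of functors between the categories of $\Z_2$-graded modules of $\Cl_{pq}$ and $\Cl_{qp}$ which act as the identity on underlying $\Z_2$-graded real vector spaces, and which modify only how the distinguished generators act, by precomposing with the grading involution. The functor should work in both directions symmetrically, so I only need to describe one direction and then observe that repeating the construction gives the original structure.

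Concretely, given a $\Z_2$-graded $\Cl_{pq}$-module $M$ with grading involution $\varepsilon_M \in \End_{\R}(M)$ (identity on even, $-1$ on odd), let $e_1,\ldots,e_{p+q}$ be the distinguished generators of $\Cl_{pq}$ acting on $M$ and define candidate generators of $\Cl_{qp}$ to act by $e'_i = e_i \, \varepsilon_M \in \End_{\R}(M)$. Since each $e_i$ is odd and $\varepsilon_M$ is even, each $e'_i$ is odd, so the distinguished subspace still lands in odd endomorphisms, as the $\Z_2$-graded structure demands.

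The key verification is algebraic. Because $e_i$ is odd one has $e_i \varepsilon_M = -\varepsilon_M e_i$, and since $\varepsilon_M^2 = 1$ this gives
\[
(e'_i)^2 \;=\; e_i \varepsilon_M e_i \varepsilon_M \;=\; -\,e_i^2\, \varepsilon_M^2 \;=\; -e_i^2,
\]
so generators squaring to $+1$ in $\Cl_{pq}$ now square to $-1$ and vice versa, matching the defining relations of $\Cl_{qp}$. The cross anticommutators
$e'_i e'_j + e'_j e'_i = -(e_i e_j + e_j e_i)\,\varepsilon_M^2 = 0$
for $i \neq j$ follow by the same sign manipulation. These relations extend uniquely to a $\Cl_{qp}$-module structure on $M$. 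A $\Z_2$-graded morphism $\phi : M \to N$ of $\Cl_{pq}$-modules is an $\R$-linear map that preserves the grading (hence commutes with the respective grading involutions) and intertwines the actions of each $e_i$, so it automatically intertwines the actions of each $e'_i = e_i \varepsilon_M$; conversely, intertwining the $e'_i$ while preserving grading recovers intertwining of the $e_i$. Hence morphisms and their composition are preserved on the nose.

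Running the construction a second time replaces $e'_i$ by $e'_i \varepsilon_M = e_i \varepsilon_M^2 = e_i$, recovering the original $\Cl_{pq}$-action, so the functor is strictly its own inverse and the two categories are isomorphic. I do not expect a real obstacle here: the only subtle point is the sign bookkeeping in the computation above, which crucially uses that the distinguished generators are odd, not merely that $\varepsilon_M$ squares to the identity.
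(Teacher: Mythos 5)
Your proof is correct and is essentially identical to the paper's: the grading involution you call $\varepsilon_M$ is exactly the operator $s$ in the paper, and the paper likewise represents the $\Cl_{qp}$-generators as $e_is$ and verifies the same sign relations $s^2=1$, $se_i+e_is=0$. The only cosmetic difference is that the paper remarks in passing that one may as well assume $p\neq q$ and notes the aside that adjoining $s$ as a new generator also proves $\Cl_{p+1,q}\simeq\Cl_{q+1,p}$, neither of which affects the substance.
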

\begin{proof}
  To avoid misconceptions, $p \neq q$.
  Let $e_i$ be the generators of $\Cl_{pq}$
  and $f_i$ the generators of $\Cl_{qp}$.
  Order them such that $(e_i)^2 = 1$ if and only if $(f_i)^2 = -1$.
  Let $M$ be a $\Z_2$-graded module of $\Cl_{pq}$.
  Let $s \in \End(M)$ be equal to $1$ respectively
  $-1$ on the even respectively odd sector of $M$.
  Then $M$ becomes a $\Z_2$-graded module of $\Cl_{qp}$
  by representing $f_i$ as $e_i s$.
  To see this,
  observe that $s^2 = 1$ and, since the $e_i$ are represented as odd elements,
  $s e_i + e_is = 0$.
  We have only discussed the correspondence at the level of objects,
  but it is easily extended to morphisms.
  As a strict aside, by viewing $s$ as a new Clifford generator,
  this proof establishes an isomorphism
  $\Cl_{p+1,q} \simeq \Cl_{q+1,p}$.
\qed\end{proof}

The structure of $\Z_2$-graded Clifford modules
is highly constrained, much more so than for the exterior algebra.
We only consider $\Cl_{13}$.
Its even subalgebra is isomorphic
as just algebras to
$\Cl_{30}$. Let $S \in \Aut_\R(\Cl_{30})$ be the outer real algebra
automorphism
that acts like minus the identity on the three generators of $\Cl_{30}$.
\begin{lemma} \label{lemma:cl30}
  A $\Cl_{30}$-module is free iff it extends
  to a module of the real algebra given by the presentation
  $\langle \Cl_{30}, T \mid T^2 = \mathbbm{1},\,\text{$S(m)=TmT$ for all 
    $m \in \Cl_{30}$}\rangle$
    where the symbol $T$ is a new generator.
\end{lemma}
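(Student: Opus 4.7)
The plan is to prove both directions concretely, using the standard classification of $\Cl_{30}$-modules via $\Cl_{30} \simeq M_2(\C)$ together with the identification of the $S$-fixed subalgebra $\Cl_{30}^{\even}$ with the quaternions $\HQ$. The ``if'' direction is immediate: if $V \simeq \Cl_{30}^k$ is free, I define $T$ diagonally on each free summand by the automorphism $S$ itself, i.e.\ $T(v) = S(v)$ for $v \in \Cl_{30}$. Then $T^2 = S^2 = \mathbbm{1}$ and $T(mv) = S(mv) = S(m)S(v) = S(m)T(v)$, so $TmT = S(m)$. The substance of the lemma is in the other direction.

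For the ``only if'' direction, suppose $V$ is a $\Cl_{30}$-module equipped with $T \in \End_\R(V)$ satisfying $T^2 = \mathbbm{1}$ and $TmT = S(m)$. The relation $Tm = S(m)T$ implies $T$ commutes with the $S$-fixed subalgebra of $\Cl_{30}$, which by direct Clifford computation is $\Cl_{30}^{\even} \simeq \HQ$ (generated over $\R$ by $e_1 e_2, e_2 e_3, e_3 e_1$ with the standard quaternion relations). The pseudoscalar $\omega = e_1 e_2 e_3$ is central in $\Cl_{30}$, invertible with $\omega^2 = -1$, and satisfies $S(\omega) = -\omega$, so $T\omega = -\omega T$.

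I would then decompose $V = V^+ \oplus V^-$ into the $\pm 1$-eigenspaces of $T$ (possible since $T^2 = \mathbbm{1}$). Each $V^\pm$ is a $\HQ$-submodule because $\HQ$ commutes with $T$, hence free of some rank over the division ring $\HQ$. Left multiplication by $\omega$ is a bijection $V^+ \to V^-$ (since $\omega$ is invertible and anticommutes with $T$), and it is $\HQ$-linear because $\omega$ is central in $\Cl_{30}$; so $V^+ \simeq V^-$ as $\HQ$-modules and $V \simeq \HQ^{2p}$ for some $p \geq 0$. To conclude, I compare with the real Artin--Wedderburn classification: every $\Cl_{30} \simeq M_2(\C)$-module is $(\C^2)^m$ for some $m$, and $\C^2 \simeq \HQ$ as $\HQ$-modules (via the standard embedding $\HQ \inj M_2(\C)$ and simplicity of both sides as $\HQ$-modules of real dimension $4$). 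Restricting $V$ to $\HQ$ gives $\HQ^m$, forcing $m = 2p$, and therefore $V \simeq M_2(\C)^p \simeq \Cl_{30}^p$ is free of rank $p$. The main step is the eigenspace decomposition combined with the $\omega$-swap between the eigenspaces; the rest is algebraic bookkeeping around the two identifications $\Cl_{30} \simeq M_2(\C)$ and $\Cl_{30}^{\even} \simeq \HQ$, which are both standard.
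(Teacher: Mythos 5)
Your proof is correct, but organized quite differently from the paper's. The paper's proof proceeds by identifying the abstractly presented algebra $\langle \Cl_{30}, T \mid \ldots\rangle$ with $M_2(\HQ)$ — under $\Cl_{30}\simeq M_2(\C)$ the automorphism $S$ becomes conjugation by the quaternionic matrix $J$, so the extension is $M_2(\C)\oplus M_2(\C)J = M_2(\HQ)$ — and then appeals to the classification of $M_2(\HQ)$-modules as $(\HQ^2)^n$, which restrict to free $M_2(\C)$-modules. You instead never identify the extended algebra: you analyze the module $V$ directly, splitting $V=V^+\oplus V^-$ by the $T$-eigenvalues, observing that $T$ commutes with $\Cl_{30}^{\even}\simeq\HQ$ and anticommutes with the central pseudoscalar $\omega$ (with $\omega^2=-1$, $S(\omega)=-\omega$), so $\omega$ is an $\HQ$-linear isomorphism $V^+\to V^-$, forcing $\dim_\HQ V$ to be even, which by comparison with the $M_2(\C)$-classification forces $V$ to be free. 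In effect you unwind, by hand on $V$, the same mechanism that underlies the $M_2(\HQ)$ classification. The paper's route is shorter once one recognizes the presented algebra; yours is more self-contained and makes the role of $T$ and $\omega$ in the module explicit. Your ``if'' direction (setting $T=S$ componentwise on a free module) is also spelled out, whereas the paper leaves it implicit in the $M_2(\HQ)$ picture. No gaps.
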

\begin{proof}
We have
$\Cl_{30} \simeq M_2(\C)$ as real algebras
by mapping the three generators
$x_1,x_2,x_3 \in \Cl_{30}$
to the three Pauli matrices $\sigma_1, \sigma_2, \sigma_3 \in M_2(\C)$.
The automorphism $S$
corresponds to conjugating elements of $M_2(\C)$
by the quaternionic matrix
$J = \smash{(\begin{smallmatrix} 0 & j \\ -j & 0 \end{smallmatrix})}$,
an involution.
The algebra presented in the lemma is $M_2(\HQ)
= M_2(\C) \oplus M_2(\C)J$.
Modules of $M_2(\HQ)$ are isomorphic to $(\HQ^2)^n$ for some $n$,
with $\HQ^2$ the quaternionic column vectors,
and $\HQ^2 \simeq M_2(\C)$ as $M_2(\C)$-modules, so it is free.
\qed\end{proof}
Let $e_0$, $e_1$, $e_2$, $e_3$ be the generators
of $\Cl_{13}$, in particular $(e_0)^2 = 1$.
  Let $P \in \Aut_\R(\Cl_{13})$ be 
  the algebra automorphism induced by
  $(e_0,e_1,e_2,e_3) \mapsto (e_0,-e_1,-e_2,-e_3)$.
\begin{theorem}[Characterization of free $\Z_2$-graded modules of $\Cl_{13}$] \label{theorem:pf}
  A $\Z_2$-graded $\Cl_{13}$-module is free
  iff it extends to a $\Z_2$-graded module of the real algebra
  presented by
  $\langle \Cl_{13}, T \mid \text{$T$ even},\, T^2 = \mathbbm{1},\,\text{$P(m)=TmT$ for all 
    $m \in \Cl_{13}$}\rangle$
    with $T$ a new symbol.
\end{theorem}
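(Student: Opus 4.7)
\smallskip
\noindent\textbf{Plan.} The strategy is to reduce the statement to Lemma \ref{lemma:cl30} by passing to the even subalgebra $\Cl_{13}^\even$, which is canonically isomorphic to $\Cl_{30}$: setting $x_i = e_0 e_i$ for $i=1,2,3$, a direct check gives $(x_i)^2=\mathbbm{1}$ and $x_ix_j+x_jx_i=0$ for $i\neq j$. Under this identification the automorphism $P$ of $\Cl_{13}$ restricts to the automorphism $S$ of $\Cl_{30}$, since $P(x_i)=P(e_0)P(e_i)=e_0(-e_i)=-x_i$.

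\smallskip
\noindent\textbf{Freeness equivalence.} I first establish that a $\Z_2$-graded $\Cl_{13}$-module $M$ is free if and only if $M^\even$ is free as an (ungraded) $\Cl_{30}$-module. Left-multiplication by $e_0$ is an $\R$-linear isomorphism $M^\even \xrightarrow{\sim} M^\odd$, so any $\Cl_{30}$-basis $v_1,\dots,v_n$ of $M^\even$ lifts to a $\Z_2$-graded $\Cl_{13}$-basis of $M$: each summand $\Cl_{13}v_i = \Cl_{30}v_i \oplus e_0\Cl_{30}v_i$ contributes one even and one odd rank. The converse restriction to $M^\even$ is immediate.

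\smallskip
\noindent\textbf{Forward direction.} Given an extension of $M$ to the presented algebra with $T$ even, $T^2=\mathbbm{1}$, $P(m)=TmT$, restrict to $M^\even$ and to $\Cl_{30}\subset \Cl_{13}^\even$. Since $T$ is even it preserves $M^\even$, squares to the identity there, and satisfies $Tx_iT = P(x_i)=S(x_i)$. Thus $M^\even$ extends as required by Lemma \ref{lemma:cl30}, so it is free over $\Cl_{30}$, and $M$ is free by the equivalence above.

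\smallskip
\noindent\textbf{Reverse direction and main obstacle.} If $M$ is a free $\Z_2$-graded $\Cl_{13}$-module, then $M^\even$ is free over $\Cl_{30}$, so Lemma \ref{lemma:cl30} supplies $T_0\in\End_\R(M^\even)$ with $T_0^2=\mathbbm{1}$ and $T_0 m = S(m)\,T_0$ for all $m\in\Cl_{30}$. I extend to an even endomorphism $T$ of $M$ by $T(e_0 v) = e_0 T_0 v$ for $v\in M^\even$. Then $T$ is even by construction, and $T^2=\mathbbm{1}$ on $M^\odd$ follows from $T^2=\mathbbm{1}$ on $M^\even$ together with the identity $T_0^2=\mathbbm{1}$. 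The only real content is checking the Clifford relations $Tm T = P(m)$ on both $M^\even$ and $M^\odd$ for $m=e_0,e_1,e_2,e_3$; this is the step where the careful sign $S(x_i)=-x_i$ and the identification $M^\odd = e_0 M^\even$ have to mesh. For example, on $v\in M^\even$ and $m=e_i$ one computes $T e_i T v = T(e_i T_0 v) = e_0 T_0(e_0 e_i)T_0 v = e_0 (T_0 x_i T_0) v = -e_0 x_i v = -e_i v = P(e_i)v$, using $T_0 x_i T_0 = S(x_i) = -x_i$. The remaining three combinations (generator $e_0$ on $M^\even$ or $M^\odd$, and $e_i$ on $M^\odd$) are finite case checks of the same flavour. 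This verification is the main, but entirely routine, obstacle; once it is dispatched, $T$ provides the required extension and the theorem follows.
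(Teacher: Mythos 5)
Your proof is correct. For the ``if'' direction (extension $\Rightarrow$ free), you take essentially the same route as the paper: identify $\Cl_{13}^\even$ with $\Cl_{30}$ via $x_i = e_0e_i$, restrict $T$ to $M^\even$, invoke Lemma \ref{lemma:cl30} to get freeness of $M^\even$ over $\Cl_{30}$, and lift. The paper phrases the lift as an explicit isomorphism $M \simeq N \oplus N$ with $\Cl_{30}$ acting diagonally and $e_0$ swapping summands; your ``freeness equivalence'' paragraph is the same observation recast as a statement about bases. One small caveat: the claim that $\Cl_{13}v_i \simeq \Cl_{13}$, i.e.\ that a free $\Cl_{30}$-generator of $M^\even$ has trivial $\Cl_{13}$-annihilator, deserves a one-line justification (decompose $c = c_0 + e_0c_1'$ with $c_0,c_1' \in \Cl_{30}$, use even/odd separation and invertibility of $e_0$); the paper's $N \oplus N$ formulation sidesteps this by working with a concrete model.

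The genuinely new content in your write-up is the ``only if'' direction, which the paper explicitly declines to prove (``We only prove the `if' claim.''). Your construction $T(e_0v) = e_0T_0v$ is the right one, and the relation checks you leave as ``routine'' do go through: e.g.\ on $M^\odd$ with $m = e_i$ one finds $Te_iT(e_0v) = T_0(e_ie_0T_0v) = -T_0x_iT_0v = x_iv = e_0e_iv = P(e_i)(e_0v)$, and $Te_0T = e_0$ on both parities follows directly from $T_0^2 = \mathbbm{1}$. Note that your argument also quietly relies on the ``only if'' direction of Lemma \ref{lemma:cl30} (free $\Cl_{30}$-module $\Rightarrow$ $T_0$ exists), which the paper's proof of that lemma likewise does not spell out; it is immediate from $M_2(\HQ) = M_2(\C) \oplus M_2(\C)J$ and $\HQ^2 \simeq M_2(\C)$ as $M_2(\C)$-modules, but you should be aware you are using it.
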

  Being free means the module is isomorphic
  to a power of $\Cl_{13}$ 
  as a $\Z_2$-graded $\Cl_{13}$-module;
  the isomorphism need not encompass
  the filtrations if the module is filtered.

Beware that $T$ has to be even.
Otherwise the existence of such an operator is trivial
because $P$ is inner, $P(m) = e_0me_0$,
yet not all $\Z_2$-graded modules are free, for instance $\Cl_{13}$ itself
is a direct sum of two proper submodules as a module over itself.

All $\Cl_{13}$-modules in this paper are $\Z_2$-graded and naturally
come with an operator $T$,
and all morphisms respect this,
so \theoremref{pf} is quite useful.
\begin{proof}
We only prove the `if' claim. Let $M$ be the module.
  An algebra isomorphism $\Cl_{30} \to \Cl_{13}^\even$ is defined by $x_i \mapsto e_0e_i$.
  View $N = M^\even$ as a $\Cl_{30}$-module, note that $T(N) \subset N$
  and use \lemmaref{cl30}, so $N$ is free.
  We have $M \simeq N \oplus N$ as $\Z_2$-graded $\Cl_{13}$-modules,
  with the opposite $\Z_2$-grading in the second direct summand,
  where the module structure of $N \oplus N$
  is such that $\Cl_{30}$ acts
  diagonally, and $e_0$ exchanges summands, and recall $(e_0)^2 = 1$.
  Conclude that $N \oplus N \simeq M$ is free.
\qed\end{proof}

Every Clifford module defines, and is defined by, a representation of 
a finite group called the Clifford group.
This allows one to 
bring finite group techniques to bear.
\begin{lemma}[The finite Clifford group]\label{lemma:fcg}
  For a choice of generators $\{ e_i \} \subset \Cl_{pq}$,
  the submonoid generated by $\{\pm 1, e_i\}$
  is a group $F$ of finite order $|F| = 2^{p+q+1}$.
  Each element is $\Z_2$-odd or $\Z_2$-even.
  Every $\Cl_{pq}$-module restricts to a real representation
  of $F$ that represents $-1$ as minus the identity,
  and this is a one-to-one correspondence.
  For every $i$
  there is a unique character $\chi_i: F \to \{\pm 1\}$
  defined by $f e_i = \chi_i(f) e_i f$.
\end{lemma}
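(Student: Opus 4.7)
The key preliminary observation is that any word in the generating set $\{\pm 1, e_i\}$ can be brought to a unique normal form $\pm e_{i_1} e_{i_2} \cdots e_{i_k}$ with strictly increasing indices $i_1 < i_2 < \cdots < i_k$ (taking $k=0$ to mean $\pm 1$). First I would use $e_i e_j = -e_j e_i$ for $i \neq j$ to sort indices, paying for each transposition with a sign, and $e_i^2 = \pm 1$ to eliminate repetitions. A normal form is determined by a subset of $\{1,\ldots,p+q\}$ together with a sign, giving exactly $2 \cdot 2^{p+q} = 2^{p+q+1}$ candidates, and these are pairwise distinct in $\Cl_{pq}$ by the known $\RR$-basis of the Clifford algebra. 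Each such element is invertible in the set, since the reverse product $\pm e_{i_k} \cdots e_{i_1}$ (with appropriate sign) is its inverse, so the submonoid is a group $F$ of the claimed order. Because $\pm 1$ is $\Z_2$-even while each $e_i$ is $\Z_2$-odd, the parity of $\pm e_{i_1} \cdots e_{i_k}$ equals $k \bmod 2$, so every element of $F$ is $\Z_2$-homogeneous.

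Next I would establish the one-to-one correspondence with modules. The forgetful direction is automatic: any $\Cl_{pq}$-module restricts to a real $F$-representation, and $\R$-linearity forces $-1 \in F$ to act as $-\mathbbm{1}$. For the reverse, given a real representation $\rho: F \to \Aut_\R(V)$ with $\rho(-1) = -\mathbbm{1}$, the relations $e_i^2 = \pm 1$ and $e_i e_j = (-1) \cdot e_j e_i$ valid in $F$, combined with $\rho(-1) = -\mathbbm{1}$, yield $\rho(e_i)^2 = \pm \mathbbm{1}$ and $\rho(e_i)\rho(e_j) + \rho(e_j)\rho(e_i) = 0$ for $i \neq j$. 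These are exactly the defining relations of $\Cl_{pq}$ as a real algebra, so $\rho$ extends uniquely to an $\R$-algebra homomorphism $\Cl_{pq} \to \End_\R(V)$. The two constructions are mutually inverse.

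For the characters, given $f \in F$ in normal form $\pm e_{j_1}\cdots e_{j_k}$, each factor $e_{j_r}$ either commutes or anticommutes with $e_i$ depending on whether $j_r = i$ or not, so $f e_i = \chi_i(f) e_i f$ for a unique $\chi_i(f) \in \{\pm 1\}$ (uniqueness because $e_i f \neq 0$ in $\Cl_{pq}$). Multiplicativity follows from the centrality of $\pm 1$ in $F$: from $\chi_i(fg) e_i fg = fg e_i = f \chi_i(g) e_i g = \chi_i(g) f e_i g = \chi_i(g)\chi_i(f) e_i fg$ one reads off $\chi_i(fg) = \chi_i(f)\chi_i(g)$. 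The main obstacle is less an obstacle than bookkeeping: making the normal form argument rigorous so that the counting, the group property, the $\Z_2$-homogeneity, and the existence and uniqueness of $\chi_i$ all follow cleanly from the single combinatorial description of $F$.
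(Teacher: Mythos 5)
The paper omits the proof, so there is nothing in the source to compare against; on its own merits your argument is correct and it is the natural one: sort words into the normal form $\pm e_{i_1}\cdots e_{i_k}$ with increasing indices to count $F$, read off closure, invertibility, and $\Z_2$-homogeneity, observe that a real representation of $F$ sending $-1$ to $-\mathbbm 1$ satisfies exactly the Clifford relations and so extends uniquely (since $F$ spans $\Cl_{pq}$) to an algebra action, and verify multiplicativity of $\chi_i$ by centrality of $\pm 1$. The only point worth spelling out when you write it up is the uniqueness of the extension in the module correspondence (it follows because the normal-form elements of $F$ form an $\R$-basis of $\Cl_{pq}$), which you implicitly use to conclude the two constructions are mutually inverse.
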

\begin{proof}
  Omitted.
\qed\end{proof}
The group $F$ depends on the choice of generators,
but it allows us to define an object that does not,
for $\Cl_{1q}$.
\begin{theorem}[Invariant Clifford average in $\Cl_{1q}$] \label{theorem:ica}
  Define $\pi \in S^2 \Cl_{1q}$ by
  \[
      \pi = \frac{1}{|F|} \sum_{f \in F} \chi_0(f)\, f \otimes f
  \]
  Then $\pi$ is invariant in the sense that
  it is independent of the choice of generators used to define $F$. 
  In the $\Z_2$-graded algebra $S^2 \Cl_{1q}$ we have:
  \begin{itemize}
    \item $\pi^2 = \pi$ and $\pi$ is even.
    \item $\pi(x \otimes 1) = \pi (1 \otimes x)$ for all $x$  in
      the distinguished subspace, $ x \in \R^{1+q}\subset \Cl_{1q}$.
    \item $\pi(1 \otimes e_0)
      = \frac{1}{|F|} \sum_{f \in F} f \otimes e_0 f$
      with $e_0$ the first basis element used to define $F$.
  \end{itemize}
\end{theorem}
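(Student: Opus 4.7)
The plan is to recognize $\pi$ as a product of pairwise commuting idempotents in the algebra $\Cl_{1q}\otimes \Cl_{1q}$. Set $X_i = \delta_i\, e_i\otimes e_i$ with $\delta_i = (e_i)^2 \in \{\pm 1\}$. The Clifford relations $e_ie_j = -e_je_i$ for $i\neq j$ give immediately $X_iX_j=X_jX_i$ and $X_i^2 = 1$. Expanding yields $\prod_{i=0}^q(1+X_i) = \sum_{S\subseteq\{0,\ldots,q\}}\bigl(\prod_{i\in S}\delta_i\bigr)\, e_S\otimes e_S$, and by moving $e_0$ through $e_S$ one checks $\chi_0(e_S) = \prod_{i\in S}\delta_i$, so that
\[
    \pi \;=\; \prod_{i=0}^q \frac{1+X_i}{2}.
\]
From this product form, $\pi^2 = \pi$ since $(1+X_i)^2 = 2(1+X_i)$; and each $X_i$ is $\Z_2$-even in $\Cl\otimes\Cl$ (both tensor factors are odd), so $\pi$ is even. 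This takes care of the first sub-bullet.

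For invariance of $\pi$ under changes of orthonormal basis of $V = \R^{1,q}$: because the $X_i$'s commute, expand $\prod(1+X_i) = \sum_{k} e_k(X_0,\ldots,X_q)$ in elementary symmetric polynomials. Newton's identities express each $e_k$ as a polynomial with rational coefficients in the power sums $p_j = \sum_i X_i^j$; using $X_i^2 = 1$, we get $p_{2m} = q+1$ and $p_{2m+1}=p_1$, so each $e_k$ is in fact a polynomial in $p_1 = \sum_i \delta_i\, e_i\otimes e_i$ alone. This $p_1$ is the image in $\Cl\otimes\Cl$ of the inverse metric on $V$, which is $O(V)$-invariant and independent of the orthonormal basis; hence $\pi$ is too.

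The last two sub-bullets are direct manipulations with $F$. By linearity it suffices to treat $x = e_j$ for the second sub-bullet. Starting from $\pi(e_j\otimes 1) = \frac{1}{|F|}\sum_f \chi_0(f)\, fe_j\otimes f$, substitute the bijection $g = fe_j$ of $F$ and use $\chi_0(e_j)=\delta_j$ (from $e_je_0 = \delta_j e_0 e_j$) together with $\delta_j^2=1$; the sum becomes $\frac{1}{|F|}\sum_g \chi_0(g)\, g\otimes ge_j = \pi(1\otimes e_j)$. For the third sub-bullet, the defining relation $fe_0 = \chi_0(f)\, e_0 f$ combined with $\chi_0(f)^2 = 1$ gives $\chi_0(f)\, f\otimes fe_0 = f\otimes \chi_0(f)\, fe_0 = f\otimes e_0 f$, and summing over $f\in F$ yields $\pi(1\otimes e_0) = \frac{1}{|F|}\sum_f f\otimes e_0 f$.

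The main subtlety is the invariance step: one must verify that the Newton-identities reduction is legitimate inside the ambient non-commutative algebra $\Cl\otimes\Cl$. This works because the $X_i$'s generate a commutative subalgebra with $X_i^2=1$, so the standard symmetric-function manipulations in commuting variables transfer; only $p_1$ carries any basis-dependence, and that dependence is none.
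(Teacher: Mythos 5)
Your proof is correct, and it takes a somewhat different route from the paper's. You recognize $\pi$ as a product $\prod_i \tfrac{1}{2}(1+X_i)$ of commuting idempotents, which makes $\pi^2=\pi$ immediate; the paper instead derives idempotency from the group-averaging identity $\pi(f\otimes f)=\chi_0(f)\pi$ obtained by the change of variables $g\mapsto gf$ in the defining sum. For the middle bullet the paper applies this same identity in the form $\pi(f\otimes f^2)=\chi_0(f)\pi(1\otimes f)$ with $f=e_i$ together with the coincidence $(e_i)^2=\chi_0(e_i)$; your substitution $g=fe_j$ is an equivalent computation. Where your argument is genuinely more complete is the invariance: the paper only records that $\sum_i\chi_0(e_i)\,e_i\otimes e_i$ is invariant as ``the idea'' and omits how this forces invariance of $\pi$, whereas you complete the reduction by expanding $\prod(1+X_i)$ in elementary symmetric polynomials of the commuting $X_i$, converting to power sums via Newton's identities, and using $X_i^2=1$ to collapse every power sum to $p_1$ or the constant $q+1$, so that $\pi$ is a universal rational polynomial in the single invariant $p_1$. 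Two remarks. First, you tacitly use the plain (unsigned) multiplication on $\Cl_{1q}\otimes\Cl_{1q}$; that is indeed what the theorem requires, since with Koszul signs one finds $X_i^2=-1$ and the paper's $\pi(f\otimes f)=\chi_0(f)\pi$ would likewise fail by a sign. Second, your identity $\chi_0(e_S)=\prod_{i\in S}(e_i)^2$ uses the convention that $e_0$ is the generator with $(e_0)^2=+1$, which is the intended reading of $\Cl_{1q}$; this is precisely what makes $\pi$, despite the apparent dependence on $e_0$ through $\chi_0$, a fully $O(1,q)$-invariant object.
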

\begin{proof}
  The proof of invariance is omitted,
  but the idea is that say
  $\sum_i \chi_0(e_i) e_i \otimes e_i
  = e_0 \otimes e_0 - e_1 \otimes e_1
  - \ldots - e_q \otimes e_q$ is invariant.
  By construction $\pi(f \otimes f) = \chi_0(f) \pi$ for all $f \in F$
  which implies $\pi^2 = \pi$.
  Also
  $\pi(f \otimes f^2) = \chi_0(f) \pi(1\otimes f)$.
  Set $f = e_i$ and note that we happen to have
  $(e_i)^2 = \chi_0(e_i)$ and therefore
  $\pi(e_i \otimes 1) = \pi(1 \otimes e_i)$,
  hence $\pi(x \otimes 1) = \pi(1 \otimes x)$ by linearity. The rest is clear.
\qed\end{proof}

\begin{remark}
The Clifford algebra is filtered,
$\Gr \Cl_{13} \simeq \wedge \R^4$ as graded commutative algebras.
The associated graded $\Gr$ is a functor
from filtered $\Z_2$-graded $\Cl_{13}$-modules
to graded $\wedge \R^4$-modules.
One can ask which $\wedge \R^4$-modules and morphisms are in the image of the $\Gr$-functor,
and which $\wedge \R^4$-modules
are the associated gradeds
of $\Cl_{13}$-modules
that as unfiltered modules are free as in \theoremref{pf}.
A necessary condition
is that the real dimension has to be a multiple of $\dim_\R \Cl_{13} = 16$.
Though free $\wedge \R^4$-modules are in the image, some non-free modules are too.
\end{remark}
\section{Symmetric hyperbolic systems} \label{sec:shs}

The theorem of Picard-Lindel\"of gives local
existence and uniqueness for ODE.
There is a similar theorem for a class of
PDE called quasilinear symmetric hyperbolic systems.
We only discuss local control and hence use germs;
global control
requires a more problem specific analysis,
just as it does for ODE.

Let $x^\mu$ and $\p_\mu$ be the standard coordinates and partial derivatives
on $\R^n$.
Denote by
$\Hermitian_k \subset \C^{k \times k}$
the real subspace of Hermitian matrices.
\begin{theorem}[Local existence and uniqueness] \label{theorem:shs}
  Suppose
  $A^\mu \in C^{\infty}(\R^n \times \C^k, \Hermitian_k)$
  for $\mu = 0,\ldots,n-1$
  and
  $b \in C^{\infty}(\R^n \times \C^k, \C^k)$.
Suppose $A^0(0,0)$ is positive definite.
Then there exists a unique $u \in C^{\infty}_\textnormal{germs at $0$}(\R^n,\C^k)$
such that, as germs at $x=0$,
\[
  \left\{
    \begin{array}{l}
      \tsum_\mu A^\mu(x,u(x))\,(\p_\mu u)(x) = b(x,u(x))\\
      \rule{0pt}{13pt} 
      u|_{x^0=0} = 0
    \end{array}
    \right.
\]
\end{theorem}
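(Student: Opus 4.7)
The plan is to establish local existence and uniqueness by the classical Friedrichs energy method combined with a Picard iteration to handle the quasilinearity. Only germs at $0$ are needed, so I may freely shrink the neighborhood at every step. By continuity and the positivity of $A^0(0,0)$, there is a neighborhood of $0 \in \R^n\times\C^k$ on which $A^0 \geq c\mathbbm{1}$ for some $c>0$. After a smooth cutoff I may replace $A^\mu$ and $b$ by functions that are globally smooth in $x$, compactly supported in $x$, equal to $A^\mu(0,0)$ and $b(0,0)$ outside a bounded set, and that still satisfy $A^0\geq \tfrac{c}{2}\mathbbm{1}$ everywhere; this modification does not affect the germ.

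\textbf{Iteration and energy identity.} Set $u_0 \equiv 0$ and let $u_{n+1}$ be the unique smooth solution of the linear symmetric hyperbolic system
\[
  \tsum_\mu A^\mu(x,u_n(x))\,\p_\mu u_{n+1}(x) = b(x,u_n(x)),
  \qquad u_{n+1}|_{x^0=0}=0,
\]
on a strip $\{0\le x^0\le T_n\}$. The linear theory rests on Friedrichs' basic energy identity, valid because each $A^\mu$ is Hermitian:
\[
  \tsum_\mu \p_\mu\bigl(\overline{u}{}^T A^\mu u\bigr)
   = \overline{u}{}^T\bigl(\tsum_\mu \p_\mu A^\mu\bigr) u
     + 2\RE\bigl(\overline{u}{}^T\tsum_\mu A^\mu \p_\mu u\bigr),
\]
whose integration over slabs, combined with positivity of $A^0$, yields $L^2$-bounds for $u_{n+1}$ in terms of the right-hand side and the coefficients.

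\textbf{Uniform $H^s$ estimate, convergence, uniqueness.} Fix $s > n/2+1$. Differentiating the equation $|\alpha|\le s$ times in the $x$-variables, applying the energy identity to $\p^\alpha u_{n+1}$, and controlling the commutators $[A^\mu(x,u_n),\p^\alpha]\,\p_\mu u_{n+1}$ by Moser-type Sobolev product and composition inequalities, one obtains a differential inequality
\[
  \tfrac{\dd}{\dd x^0}\|u_{n+1}(x^0,\cdot)\|_{H^s}^2
   \leq C\bigl(\|u_n\|_{H^s}\bigr)\bigl(1+\|u_n\|_{H^s}^2+\|u_{n+1}\|_{H^s}^2\bigr).
\]
Standard Gr\"onwall and a continuity argument then yield $T>0$ and $R>0$, \emph{independent of $n$}, with $\|u_n\|_{C([0,T],H^s)}\le R$. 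A parallel estimate for $u_{n+1}-u_n$ in the weaker norm $H^{s-1}$ gives a contraction, so $u_n \to u$ in $C([0,T],H^{s-1})$; the uniform $H^s$ bound upgrades the limit to $u \in C([0,T],H^s)\cap C^1([0,T],H^{s-1})$, and $u$ solves the quasilinear system. Uniqueness follows by applying the same energy estimate to $u-\tilde u$ and invoking Gr\"onwall. Smoothness of the germ is obtained by bootstrapping the argument with arbitrarily large $s$ and reading temporal regularity off the equation; Sobolev embedding then produces a $C^\infty$ germ at $0$.

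\textbf{Main obstacle.} The analytic heart of the proof is the closed $H^s$ energy estimate, and in particular the commutator bound
\[
  \bigl\|[A^\mu(x,u_n),\p^\alpha]\,\p_\mu u_{n+1}\bigr\|_{L^2}
   \;\leq\; C\bigl(\|u_n\|_{H^s}\bigr)\bigl(1+\|u_{n+1}\|_{H^s}\bigr),
\]
which requires the Kato-Ponce product inequality and Moser's nonlinear composition estimate $\|F(u)-F(0)\|_{H^s}\le C(\|u\|_{H^s})$. This is what forces the Sobolev threshold $s>n/2+1$, and it is the only step of the argument that is not essentially formal.
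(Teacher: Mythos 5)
Your proposal is the classical Friedrichs energy-method proof (cutoff, Picard iteration on the linearized symmetric hyperbolic system, $H^s$ energy estimates with $s>n/2+1$ via Moser/Kato--Ponce commutator bounds, contraction in $H^{s-1}$, bootstrap), and the energy identity you integrate is precisely the divergence expression $\sum_\mu \p_\mu(u^*A^\mu u)$ that the paper singles out before deferring to Friedrichs and Taylor. This matches the paper's (omitted, cited) proof; the only cosmetic point is that you construct the germ on $\{0\le x^0\le T\}$ and should note the identical backward-in-time construction for $x^0<0$.
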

\begin{proof}
  Omitted, see \cite{FrKO,taylor}.
  Briefly, one derives a-priori energy estimates by
  applying the divergence theorem
  to $\smash{\tsum_\mu \p_\mu(u^\ast A^\mu(x,u) u)}$
  and higher derivative expressions.
\qed\end{proof}

Beware that even if $A^0 = \mathbbm{1}$,
  the claim fails if the $A^{\mu}$ are not in
  $\Hermitian_k$, see Lewy's example.
We have assumed that $A^\mu$ and $b$ are smooth
and everywhere defined,
that $u$ satisfies trivial initial conditions at $x^0=0$, and so forth.
This simplified statement implies more general statements,
say by changing coordinates in $x$ and $u$.
ODE correspond to $n=1$.
An interesting example related to Maxwell's equations is
$\tsum_\mu A^\mu\p_\mu = \p_0 + i\curl$ with $n=4$, $k=3$.
Unlike parabolic equations,
symmetric hyperbolic systems enjoy
finite speed of propagation.

\section{Elements of Maurer-Cartan perturbation theory} \label{sec:mcpt}

See Gerstenhaber \cite{Gerstenhaber}.
We describe the unobstructed case, for
any gLa free over $\R[[s]]$.
Here $s$ is a symbol,
analogous statements hold for several symbols.

\begin{definition}[gLa free over ${\R[[s]]}$] \label{def:gfo}
  We say that $\pxg$ is a gLa free over $\R[[s]]$ if
  $\pxg$ is a gLa over $\R[[s]]$
  and
  there is a graded $\R$-vector space $\amcb$
  and an isomorphism of graded $\R[[s]]$-modules $\pxg \simeq \amcb[[s]]$.
  The induced $\pxg/s\pxg \simeq \amcb$ makes $\amcb$ a real gLa.
\end{definition}

The bracket on $\pxg$ is the $\R[[s]]$-bilinear extension
of a map $\amcb \times \amcb \to \amcb[[s]]$,
not necessarily
$\amcb \times \amcb \to \amcb$.
Informally, the bracket can itself insert powers of $s$.

An example is when
$\pxg$ is the Rees algebra of a filtration of a real gLa, that is,
$\gx$ is a real gLa with a non-decreasing gLa-filtration
$(F_p\gx)_{p \geq 0}$ 
with $F_p \gx = \gx$ for almost all $p$,
and $\pxg = \{ \sum_p x_p s^p \in \gx[[s]] \mid x_p \in F_p \gx \}$.
Then $\pxg/s\pxg$ is the associated graded gLa.

\begin{theorem}[The unobstructed case]
  Suppose $\pxg$ is a gLa free over $\R[[s]]$.
  Suppose $x_0 \in \MC(\gsg)$. Define
  the differential $d = [x_0,-] \in \End^1(\gsg)$ and set
  \[
      \MC_{x_0}(\pxg) = \{
      x \in \MC(\pxg) \mid x = x_0 \bmod s\pxg^1 \}
  \]
Write $H^k = H^k(d)$
for the $k$-th homology.
Suppose $H^2 = 0$ (`unobstructed'). Then:
\begin{itemize}
  \item There exists a map
    $\phi : H^1 \to \smash{\MC_{x_0}(\pxg)}$
of the form
$\phi(\yy) = \smash{x_0 + \tsum_{k \geq 1} s^k \phi_k(\yy^{\otimes k})}$
where $\smash{\phi_k \in \Hom_{\R}((H^1)^{\otimes k},\pxg^1)}$
and $\phi_1(\yy) \bmod s \pxg^1$ is a representative of $\yy \in H^1$.
\item
Every such $\phi$ extends, by $\R[[s]]$-multilinear
extension of $\phi_k$,
to 
$H^1[[s]] \to \MC_{x_0}(\pxg)$, and this map induces a bijection
onto `the formal moduli space at $x_0$':
\[
  H^1[[s]] \to
  \frac{\MC_{x_0}(\pxg)}{\exp(s \pxg^0)}
  \]
  \end{itemize}
\end{theorem}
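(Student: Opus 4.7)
The plan is the classical order-by-order obstruction argument. First I would fix auxiliary $\R$-linear data on the complex $(\amcb, d)$: an $\R$-linear section $\iota: H^1 \to \ker(d) \cap \amcb^1$ and a contracting homotopy $h: \amcb^2 \to \amcb^1$ satisfying $d(h(b)) = b$ whenever $b$ is $d$-exact; such $h$ exists because the hypothesis $H^2=0$ forces $d: \amcb^1 \to \amcb^2$ to have image equal to all of $\ker(d|_{\amcb^2})$. For $\yy \in H^1$, set $\phi_1(\yy) = \iota(\yy)$. Recursively, suppose $\phi_1, \ldots, \phi_{k-1}$ have been constructed so that $x^{(k-1)} := x_0 + \sum_{i=1}^{k-1} s^i \phi_i(\yy^{\otimes i})$ satisfies $[x^{(k-1)}, x^{(k-1)}] \in s^k \pxg^2$. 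Writing the leading coefficient as $[x^{(k-1)}, x^{(k-1)}] \equiv s^k R_k(\yy) \bmod s^{k+1}\pxg$ with $R_k(\yy) \in \amcb^2$, I would define $\phi_k(\yy^{\otimes k}) = -\tfrac{1}{2} h R_k(\yy)$; symmetrizing in the tensor factors makes this a symmetric $k$-linear map. Extension to $H^1[[s]]$ by $\R[[s]]$-multilinearity is straightforward because only finitely many terms contribute at each power of $s$.

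The heart of the induction is the claim that $R_k$ is $d$-closed. By the graded Jacobi identity for odd-degree elements, $[y,[y,y]]=0$ for any $y \in \pxg^1$, so $[x^{(k-1)}, [x^{(k-1)}, x^{(k-1)}]] = 0$. Splitting $x^{(k-1)} = x_0 + (x^{(k-1)}-x_0)$ with $x^{(k-1)}-x_0 \in s\pxg^1$ gives $[x_0, [x^{(k-1)}, x^{(k-1)}]] = -[x^{(k-1)}-x_0, [x^{(k-1)}, x^{(k-1)}]] \in s^{k+1}\pxg$. Reading off the coefficient of $s^k$ on the left, and using that $[x_0, R_k] \equiv dR_k \bmod s\pxg$ (because $d$ is by definition the $s^0$-component of $[x_0,-]$ in $\pxg$), yields $dR_k = 0$. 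Hence $R_k \in d(\amcb^1)$, $\phi_k$ is defined, and the induction closes, proving the first bullet.

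For the bijection in the second bullet I would run a parallel induction on gauges. Given any $x \in \MC_{x_0}(\pxg)$ with expansion $x = x_0 + \sum_k s^k a_k$, one constructs $\xi = \sum_k s^k \xi_k \in \pxg^0$ order by order so that the transform of $x$ by $\exp(s\xi)$ equals $\phi(\yy)$ for a unique $\yy \in H^1[[s]]$: at each order the freedom in $\xi_k$ shifts the order-$s^{k+1}$ piece of $x$ by $d\xi_k$ modulo lower-order corrections, which lets me project onto the fixed complement of $\ker d$ and simultaneously read off the unique class $\yy_k \in H^1$. Injectivity of the induced map $H^1[[s]] \to \MC_{x_0}(\pxg)/\exp(s\pxg^0)$ is the dual linearized statement that two gauge-equivalent normal forms have equal $H^1$-classes, checked order by order by the same splitting. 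The hardest step throughout is the closedness of $R_k$, made subtle by the fact that the bracket on $\pxg$ is only $\R[[s]]$-bilinear and may itself insert extra powers of $s$ relative to the induced bracket on $\amcb$; tracking which $s$-orders come from the explicit prefactors in $x^{(k-1)}$ versus those hidden inside bracket values requires care, but the Jacobi-based argument above goes through cleanly because the identification $[x_0,-] \equiv d \bmod s$ is applied at a single $s$-order at a time.
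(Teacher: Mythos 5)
Your argument is the same order-by-order obstruction scheme the paper uses, including the key step of proving $dR_k=0$ via the graded Jacobi identity $[x^{(k-1)},[x^{(k-1)},x^{(k-1)}]]=0$ and the observation that $[x_0,-]$ reduces to $d$ modulo $s$. The closedness argument you give is correct and matches the paper's inductive claims $A_K,B_K,C_K$ once the indices are aligned.

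There is, however, a genuine gap in how you extract the maps $\phi_k$. You define $\phi_k(\yy^{\otimes k})=-\tfrac{1}{2}hR_k(\yy)$ and assert that symmetrization makes this a $k$-linear map. But $R_k(\yy)$ is not homogeneous of degree $k$ in $\yy$: precisely because the bracket on $\pxg$ is only $\R[[s]]$-bilinear and may insert extra powers of $s$ (as \defref{gfo} expressly allows), the coefficient of $s^k$ in $[x^{(k-1)},x^{(k-1)}]$ receives contributions from $[s^i\phi_i(\yy^{\otimes i}),s^j\phi_j(\yy^{\otimes j})]$ with $i+j<k$ whenever the bracket itself produces $s^{k-i-j}$-terms, and even from $[x_0,x_0]$. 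So $R_k$ is a polynomial in $\yy$ of degrees $0$ through $k$. Symmetrizing only extracts the degree-$k$ part and silently discards the rest, which breaks the inductive identity $dhR_k=R_k$ that you need to cancel the obstruction. The paper sidesteps this by first constructing, for each fixed $\xi\in H^1$, coefficients $c_K\in\pxg^1$ that are \emph{not} homogeneous in $\xi$, and only at the end noting that one can reorganize the resulting map $\xi\mapsto\sum_K s^Kc_K$ to collect its degree-$j$ parts into honest multilinear $\phi_j\in\Hom_\R((H^1)^{\otimes j},\pxg^1)$; the crucial enabling point is that the codomain of each $\phi_j$ is $\pxg^1$ rather than $\amcb^1$, so lower-degree contributions appearing at higher $s$-orders can be absorbed into earlier $\phi_j$'s. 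Your proof as written conflates these two steps, so you would need to either adopt that two-stage presentation or else inductively allow $\phi_j$ to be corrected at higher $s$-orders as $k$ increases.
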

\begin{proof}
  Freeness
  is used whenever we invoke the isomorphism
  $1/s^K : s^K \pxg \to \pxg$.
  By $H^2 = 0$ there is an
  $h : \amcb^2 \to \amcb^1$ with $dh|_{\amcb^2 \cap \ker d} = \mathbbm{1}$.
  Let $i: \amcb \inj \pxg$ and
  $p: \pxg \surj \amcb$ be the canonical maps,
$pi = \mathbbm{1}$.
Let $r: H^1 \to \amcb^1$ choose representatives.
For each $\xi \in H^1$
we construct $c_k \in \pxg^1$ such that
$e_K \in s^{K+1} \pxg^2$ for all $K$,
where by definition
$e_K = [\Xi_{\leq K},\Xi_{\leq K}]$
and
$\Xi_{\leq K} = \sum_{k \leq K} s^k c_k$.
Set $c_0 = ix_0$,
set $c_1 = -\tfrac{1}{2} ihp(e_0/s) + ir\xi$,
and thereafter set
$c_{K+1} = -\tfrac{1}{2}ihp(e_K/s^{K+1})$.
We show by induction on $K$:
\[
  A_K:\;\; e_K \in s^{K+1} \pxg^2
  \qquad
  B_K:\;\; dp(e_K/s^{K+1}) = 0
  \qquad
  C_K:\;\; dpc_{K+1} = -\tfrac{1}{2} p(e_K/s^{K+1})
\]
  Here, $A_0$ holds by $x_0 \in \MC(\amcb)$; and $A_K$ by
  $e_K = e_{K-1} + 2 s^K [c_0, c_K] \mod s^{K+1} \pxg^2$
  and $C_{K-1}$.
  Next, $B_K$ by
  $[c_0,e_K] = [c_0-\Xi_{\leq K},e_K]
    + [\Xi_{\leq K},e_K] \in s^{K+2}\pxg^3$
  where the first term is in $s^{K+2}\pxg^3$ by
  $A_K$, the second is zero by a Jacobi identity.
  Finally $C_K$ holds by $B_K$
  and the definition of $h$; for $C_0$ use
  $dr\xi = 0$.
  This map $\xi \mapsto \sum_{k \geq 0} s^k c_k$
  is a map $\phi$ of the desired kind;
  the $c_k$ are not homogeneous in $\xi$ but
  one can reorganize
  to extract homogeneous $\phi_k(\xi^{\otimes k})$.
  The $\R[[s]]$-multilinear extension of a given $\phi$ is clear.
\qed\end{proof}


\section{Spinor functor}\label{sec:spinors}

By the `spinor functor' we mean a functor from the groupoid of 2-dimensional
complex vector spaces to the groupoid of 4-dimensional real vector
spaces with a conformal inner product of signature ${-}{+}{+}{+}$;
the morphisms are the structure-preserving isomorphisms,
and a conformal inner product is one modulo $\R_+$.
The spinor functor associates to the 2-dimensional complex $V$
the 4-dimensional real subspace\footnote{%
The conjugate $\overline{V}$ is a vector space together with $\C$-antilinear
maps $V \to \overline{V}$ and $\overline{V} \to V$ that are
mutual inverses; it exists and is unique up to isomorphism.
Conjugation on $V \otimes \overline{V}$
is $x\otimes y \mapsto \overline{y} \otimes \overline{x}$.
}
\[
  W_V \subset V \otimes_\C \overline{V}
\]
with a representative $S^2 W_V\to \R$ of the conformal inner product
the restriction of
the canonical
$S^2(V \otimes \overline{V}) \to (\wedge^2 V) \otimes (\wedge^2 \overline{V})
 \simeq \C$, which has the right signature.

Applying this fiberwise yields a corresponding `spinor functor'
\[
  \ggrs \to \ggr
\]
where on the left we have the groupoid of
rank 2 complex vector bundles over a base manifold $\simeq \R^4$;
the morphisms are the isomorphism of vector bundles.
We denote by $V$ the $\CR = \C \otimes \RR$-module of sections of this vector bundle,
by $W_V$ the associated $\RR$-module of rank $4$
with conformal inner product.

\begin{lemma}[Module derivations of $V$]
  Let
  $\DerEnd_\RR(V) \subset \Der(\RR) \oplus \End_{\C}(V)$ be the module derivations of $V$
  as a $\RR$-module (not $\CR$) that are $\C$-linear on $V$.
  Then there are canonical $\RR$-Laoid morphisms:
  \begin{itemize}
    \item
  $\DerEnd_\RR(V) \to \DerEnd_\RR(\oV)$, 
  actually an isomorphism.
\item $\DerEnd_\RR(V) \to \CDerEnd(W_V)$,
  surjective with kernel
  the $\RR$-span of $0 \oplus i\mathbbm{1}$.
  \end{itemize}
\end{lemma}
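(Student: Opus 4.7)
The plan is to construct both morphisms from the canonical extension of a derivation on $V$ to a derivation on $V \otimes_\C \oV$, and then to pin down the kernel and image of the second morphism by a fiberwise computation in a local $\C$-basis of $V$.

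For the first claim, given $\delta = \delta_\RR \oplus \delta_V$, set $\delta_{\oV}(\overline{v}) = \overline{\delta_V(v)}$. Complex conjugation reverses the complex structure, so $\C$-linearity of $\delta_V$ becomes $\C$-linearity of $\delta_{\oV}$ for the opposite structure on $\oV$, and the $\RR$-Leibniz rule transfers verbatim. The assignment is visibly $\RR$-linear, intertwines the commutator brackets, and is compatible with anchors. The analogous construction in the reverse direction, using the canonical $V \simeq \overline{\oV}$, is a two-sided inverse.

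For the second claim, extend $\delta$ to $D = \delta_V \otimes 1 + 1 \otimes \delta_{\oV}$ on $V \otimes_\C \oV$, over the base $\delta_\RR$. Since $D$ commutes with the conjugation $v \otimes \overline{w} \mapsto w \otimes \overline{v}$, it preserves the real subspace $W_V$. For conformal compatibility, compute the induced $D$-action on the line $L = (\wedge^2_\C V) \otimes_\C (\wedge^2_\C \oV)$: in a local $\CR$-basis $e_1,e_2$ of $V$, writing $\delta_V(e_k) = \sum_j t_{jk} e_j$ with $t_{jk} \in \CR$, a direct computation gives $D(\omega \otimes \overline\omega) = 2\RE(\tr T)\,\omega \otimes \overline\omega$ with $\omega = e_1 \wedge e_2$. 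Since the canonical $S^2(V \otimes_\C \oV) \to L$ is $D$-equivariant and restricts to the map $S^2 W_V \to L_\R$ that defines the conformal inner product, the induced action on any representative $\ip{-}{-}$ is rescaling by $-2\RE(\tr T) \in \RR$, which is exactly the $\CDerEnd(W_V)$ condition.

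For the kernel of the second morphism: an element mapping to zero has $\delta_\RR = 0$ (anchors agree), so $\delta_V = T \in \End_\CR(V)$ with $T \otimes 1 + 1 \otimes \overline{T} = 0$ on $V \otimes_\C \oV$. A matrix computation in the basis $e_i \otimes \overline{e_j}$ forces the off-diagonal entries of $T$ to vanish and the diagonal entries to be equal and purely imaginary, giving $T = if\mathbbm{1}$ with $f \in \RR$. Surjectivity then follows by rank count combined with the standard spinor identification: traceless $T \in \End_\CR(V)$ produce the $6$-dimensional $\so(W_V)$ via $\mathfrak{sl}_2(\C)_\R \simeq \so(1,3)$, $T = \mathbbm{1}$ produces the dilation $\sigma_0$, and $\Der(\RR)$ lifts trivially through the canonical $\Der(\RR) \inj \CDerEnd(W_V)$ from \defref{uzt}. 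The main obstacle is invoking the spinor correspondence cleanly; I would point to the explicit $\langle-\rangle_\#$ map of \theoremref{m2h}, which already carries the Pauli-matrix identification needed here.
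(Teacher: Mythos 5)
Your construction of the two morphisms is exactly the paper's: the first map is $\delta \mapsto c\circ\delta\circ c$ (conjugation), the second is $\delta \mapsto \delta\otimes 1 + 1\otimes\delta'$ on $V\otimes_\CR\oV$, restricted to $W_V$. The paper's proof stops at giving these formulas and noting well-definedness; you go further and actually verify the $\CDerEnd(W_V)$ property (via the equivariance of $S^2(V\otimes_\C\oV)\to\wedge^2_\C V\otimes_\C\wedge^2_\C\oV$ and the $2\RE(\tr T)$ computation), work out the kernel by the matrix calculation $T\otimes 1 + 1\otimes\overline T = 0 \Rightarrow T = if\mathbbm{1}$, and close surjectivity by a rank count on the vertical part (rank $8-1=7 = 6 + 1$, matching $\so(W_V)\oplus\RR\sigma_0$) together with the canonical lift of $\Der(\RR)$. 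These are the right checks and they are correct; the only nit is normalization, where $T=\tfrac12\mathbbm{1}$ rather than $T=\mathbbm{1}$ corresponds to $\sigma_0$ under the paper's conventions (see the subsequent lemma where $\sigma_0 = \tfrac12\bigl(\begin{smallmatrix}1&0\\0&1\end{smallmatrix}\bigr)$).
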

\begin{proof}
  The first map is $\delta \mapsto \delta' = c \circ \delta \circ c$
  where $c$ is conjugation,
  and it is the identity on $\Der(\RR)$.
  The second is $\delta \mapsto 
  (x \otimes y \mapsto
  \delta x \otimes y + x \otimes \delta' y)$
  which is well-defined
  with, for once, the tensor products over $\CR$.
\qed\end{proof}
We give an equivalent definition of the spinor functor using a basis.
\begin{lemma}
    If $V = \CR v \oplus \CR w$
    then a conformally orthonormal frame for $W_V$ is
    \begin{align*}
      \theta_0 & = v\overline{v}+w\overline{w} &
      \theta_1 & = v\overline{w}+w\overline{v} & 
      \theta_2 & = iw\overline{v}-iv\overline{w} &
      \theta_3 & = v\overline{v}-w\overline{w}
    \end{align*}
Define
$\sigma_0,\sigma_1,\sigma_2,\sigma_3,\sigma_{23},\sigma_{31},\sigma_{12}
\in \End_{\CR}(V) \cap \DerEnd_\RR(V)$ by
    \begin{align*}
      \sigma_0 & = \tfrac{1}{2} (\begin{smallmatrix}
        1 & 0 \\ 0 & 1
      \end{smallmatrix})
      &
      \sigma_1 & = \tfrac{1}{2} (\begin{smallmatrix}
        0 & 1 \\ 1 & 0
      \end{smallmatrix})
      &
      \sigma_2 & = \tfrac{1}{2} (\begin{smallmatrix}
          0 & -i \\ i & 0
      \end{smallmatrix})
      &
      \sigma_3 & = \tfrac{1}{2} (\begin{smallmatrix}
        1 & 0 \\ 0 & -1
      \end{smallmatrix})\\
      &&
      \sigma_{23} & = \tfrac{1}{2} (\begin{smallmatrix}
        0 & i \\ i & 0
      \end{smallmatrix})
      &
      \sigma_{31} & = \tfrac{1}{2} (\begin{smallmatrix}
        0 & 1 \\ -1 & 0
      \end{smallmatrix})
      &
      \sigma_{12} & = \tfrac{1}{2} (\begin{smallmatrix}
        i & 0 \\ 0 & -i
      \end{smallmatrix})
    \end{align*}
    relative to the basis $v,w$.
    This is consistent, so
    under $\DerEnd_\RR(V) \to \CDerEnd(W_V)$
    these elements map to elements of the same name in \defref{uzt}.
    \end{lemma}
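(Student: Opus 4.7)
The lemma asserts two things: (i) $\theta_0,\ldots,\theta_3$ is a conformally orthonormal frame for $W_V$ of the stated signature; (ii) each matrix-valued $\sigma_\bullet \in \End_{\CR}(V)$, pushed forward under $\DerEnd_\RR(V) \to \CDerEnd(W_V)$, coincides with the identically-named element in \defref{uzt}. Both are direct computations in the given basis, so my plan is simply to organize the bookkeeping.

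For (i), I would first check that each $\theta_i$ is Hermitian, i.e., fixed by the conjugation $c(x \otimes y) = \overline{y} \otimes \overline{x}$. For $\theta_2 = iw\ov - iv\ow$, antilinearity gives $c(iw\ov) = \overline{i}\,c(w\ov) = -i\,(v\ow) = -iv\ow$ and similarly $c(iv\ow) = -iw\ov$, so $c(\theta_2) = -iv\ow + iw\ov = \theta_2$; the other three are immediate. Linear independence over $\RR$ is clear since $\theta_0,\theta_3$ respectively $\theta_1,\theta_2$ span the Hermitian parts of the $\C$-subspaces $\C v\ov + \C w\ow$ and $\C v\ow + \C w\ov$, which together exhaust $W_V$. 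For the inner product, pick $\epsilon \in \wedge^2 V^*$ with $\epsilon(v,w) = 1$ and its conjugate $\bar{\epsilon}$ on $\wedge^2\oV$. Under a fixed trivialization of $\wedge^2 V \otimes \wedge^2 \oV \simeq \C$ the induced symmetric form is $B(x\overline{y}, x'\overline{y'}) = \epsilon(x,x')\bar{\epsilon}(\overline{y},\overline{y'})$. All off-diagonal $B(\theta_i,\theta_j)$ with $i \neq j$ vanish because each summand contains a factor of the form $\epsilon(u,u) = 0$ or its conjugate. The diagonal evaluates to $\pm 2$ with one sign differing from the other three, yielding signature ${-}{+}{+}{+}$ once the overall sign in that trivialization is fixed to agree with the paper's convention.

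For (ii), the pushforward of $\delta \in \DerEnd_\RR(V)$ acts on $V \otimes \oV$ by $x \otimes y \mapsto \delta x \otimes y + x \otimes \delta' y$ with $\delta' = c \circ \delta \circ c$. I would run through the $7 \times 4 = 28$ equations one by one. For instance, $\sigma_1 v = \tfrac{1}{2} w$ and $\sigma_1 w = \tfrac{1}{2} v$ give $\sigma_1' \ov = \tfrac{1}{2} \ow$ and $\sigma_1' \ow = \tfrac{1}{2} \ov$, whence $\sigma_1\theta_0 = \tfrac{1}{2} w\ov + \tfrac{1}{2} v\ow + \tfrac{1}{2} v\ow + \tfrac{1}{2} w\ov = \theta_1$, matching \defref{uzt}; the boosts $\sigma_2,\sigma_3$ and the dilation $\sigma_0$ are analogous. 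The rotation generators $\sigma_{ij}$ have matrices carrying a factor $i$, and the antilinearity identity $c(i\,\cdot) = -i\,c(\cdot)$ produces the correct signs; representative checks such as $\sigma_{23}\theta_0 = 0$, $\sigma_{23}\theta_2 = -\theta_3$, and $\sigma_{23}\theta_3 = \theta_2$ confirm the pattern $\sigma_{ij}(\theta_k) = \delta_{jk}\theta_i - \delta_{ik}\theta_j$.

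Nothing here is conceptually deep; the only care required is in tracking the antilinearity of $c$, the factors of $i$ inside the rotation matrices $\sigma_{ij}$, and the sign convention for the trivialization of $\wedge^2 V \otimes \wedge^2 \oV \simeq \C$ so that the signature comes out ${-}{+}{+}{+}$ rather than its opposite. Once these conventions are pinned down, the lemma is a verification.
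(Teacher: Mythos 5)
The paper's proof is simply ``Omitted,'' so there is nothing to compare against; your direct verification is the natural approach, and the representative computations you carried out (Hermiticity under $x\otimes y\mapsto\overline{y}\otimes\overline{x}$, the diagonal inner-product signs via the trivialization of $\wedge^2 V\otimes\wedge^2\overline{V}$, and the pushforward formula $\delta\mapsto(x\otimes y\mapsto \delta x\otimes y+x\otimes\delta' y)$ with $\delta'=c\circ\delta\circ c$) all check out. The argument is correct.
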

    \begin{proof}
      Omitted.
      \qed\end{proof}
Clearly $\L_V = (\wwv) \otimes \DerEnd_\RR(V)$
is a $\wwv$-gLaoid via base change $\RR \inj \wwv$.
Base change gives
$\wwv$-gLaoid representations
$\L_V \to \DerEnd_{\wwv}(M_k)$ where
\[
  M_k = (\wwv) \otimes (\wedge_{\CR}^k V)
  \simeq (\wedge_{\CR} (V \otimes_\CR \overline{V})) \otimes_{\CR} (\wedge_\CR^k V)
\]
We can use this to construct the ideal $\I \subset \L$.
\begin{lemma}
  Set $N = (\wwv)N^2$ with $N^2
  = \SPAN
  \{(v\overline{x} \wedge_\CR v \overline{y}) \otimes_\CR v \mid v,x,y \in V\}$.
  Then $N \subset M_1$ is an $\L_V$-invariant $\wwv$-submodule,
  hence $M_1/N$ a representation.
  Let $\I_V$ be the kernel of the
  gLaoid representation $\L_V \to \DerEnd_{\wwv}(M_0 \oplus M_1/N \oplus M_2)$.
  Then the image of $\I_V$ under the surjection $\L_V \to \L
  = (\wwv) \otimes \CDerEnd(W_V)$
  is $\I$.
\end{lemma}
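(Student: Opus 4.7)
The plan is to show that $p(\I_V) = \I$, where $p : \L_V \twoheadrightarrow \L$ is the surjective $\wwv$-gLaoid morphism with kernel $K = \wwv \otimes \R(0 \oplus i\mathbbm{1})$ induced from the second map in the preceding lemma.

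I would first verify that $\I_V \cap K = 0$, making $p|_{\I_V}$ injective. The element $0 \oplus i\mathbbm{1}$ has zero derivation component on $\RR$ and maps to zero in $\CDerEnd(W_V)$, so $K$ acts trivially on $M_0 = \wwv$. By the base change formula, $K$ acts on $M_k = \wwv \otimes \wedge_\CR^k V$ as scalar multiplication by $ki$ coming from the $\CR$-structure of $V$; since $N$ is a $\CR$-submodule of $M_1$, the induced action on $M_1/N$ is still scalar multiplication by $i$. Faithfulness of these scalar multiplications forces the intersection to vanish.

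Second, for $\I \subseteq p(\I_V)$, the $\wwv$-linearity of $p$ together with $\I = (\wwv)\I^2$ reduces the task to lifting each generator of $\I^2$ from \eqref{eq:mi2} into $\I_V$. The complex combinations $\sigma_i + i\sigma_{jk}$ lift to Pauli-matrix operators on $V$ that are $\CR$-traceless, so they annihilate the rank-one $\CR$-module $\wedge^2_\CR V$ and hence act trivially on $M_2$. Acting on $b \otimes v \in M_1$ via the base change formula, they produce elements of $\wwv \otimes V$ whose spinor structure matches the generators $(v\overline{x} \wedge v\overline{y}) \otimes v$ of $N^2$, i.e., the $S^3 V \otimes \wedge^2 \overline{V}$-summand of $\wedge^2 W_V \otimes V$; the role of the symmetric traceless matrix $S$ in \eqref{eq:mi2} is precisely to project onto this summand. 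The anchor on $M_0$ vanishes automatically since $\I^2 \subseteq \mx$.

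Third, for the reverse inclusion $p(\I_V) \subseteq \I$, I would descend in stages: $\I_V \subseteq \mx_V$ because the $M_0$-action is the anchor; and then $\I_V \subseteq \wwv \otimes \mathfrak{sl}_\CR(V)$ because the $M_2$-action factors through the trace map $\End_\CR(V) \to \CR$, combined with the first step. Under $p$, the subspace $\wwv \otimes \mathfrak{sl}_\CR(V)$ is identified with $\wwv \otimes \so(W_V)$, which contains $\I$. The remaining task is to match, isotypic component by isotypic component, the kernel of the $M_1/N$-action against the $\so(W_V)$-isotypic description of $\I$: the chiral components $(2,0) \oplus (0,2)$ in degree $2$, $(\tfrac{3}{2},\tfrac{1}{2}) \oplus (\tfrac{1}{2},\tfrac{3}{2})$ in degree $3$, and $(1,0) \oplus (0,1)$ in degree $4$. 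The main obstacle is this last verification, which I would carry out in a fixed spinor basis using the Pauli-matrix formulas from the preceding lemma; it is facilitated by recognizing $N^2$ as the image of the symmetrization $\wedge^2 W_V \otimes V \twoheadrightarrow S^3 V \otimes \wedge^2 \overline{V}$, so that $M_1/N$ retains only the non-fully-symmetric spinor content whose $\so(W_V)$-kernel is exactly the prescribed chiral isotypic decomposition.
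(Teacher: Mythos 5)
The paper's own proof is ``Omitted,'' so there is nothing to compare against directly; what follows is an assessment of your sketch on its merits.

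The structural skeleton is sound. Your Step 1 is correct and the cleanest version of it is the one you gesture at with $M_2$: since the kernel element $0\oplus i\mathbbm{1}$ has zero anchor, it acts on $M_2=\wwv\otimes\wedge_\CR^2V$ by $2i$ times scalar multiplication, which is visibly faithful in $\wwv$, so $\I_V\cap K=0$. Your identification of $N^2$ with the $S^3V\otimes\wedge^2\overline{V}$ summand is also right: complexifying, $\wedge^2W_V\otimes_\RR V$ decomposes as $(\tfrac32,0)\oplus(0,\tfrac32)\oplus(\tfrac12,0)\oplus(0,\tfrac12)\oplus(1,\tfrac12)\oplus(\tfrac12,1)$, each with multiplicity one, and $N^2$ is exactly the $(\tfrac32,0)\oplus(0,\tfrac32)$ piece. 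For Step 2 you can actually skip the explicit Pauli-matrix manipulation: once one knows that the lift of $\I^2$ is vertical and traceless, the action map $\I^2\otimes V\to M_1^2$ is $\so(W_V)$-equivariant, and since $\I^2\otimes V=(\tfrac52,0)\oplus(\tfrac32,0)\oplus(\tfrac12,2)\oplus\text{(conj.)}$ meets $M_1^2$ only in $(\tfrac32,0)\oplus(0,\tfrac32)=N^2$, Schur forces the image into $N^2$ with no calculation. Your phrase about $S$ ``projecting onto this summand'' is slightly off --- $S$ parametrizes the $(2,0)\oplus(0,2)$ isotype of $\wedge^2W\otimes\so(W)$, it is not itself a projector --- but the conclusion is right. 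Combined with verticality (so that the action on $M_1$ is $\wwv$-bilinear), this does give $\I=(\wwv)\I^2\subset p(\I_V)$.

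The genuine gap is in your Step 3, and you acknowledge it but understate how much is left. Schur's lemma alone gives you $(2,0)\oplus(0,2)\subset\I_V^2$, because those isotypes are absent from $\Hom(V,M_1^2/N^2)$. It does \emph{not} give you the reverse inclusion: the remaining isotypes of $\wedge^2W\otimes\mathfrak{sl}_\CR(V)$, namely $(0,0)^2\oplus(1,0)\oplus(0,1)\oplus(1,1)^2$, all reappear in the target with at least equal multiplicity, so isotypic bookkeeping cannot rule out extra kernel there. One must verify, component by component and degree by degree (degree 4 is automatic since $N^4=M_1^4$, degree 3 is similar in spirit), that the intertwiner matrices are injective --- this is the actual computation the lemma rests on, and ``I would carry it out in a fixed spinor basis'' is a statement of intent rather than a proof. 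So the sketch captures the right ideas and reduces the problem correctly, but the core rank verification is still outstanding.
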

\begin{proof}
  Omitted. 
\qed\end{proof}


\newcommand{\href}[2]{#2}

\end{document}